\numberwithin{equation}{section}
\newtheorem{Theorem}{Theorem}[section]
\newtheorem*{Theorem*}{Theorem}
\newtheorem{Corollary}[Theorem]{Corollary}
\newtheorem{Lemma}[Theorem]{Lemma}
\newtheorem{Proposition}[Theorem]{Proposition}
\theoremstyle{definition}
\newtheorem{Definition}[Theorem]{Definition}
\newtheorem{Example}[Theorem]{Example}
\newtheorem{Remark}[Theorem]{Remark}
 \newcolumntype{Y}{>{\centering\arraybackslash}X}
 \newcounter{sarrow}
 \newcommand\longrightsquigarrow[1]{%
 \stepcounter{sarrow}%
 \mathrel{
 \begin{tikzpicture}[baseline={($(current bounding box.south)+(0,-0.5ex)$)}]
 \node[inner sep=.5ex] (\thesarrow) {$\scriptstyle #1$};
 \path[
 draw,<-,decorate,
 decoration={zigzag,amplitude=0.7pt,segment length=1.2mm,pre=lineto,pre length=4pt}
 ]
 (\thesarrow.south east) -- (\thesarrow.south west);
 \end{tikzpicture}
 }%
 }
\newcommand{\thalf}{\tfrac{1}{2}}
\newcommand{\del}{\partial}
\DeclareMathOperator{\Tr}{Tr}
\newcommand{\abs}[1]{\left| #1 \right|}
\newcommand{\norm}[1]{\left\lVert #1 \right\rVert}
\newcommand{\ket}[1]{\left| #1 \right>}
\newcommand{\vev}[1]{\langle #1 \rangle}
\renewcommand{\vec}[1]{\ensuremath{\mathbf{#1}}} % for boldface vectors
\newcommand{\hodge}{\star}
\newcommand{\hash}{\ensuremath\raisebox{-.08em}{\scalebox{1.4}{\#}}}
\DeclareMathOperator{\Ric}{Ric}
\newcommand\Restrict[2]{{\left. \kern-\nulldelimiterspace #1 \right\vert_{#2} }}								 % scaling version
\newcommand\restr{\raisebox{-.2ex}{\ensuremath \vert}} 					 % non-scaling version without subscript
\newcommand{\N}{\mathbb{N}}			% natural numbers N
\newcommand{\Z}{\mathbb{Z}} 		% integers Z
\newcommand{\R}{\mathbb{R}} 		% real numbers R
\newcommand{\C}{\mathbb{C}}			% complex numbers C
\DeclareMathOperator{\ad}{ad}
\DeclareMathOperator{\Ad}{Ad}
\DeclareMathOperator{\Hom}{Hom}
\newcommand{\HW}[1][]{{\mathop{\mathbf{HW}}{\hspace{-0.2em}}_{#1\, }}}
\newcommand{\KW}[1][]{{\mathop{\mathbf{KW}}{\hspace{-0.2em}}_{#1\, }}}
\newcommand{\VW}[1][]{{\mathop{\mathbf{VW}}{\hspace{-0.2em}}_{#1\, }}}
\newcommand{\EBE}{{\mathop{\mathbf{EBE}}}}
\newcommand{\TEBE}[1][]{{\mathop{\mathbf{TEBE}}{}_{#1\, }}}
\newcommand{\Nahm}[1][]{{\mathop{\mathbf{Nahm}}{}_{#1\, }}}
\begin{document}

\allowdisplaybreaks

\newcommand{\arXivNumber}{2412.13285}

\renewcommand{\PaperNumber}{028}

\FirstPageHeading

\ShortArticleName{A Family of Instanton-Invariants for Four-Manifolds}

\ArticleName{A Family of Instanton-Invariants for Four-Manifolds\\ and Their Relation to Khovanov Homology}

\Author{Michael BLEHER}

\AuthorNameForHeading{M.~Bleher}

\Address{Institute for Mathematics, Heidelberg University,\\ Im Neuenheimer Feld 205, Heidelberg, Germany}
\Email{\mail{mbleher@mathi.uni-heidelberg.de}}

\ArticleDates{Received September 30, 2025, in final form March 05, 2026; Published online March 23, 2026}

\Abstract{This article provides a review of the gauge-theoretic approach to Khovanov homology, framed in terms of a generalisation of Witten's original proposal. Concretely, the physical arguments underlying Witten's insights suggest that there is a one-parameter family of Haydys--Witten instanton Floer homology groups \smash{$HF_{\theta}\bigl(W^4\bigr)$} for four-manifolds. At~the~heart of the proposal is a systematic investigation of the dimensional reductions of the Haydys--Witten equations. It is shown that on the five-dimensional cylinder $M^5=\mathbb{R}_s\times W^4$ with nowhere-vanishing vector field $v=\cos\theta \partial_s+\sin\theta w$, the Haydys--Witten equations provide flow equations for the $\theta$-Kapustin--Witten equations on $W^4$. Similar reductions to lower dimensions include the twisted extended Bogomolny equations on three-manifolds and the twisted octonionic Nahm equations on one-manifolds, whose solutions provide natural boundary conditions along the boundary and corners of $W^4$. These reductions determine the indicial roots of the Haydys--Witten and $\theta$-Kapustin--Witten equations with twisted Nahm-pole boundary conditions, which are required to establish elliptic regularity. Motivated by these insights, the groups \smash{$HF_{\theta}\bigl(W^4\bigr)$} are defined in analogy with Yang--Mills instanton Floer theory: solutions of the $\theta$-Kapustin--Witten equations on~$W^4$ modulo Haydys--Witten instantons on the cylinder $\mathbb{R}_s\times W^4$ interpolating between them. The relation to knot invariants observed by Witten arises when the four-manifold is the geometric blow-up $W^4=\smash{\bigl[X^3\times\mathbb{R}^+,K\bigr]}$ along a~knot $K\subset X^3\times{0}$ in its three-dimensional boundary. This yields a~precise restatement of Witten's conjecture as the equality between \smash{$HF^\bullet_{\pi/2}\bigl(\bigl[S^3\times\mathbb{R}^+,K\bigr]\bigr)$} and Khovanov homology $\mathrm{Kh}^\bullet(K)$.}

\Keywords{instanton Floer theory; Khovanov homology; Haydys--Witten equations; Kapus\-tin--Witten equations; Haydys--Witten instantons; Nahm pole boundary conditions}

\Classification{57R58; 53C07; 81T13}

\section{Introduction}

Haydys--Witten instanton Floer theory is a natural five-dimensional analogue of the classical four-dimensional Donaldson--Floer theory -- also known as Yang--Mills instanton Floer theory.
In Donaldson--Floer theory, one constructs topological invariants of three-manifolds $X^3$ from the set of flat connections on a $G$-principal bundle over $X^3$.
One identifies pairs of flat connections that are related by an anti-self-dual connection (Yang--Mills instanton) on $W^4 = \mathbb{R}_s \times X^3$, which interpolates between them along the flow direction $\mathbb{R}_s$.
In the language of physics, Donaldson--Floer theory describes the quantum mechanical ground states of a topological quantum field theory obtained by coupling three-dimensional Chern--Simons theory to a topologically twisted four-dimensional $\mathcal{N}=2$ super Yang--Mills (SYM) theory.
Floer showed that this construction yields topological invariants of $X^3$, known as Floer groups \smash{$HF_\bullet\bigl(X^3\bigr)$}~\cite{Floer1988,Floer1989}.

Haydys--Witten Floer theory applies the same idea one dimension higher.
One considers topologically twisted five-dimensional $\mathcal{N}=2$ SYM theory on a five-manifold $M^5$, coupled to a topological twist of four-dimensional $\mathcal{N}=4$ SYM theory on its boundary.
Classical ground states in this theory are described by $\theta$-Kapustin--Witten solutions on $W^4$, while instanton corrections correspond to Haydys--Witten solutions on $\R_s \times W^4$ that interpolate between them.
In analogy with Donaldson--Floer theory, Haydys--Witten Floer theory then constructs from the Morse--Smale--Witten complex $CF_{\theta}\bigl(W^4\bigr)$ of $\theta$-Kapustin--Witten solutions the homology groups
\begin{align*}
	HF_{\theta}\bigl(W^4\bigr) = H\bigl( CF_{\theta}\bigl(W^4\bigr), {\rm d}_v\bigr) .
\end{align*}
The differential ${\rm d}_v$ counts solutions of the Haydys--Witten equations on the cylinder $\R_s \times W^4$ with respect to a distinguished vector field $v=\cos\theta \del_s + \sin\theta w$, where $w$ is a non-vanishing vector field on $W^4$:
\begin{align*}
	{\rm d}_v [x] = \sum_{\mu(x,y)=1} \hash \mathcal{M}(x,y) [y].
\end{align*}
Since the physical incarnation of these homology groups is as Hilbert space $\mathcal{H}_{\mathrm{BPS}}\bigl(W^4\bigr)$ of a~topologically twisted theory, it is expected that they are topological invariants of $W^4$.

A particularly interesting application of Haydys--Witten Floer theory arises for $W^4 = S^3 \times \R_y^+$ together with a knot embedded in its boundary $K \subset \del W^4 = S^3\times \{0\}$.
In the analytic context of Floer theory, the presence of the knot $K$ is encoded by imposing a particular asymptotic behaviour near the boundary and knot, known as Nahm pole boundary conditions with knot singularities along $K$.
According to an influential conjecture by Witten, the resulting topological invariants provide a gauge-theoretic interpretation of Khovanov homology~\cite{Witten2011}.

While various aspects of Haydys--Witten Floer theory have been investigated through its relation to Khovanov homology (e.g.,~\cite{Gaiotto2012a, Witten2011}), to the best of the author's knowledge the explicit description of a one-parameter family of Haydys--Witten Floer groups $HF_{\theta}\bigl(W^4\bigr)$ for general four-manifolds is new.
This article details how this one-parameter family arises naturally from the $\R_s$-invariant reduction of the Haydys--Witten equations on $\R_s \times W^4$: when the distinguished vector field is of the form $v=\cos\theta \del_s+\sin\theta w$ with \smash{$w \in \Gamma\bigl(TW^4\bigr)$} one obtains the $\theta$-Kapustin--Witten equations on $W^4$.
Put differently, the Haydys--Witten equations serve as the instanton flow equations for the $\theta$-Kapustin--Witten equations.
Moreover, the geometric configuration of~$v$ relative to $\R_s \times W^4$ has further crucial implications.
For instance, the incidence angle of~$v$ at a~boundary component dictates the necessary \emph{twisting} (or tilting) of appropriate Nahm pole boundary conditions and knot singularities.
Consequently, the five-dimensional geometry of the Haydys--Witten equations clarifies the interplay of various boundary conditions and offers insights into the elliptic regularity of the system on polycylindrical manifolds with boundaries and corners.

Haydys--Witten theory on four-manifolds has recently been the focus of independent research by Er, Ong and Tan~\cite{Er2023}.
For instance, they provide a physics-based derivation of the fact that when $\theta=0$ (i.e., $v$ coincides with the instanton flow direction $\del_s$), the Floer chain complex is generated by Vafa--Witten solutions.
Indeed, for compact four-manifolds, our proposed Haydys--Witten Floer groups for $\theta \in (0,\pi)$ are trivial, since the underlying $\theta$-Kapustin--Witten equations admit only trivial solutions (Theorem~\ref{thm:background-Kapustin--Witten-vanishing}, cf.\ Section~\ref{sec:background-HWF-theory-general-manifolds}).
This leaves the $\theta=0$ case, which is equivalent to Vafa--Witten theory~\cite{Vafa1994}, as the primary source of non-trivial Floer homology for compact four-manifolds~\cite{Haydys2015,Ong2022}.
Er, Ong, and Tan also briefly discuss a continuous family of Haydys--Witten theories on four-manifolds of the form $X^3 \times S^1$.
Such product manifolds permit a natural non-vanishing vector field $w$ along the $S^1$-factor and are thus closely related to the constructions discussed here~\cite[Section~7.4]{Er2023}.

While most of the presentation provides a review of previous results and arguments, this article also includes several new contributions:
First, we present a detailed discussion of the dimensional reductions of the Haydys--Witten equations to four, three, and one dimension in~Propositions~\ref{prop:background-dimensional-reduction-KW}--\ref{prop:background-dimensional-reduction-Nahm}.
Second, we establish the elliptic regularity of the Haydys--Witten equations together with the naturally induced boundary conditions on four-manifolds with boundaries and cylindrical ends by calculating their indicial set in Proposition~\ref{prop:background-indicial-set}.
Finally, these insights allow us to propose a tentative definition of a one-parameter family of Haydys--Witten Floer groups for four-manifolds in Definition~\ref{def:Haydys-Witten-Floer-homology}.
Along the way, we provide a thorough review of the underlying results from physics that provide the motivation for defining these as topological invariants, unify the mathematical ingredients into a coherent framework, and highlight necessary analytic details that need to be established to make the theory work.

The article is structured as follows.
We first recall some background from physics, starting with a short overview of {$4d$} $\mathcal{N}=4$ SYM theory with boundaries and line operators in~Section~\ref{sec:background-super-Yang-Mills-theory}.
Then we review topological twists in Section~\ref{sec:background-topological-twist} and specify the twists relevant for Haydys--Witten theory.
Section~\ref{sec:background-BPS-Kh-Homology} explains that the partition function of the four-dimensional theory admits a~categorification in terms of a Hilbert space of BPS states in a $5d$ $\mathcal{N}=2$ SYM theory.
These considerations explain the origin of the Haydys--Witten and Kapustin--Witten equations, and why they are expected to give rise to an interesting Floer theory.
In Section~\ref{sec:background-equations}, we give short individual introductions to a slightly confusing number of differential equations in various dimensions.
It is shown in Section~\ref{sec:background-dimensional-reduction} that these equations can all be viewed as dimensional reductions of the Haydys--Witten equations.
The Kapustin--Witten equations exhibit surprisingly restrictive vanishing results for finite energy solutions and it seems to be an important aspect of the theory to consider field configurations with singular boundary conditions.
This explains the relevance of Nahm pole boundary conditions with knot singularities, which we review in Section~\ref{sec:background-Nahm-pole-boundary-condition}.
The article concludes with a definition of Haydys--Witten Floer theory and an explanation how it captures Witten's proposal in Section~\ref{sec:background-HWF-theory}.

\section{Supersymmetric Yang--Mills theory with boundaries}
\label{sec:background-super-Yang-Mills-theory}

To set the stage and in view of later sections, we start by introducing some notation for general manifolds.
Let $G$ be a Lie group, and let $E \to W^4$ be a principal $G$-bundle with connection~${A\in\mathcal{A}(E)}$ over an oriented Riemannian four-manifold $\bigl(W^4, g\bigr)$.
We denote the Lie algebra of $G$ by $\mathfrak{g}$ and the adjoint bundle ${E \times_{\Ad} \mathfrak{g}}$~by~${\ad E}$.

There are spinor bundles $S^\pm$ associated to the ${\rm SO}(4)$-frame bundle over $W^4$, respectively of positive and negative chirality.
The underlying Weyl spinor representations are complex conjugates \smash{$\overline{S^+} = S^-$}.
Write $V$ for the complexified vector representation of ${\rm SO}(4)$.
A standard construction\footnote{Since the Clifford algebra acts on $\wedge^\bullet L$, the latter carries a representation of $\mathfrak{so}(4)\simeq \wedge^2 V \subset \mathrm{Cl}(V,g)$. Spinor representations are usually \emph{defined by} this representation, see, for example,~\cite{Deligne1999a}.} identifies ${S := S^+ \oplus S^-}$ with ${\wedge^\bullet L = \wedge^{\scriptscriptstyle \mathrm{even}} L \oplus \wedge^{\scriptscriptstyle \mathrm{odd}} L }$ for some choice of maximal totally isotropic subspace $L$ of $(V,g)$.
The Clifford algebra $\mathrm{Cl}(V,g)$, viewed as a ``deformation quantization'' of the exterior algebra of~$V$, acts naturally on $\wedge^\bullet L$.
In particular, since ${V \subset \mathrm{Cl}(V,g)}$, this induces a complex linear map $ \mathrm{cl}\colon V \otimes S^\pm \to S^\mp $ called Clifford multiplication.
Moreover, there is a $\C$-valued inner product $(s,t)$ on $S$, defined by restriction to the top-degree component of the element $s\wedge t \in \wedge^\bullet L$.
The inner product pairs each of $S^\pm$ with itself.
In combination with Clifford multiplication it induces a bilinear map $\Gamma\colon S^+ \otimes S^- \to V$, defined by duality: $g( \Gamma(s,t), v) := (s, \mathrm{cl}(v, t) )$.

The connection induces a covariant exterior derivative ${\rm d}_A$ on $\Omega^\bullet \bigl(W^4, \ad E\bigr)$ and a Dirac operator for spinors, given by composing the covariant derivative with Clifford multiplication
\begin{align*}
	D^A\colon\ \Gamma\bigl(W^4, S^\pm \otimes \ad E\bigr)\ \stackrel{\nabla^A}{\longrightarrow} \ \Gamma\bigl(W^4, T^\ast W^4 \otimes S^\pm \otimes \ad E\bigr) \ \stackrel{\mathrm{cl} \,\circ\, \sharp}{\longrightarrow}\ \Gamma\bigl(W^4, S^\mp \otimes \ad E\bigr).
\end{align*}

Let $\Tr(\cdot)$ denote the trace on $\mathfrak{g}$.
For \smash{$\alpha,\beta \in\Omega^k\bigl(W^4, \ad E\bigr)$}, introduce the density-valued inner product $\langle \alpha , \beta \rangle := \Tr \alpha\wedge\hodge \beta$ with associated norm \smash{$\norm{\alpha}^2 = \langle \alpha, \alpha \rangle$}.
For \smash{$s,t \in \Gamma\bigl(W^4, S^\pm \otimes \ad E\bigr)$}, we similarly write $\langle s, t \rangle := \Tr(s,t)\mu_{W^4}$, where $\mu_{W^4} = \sqrt{g}\, {\rm d}x^1 \cdots {\rm d}x^n$ denotes the volume form.

With that notation in place, we now specify the Lagrangian of $d=4$ $\mathcal{N} = 4$ super Yang--Mills theory.
The field content consists of a connection $A \in \mathcal{A}(E)$, six scalar fields $\phi_i \in \Omega^0\bigl(W^4,\ad E\bigr)$, $i=1,\ldots,6$, and four Weyl spinors ${\psi_a \in \Gamma\bigl(W^4, S^+ \otimes \ad E\bigr)}$, $a=1,\ldots,4$.
The action is the sum of a kinetic and a topological term
\begin{align}
	\label{eq:background-SYM-action}
	S = \frac{1}{g_{\text{YM}^2}} \int_{W^4} \mathcal{L}_{\text{kin}} + \frac{\theta_\mathrm{YM}}{32 \pi^2} \int_{W^4} \Tr F_A \wedge F_A.
\end{align}
For the rest of this section, we assume that $W^4$ is a region in Euclidean space $\R^4$.
In that case, the kinetic Lagrangian is given by the sum of the following two parts, where the first is purely bosonic and the second contains all contributions that involve fermions
\begin{align*}
&		\mathcal{L}_{\text{kin}}^{A,\phi}= \tfrac{1}{2} \norm{F_A}^2 + \sum_{i=1}^6 \norm{ {\rm d}_A \phi_i }^2 + \thalf \sum_{i,j=1}^6 \norm{[\phi_i, \phi_j]}^2, \\
&		\mathcal{L}_{\text{kin}}^{\psi}= \sum_{a=1}^4 \bigl\langle \bar{\psi}_a, D^A \psi_a \bigr\rangle + \sum_{\substack{i=1,\ldots,6 \\ a,b=1,\ldots,4}} C^{iab} \langle \psi_a , [\phi_i, \psi_b] \rangle.
\end{align*}
The coefficients $C^{iab}$ are related to the structure constants of ${\rm SU}(4)_R$, the $R$-symmetry of the~${\mathcal{N}=4}$ super-Poincar\'e algebra explained in more detail below.

The definitions of both the classical and quantum theory rely on a well-defined variational principle for the action.
For this it is necessary to specify boundary conditions.
For ${W^4 = \R^4}$, one typically assumes that fields and their derivatives fall off sufficiently fast at infinity.
In~general, however, admissible boundary conditions are determined by the requirement that any boundary terms that arise in a variation of $S$ vanish.

Under a variation $\delta \psi_a$ of one of the spinor fields, the variation of the action contains a~boundary term \smash{$\int_{\del W^4} \Gamma^\perp\bigl( \bar{\psi}_a, \delta \psi_a\bigr)$}, where $\Gamma^\perp$ denotes the post-composition of $\Gamma$ with projection to the direction perpendicular to the boundary.
This boundary term vanishes as long as any non-zero part of $\psi_a$ is orthogonal, with respect to $\Gamma^\perp(\bar{\cdot}, \cdot)$, to its variation $\delta \psi_a$.
Put differently, admissible boundary conditions for $\psi_a$ are determined by a choice of totally isotropic subspace $\mathfrak{S}_a \subset S^+$ and imposing \smash{$\psi_a\restr_{\del W^4} \in \Gamma\bigl(\del W^4, \mathfrak{S}_a \otimes \ad E\bigr)$}.
A similar analysis for the connection and scalar fields shows that their boundary conditions are in general given by Robin-type conditions, which relate normal derivatives and boundary values.
All in all, admissible boundary conditions are given by configurations that satisfy
\begin{align}
&	(F_A)_{y\mu} + \frac{\theta_\mathrm{YM} g_{\text{YM}}^2}{32 \pi^2} \epsilon_{y\mu\nu\lambda} (F_A)^{\nu \lambda}= 0, \nonumber\\
&	\nabla^A_y \phi_i - \sum_{j=1}^6 c_{ij} \phi_j= 0, \nonumber\\
&	\psi_a\restr_{\del W^4}\in \Gamma\bigl(\del W^4, \mathfrak{S}_a\bigr).
	\label{eq:background-admissible-boundary-conditions}
\end{align}
Note that the boundary conditions for the gauge field are completely fixed by a combination of the $\theta_\mathrm{YM}$-angle and coupling constant $g_{\mathrm{YM}}$.
In contrast, the coefficient functions $c_{ij}$ are generally not restricted, though a generic choice will break invariance under the action of ${\rm SU}(4)_R \simeq {\rm SO}(6)$ that rotates the six scalar fields $\phi_i$ into linear combinations of each other.

For $W^4=\R^4$, the theory is invariant under the action of the $4d$ $\mathcal{N}=4$ super-Poincar\'e algebra.
This is the $\mathbb{Z}/2\mathbb{Z}$-graded Lie algebra $A = A^0 \oplus A^1$ with bosonic and fermionic part given by
\begin{align*}
&	A^0= ( V \rtimes \mathfrak{so}(V) ) \times \mathfrak{su}(4)_R, \qquad	A^1= \bigl(S^+ \otimes \C^4\bigr) \oplus \bigl(S^- \otimes \bigl(\C^4\bigr)^\ast \bigr).
\end{align*}
The inclusion of $\C^4$ in the fermionic part provides $\mathcal{N}=4$ copies of the minimal super-Poincar\'e algebra in four dimensions.
The $\mathfrak{su}(4)_R$ part in the bosonic part is the Lie algebra of $R$-symmetry, which is defined to be any transformation that is represented non-trivially on $A^1$ and commutes with the action of the Lorentz group ${\rm SO}(V)$.
With respect to ${\rm SU}(4)_R$, the four spinors $\psi_a$ are in the defining representation $\C^4$ and the six scalars $\phi_i$ in the six-dimensional vector representation.

The super-Poincar\'e algebra is equipped with a $\mathbb{Z}/2\mathbb{Z}$ graded Lie bracket
\begin{align*}
	[x,y] = (-1)^{\abs{x}\abs{y}+1} [y,x],
\end{align*}
where $\abs{x}$ denotes the degree of homogeneous elements.
As a consequence $A^1$ carries a representation of $A^0$, while on $A^1$ the Lie bracket $[\cdot,\cdot]\colon A^1 \times A^1 \to A^0$ yields an intertwiner of $A^0$-representations.
For the spinorial part of $A^1$ this intertwiner is given by $\Gamma\colon S^+ \times\, S^- \to V$ extended by zero to all of $S$, together with the natural pairing on the \smash{$\mathbb{C}^4 \times \bigl(\mathbb{C}^4\bigr)^\ast$} factor.
It~follows that the anti-commutator of fermionic generators $[Q_1, Q_2]$ is always an element of $V$, corresponding to the common adage that supersymmetry squares to translations.

Consider now $W^4=\R^3 \times \R^+$.
The factor $\R^+ = [0,\infty)$ introduces a spacetime boundary and explicitly breaks translation invariance in the direction perpendicular to the boundary.
Accordingly, only fermionic symmetries in $A^1$ that do not square to a translation in the normal direction~${\R^\perp \subset V}$ can be preserved.
To that end, observe that \smash{$[Q_1,Q_2]\restr_{\R^\perp}$} is non-isotropic on~\smash{$\C^4 \times \bigl(\C^4\bigr)^\ast$} and reduces to $\Gamma^\perp(\cdot, \cdot)$ on $S^+ \times S^-$.
It follows that unbroken fermionic symmetries are elements of a totally isotropic subspace of the form \smash{$\bigl(\mathfrak{S} \otimes \C^4\bigr) \oplus \bigl(\overline{\mathfrak{S}} \otimes \bigl(\C^4\bigr)^\ast \bigr) \subset A^1$}.
As~a~consequence, the remaining supersymmetry algebra can contain at most half of the original fermionic generators.

A generic choice of boundary conditions satisfying~\eqref{eq:background-admissible-boundary-conditions} will not be invariant under the remaining super-Poincar\'e algebra.
A complete classification of ``half-BPS'' boundary conditions, those that are invariant under the remaining half of the fermionic generators, is described in~\mbox{\cite{Gaiotto2009b, Gaiotto2009a}}.
In the following, we only reproduce the result most relevant to us.

It turns out that for any choice of maximal totally isotropic subspace $\mathfrak{S}\subset S^+$, there exists a unique half-BPS boundary condition for $(A, \phi_i, \psi_a)$ that preserves the remaining half of the super-Poincar\'e algebra and full gauge symmetry.
There is a basis of $S^+$ in which the choice of $\mathfrak{S}$ is equivalent to fixing a generator of the form \smash{$Q = \bigl(\begin{smallmatrix} 1 \\ t \end{smallmatrix}\bigr)$} with $t\in\R$ (possibly infinite, interpreted as vanishing of the top component).
Invariance of the boundary values of $\psi_a$ under the action of $Q$ fully determines the choice of what was called \smash{$\mathfrak{S}_a \subset S^+$} earlier.
Namely, for each fermion we must have \smash{$\psi_a = \lambda_a \bigl(\begin{smallmatrix} t \\ 1 \end{smallmatrix}\bigr)$} for some $\lambda_a \in \C$.
Invariance under the remaining Lorentz group~${\rm SO}(3)$, $R$-symmetry ${\rm SO}(3)\times {\rm SO}(3) \subset {\rm SU}(4)_R$, and supersymmetry then also fixes the boundary conditions of the bosonic fields.
\begin{align*}
&	(F_A)_{y\mu} + \frac{t}{1-t^2} \epsilon_{y\mu\nu\lambda} F^{\nu \lambda}= 0 \\
&	\nabla^A_y \phi_i - \frac{t}{1+t^2} \epsilon_{ijk} [\phi_j, \phi_k]= 0 , \qquad i,j,k = 1,2,3, \\
&	\phi_{i+3}= 0 , \qquad i = 1,2,3, \\
&	\psi_a= \lambda_a \otimes \begin{pmatrix} t \\ 1 \end{pmatrix}.
	%\label{eq:background-half-BPS-boundary-conditions}
\end{align*}
By comparison with the admissible boundary conditions of \eqref{eq:background-admissible-boundary-conditions}, it is clear that \smash{$\frac{t}{1-t^2} = \frac{\theta_\mathrm{YM} g_{\text{YM}}^2}{32 \pi^2}$}.
This condition has two roots, so for any value of SYM parameters $(g_\text{YM}, \theta_\mathrm{YM})$ there are two half-BPS boundary conditions that preserve the full gauge symmetry and which differ only by a~reversal of orientation.
Conversely, for any choice of preserved supersymmetries $\mathfrak{S} \subset S^+$, there exists a set of SYM parameters $(g_\text{YM}, \theta_\mathrm{YM})$ for which the equations above determine half-BPS boundary conditions.

\begin{Remark}
The half-BPS equations for the scalar fields $(A, \phi_i)$ form the basis of a set of conditions known as Nahm pole boundary conditions.
These conditions allow the inclusion of~'t Hooft operators along a knot $K\subset \del W^4$, by encoding a monodromy of $A$ and a certain singular behaviour of $\phi_i$ near the boundary.
Since the boundary conditions encode the presence of a~knot $K$, this plays a crucial role in the gauge theoretic approach to Khovanov homology and is discussed in detail in Section~\ref{sec:background-Nahm-pole-boundary-condition}.
\end{Remark}

\section{The Kapustin--Witten twists and localization}
\label{sec:background-topological-twist}

While the topological term $\int_{W^4} \Tr F_A \wedge F_A$ in the action functional~\eqref{eq:background-SYM-action} is manifestly independent of the metric on spacetime and only depends on the topology and smooth structure of the principal bundle $E$, the same is not true for $\mathcal{L}_\text{kin}$.
As a consequence, the partition function and observables of $4d$ $\mathcal{N}=4$ super Yang--Mills theory are generally not topological invariants.
However, in supersymmetric theories it is often possible to restrict to a subsector of the theory that depends only topologically on spacetime by a procedure known as twisting.

Below we briefly recall the twisting procedure and subsequently present the relevant twists of~$4d$~$\mathcal{N}=4$ SYM theory.
Importantly, the half-BPS boundary conditions discussed in Section~\ref{sec:background-super-Yang-Mills-theory} retain enough supersymmetry that the theory still admits a topological twist.
The section concludes with an explanation of how twisting leads to the Kapustin--Witten equations.

{\bf Topological twists of supersymmetric theories.}
Topological twists are a standard tool to extract topological field theories from supersymmetric ones~\cite{Witten1988, Witten1988a, Witten1991}.
Here we closely follow expositions in~\cite{Eager2021, Elliott2013}; for a more thorough introduction see~\cite{Vonk2005}.

First, recall from Noether's theorem that for every continuous symmetry of the action functional there is an associated conserved current $j$.
Applying this to translation symmetry of a~field theory on $\R^n$ results in a conserved current for each element of $\R^n$.
Choosing a basis~$x^\mu$ for~$\R^n$ the components of the associated conserved current are $j_\mu = T_{\mu\nu} {\rm d}x^\nu$.
The 2-tensor $T$ is the (canonical) energy-momentum tensor associated to the field theory and is related to variations of the metric by \smash{$\delta_g S = \int_{\R^4} T_{\mu\nu}\, \delta_{g_{\mu\nu}} \mathcal{L}$}.
The observables of a theory are independent of the metric -- hence, topological invariants in the sense of physics -- if the energy-momentum tensor vanishes.
It follows, in particular, that in topological theories translations must act trivially.

Even if a given field theory generally depends on the metric, it is still possible that certain protected subsectors of the theory are purely topological.
This happens, for example, if there is a~non-anomalous symmetry $Q$ with the following two properties:
\begin{itemize}\itemsep=0pt
	\item $Q$ is nilpotent (more precisely, square-zero): $Q^2 = 0$.
	\item $T$ is $Q$-exact: there exists a functional $V$ such that $T = [Q, V]$.
\end{itemize}
Now recall the general fact that in any quantum theory the expectation value of a symmetry-exact operator $\vev{ [Q,\mathcal{O}] }$ vanishes.
This can be seen, at least formally, by the following argument.
If $Q$ is a non-anomalous symmetry, i.e., the path-integral measure is invariant under transformations induced by $Q$ and $[Q, S] =0$, then the following expression must be independent of the infinitesimal parameter~$\epsilon$:
\begin{align*}
	\int \mathcal{D}\Phi \exp(\epsilon Q) ( \mathcal{O} \exp(-S) )
	= \int \mathcal{D}\Phi \exp(-S) ( \mathcal{O} + \epsilon [Q, \mathcal{O}] ).
\end{align*}
This can only be the case if the second term on the right vanishes, which is exactly the statement $\vev{ [Q, \mathcal{O}] } = 0$.
As a particular consequence we observe that when $Q^2 = 0$, the subsector given by $Q$-cohomology, i.e., the set of $Q$-invariant operators modulo those operators that are invariant for the uninteresting reason of being $Q$-exact, determines a viable quantum field theory in its own right.
Moreover, the fact that the energy-momentum tensor is $Q$-exact implies that the expectation value of any $Q$-closed observable is topologically invariant.
This follows from studying the effect of a continuous variation of the metric on the expectation value of a $Q$-closed operator $\mathcal{O}$:
\begin{align*}
	\delta_g \vev{\mathcal{O}}
	= \int \mathcal{D} \Phi \mathcal{O} \delta_g \exp({\rm i} S)
	= \int \mathcal{D} \Phi \mathcal{O} [Q,V] \exp({\rm i} S)
	= \vev{[Q,\mathcal{O}V]} = 0.
\end{align*}
In summary, if $Q$ satisfies the two properties stated above, $Q$-cohomology provides a topological subsector of the theory.

Since supersymmetric theories are invariant under the super-Poincar\'e algebra $A = A^0 \oplus A^1$, which contains nilpotent symmetries, one can ask if there are $Q\in A^1$ that satisfy the two properties described above.
With regards to the first property, the set of nilpotent fermionic elements of any $\mathbb{Z}/2\mathbb{Z}$-graded algebra forms a projective variety \smash{$Y:= \bigl\{ Q^2 = 0\bigr\} \subset A^1$}, known as the nilpotence variety~\cite{Eager2021}.
The nilpotence variety of the super-Poincar\'e algebra depends only on the spacetime dimension and quantity of supersymmetry $\mathcal{N}$.
In view of the discussion above, every $Q\in Y(d,\mathcal{N})$ gives rise to an associated $Q$-invariant subsector of the theory.

For this to be a topological sector the energy momentum tensor must be $Q$-exact, which it cannot be, because $Q$ is in the spinor representation of the Lorentz group.
This problem can sometimes be circumvented by ``twisting'' the action of the Lorentz algebra $\mathfrak{so}(d)$ in such a way that $Q$ can be reinterpreted as a scalar operator.
To that end note that upon restriction to $Q$-cohomology the symmetry algebra is broken to the stabilizer of the line spanned by $Q$,
\begin{align*}
	I(Q) = \{x \in A \mid [x, Q] \propto Q\}.
\end{align*}
Other elements of $A$ do not preserve the kernel and image of $Q$, and therefore do not act on $Q$-cohomology.
With respect to $I(Q)$, $Q$ is tautologically a scalar (perhaps up to some currently irrelevant $\mathrm{U}(1)$ charges).
Since the Lorentz symmetry and the semi-simple part of $R$-symmetry act non-trivially on $Q$, $I(Q)$ cannot contain generators of the corresponding subalgebras.
However, it can contain combinations of the two, where the action of the Lorentz algebra is undone by the action of $R$-symmetry.
Indeed, $I(Q)$ may contain a bosonic subalgebra $\mathfrak{so}^\prime(d)$ that is isomorphic to the Lorentz algebra and is embedded in $A^0$ as follows:
\begin{align*}
	\mathfrak{so}(d) \stackrel{1\times \alpha}{\to} \mathfrak{so}(d) \times R \subset A^0.
\end{align*}
The map $\alpha$ is (the derivative of) a non-trivial homomorphism from the Lorentz group to the $R$-symmetry group and is commonly known as the twisting map.
Note that for any $Q\in Y$, there might exist several twisting maps and conversely the graph of a fixed twisting map might appear in the unbroken symmetries $I(Q)$ of several $Q$'s.

When such a twisted Lorentz symmetry exists one can view the $Q$-invariant subsector as a~field theory with $\mathfrak{so}^\prime(d)$ invariance in its own right.
Crucially, in this reinterpretation $Q$ is a~Lorentz scalar and the energy momentum tensor can be $Q$-exact.
This in turn is the case if all translations are $Q$-exact, i.e., if the subalgebra $E(Q):= \bigl[Q,A^1\bigr]$ is all of $V$.

In summary, to twist a supersymmetric theory means to restrict to the subsector of a theory given by $Q$-cohomology of some nilpotent supercharge $Q \in Y$, together with a choice of twisting homomorphism that identifies a Lorentz subgroup $\mathfrak{so}^\prime(d) \subset I(Q)$.
If $E(Q) = V$, the result is a~topological theory and is referred to as topological twist of the original theory.
Topologically twisted theories can manifestly be defined on general Riemannian manifolds $M^n$, while their supersymmetric ``parents'' may only make sense on very special manifolds: e.g., $\R^n$, or manifolds that admit covariantly constant spinors.
Although the metric of the Riemannian manifold is needed in the definition of the action, the existence of the nilpotent symmetry $Q$ makes sure that the theory is independent of the metric.

{\bf Kapustin--Witten twists.}
$\mathcal{N}=4$ super Yang--Mills theory on $\R^4$ admits several interesting twisting maps~\cite{Marcus1995, Yamron1988}; also see~\cite{Labastida1997} for a more complete discussion.
The topological twist that is relevant to Haydys--Witten Floer theory and Khovanov homology is commonly dubbed Kapustin--Witten or Geometric Langlands twist.
Here we cite the most relevant results; more detailed explanations can be found in~\cite{Kapustin2007, Witten2011}.

The four-dimensional Lorentz algebra is isomorphic to $\mathfrak{so}(4) \simeq \mathfrak{su}(2)_\ell \times \mathfrak{su}(2)_r$, where the subscripts $\ell$ and $r$ stand for \emph{left} and \emph{right} chiral part.
The Kapustin--Witten twisting homomorphism is given by a diagonal embedding
\begin{align*}
	\alpha\colon \ \mathfrak{so}(4) \simeq \mathfrak{su}(2)_\ell \times \mathfrak{su}(2)_r \hookrightarrow
	\begin{pmatrix} \mathfrak{su}(2)_\ell & 0 \\ 0 & \mathfrak{su}(2)_r \end{pmatrix} \subset \mathfrak{su}(4)_R.
\end{align*}
The centralizer of the graph of $\alpha$ in $\mathfrak{so}(4)\times \mathfrak{su}(4)_R$ is an additional $U(1)$ that acts on the two blocks by multiplication with $e^{\pm {\rm i} \omega}$, respectively.
This gives rise to a $U(1)$ symmetry in the twisted theory and an associated $\mathbb{Z}$-grading on the twisted fields.
We will later see that it plays the role of a fermion number $F$ and will thus be denoted $U(1)_F$ and $\mathbb{Z}_F$ from now on.

The full nilpotence variety in $4d$ $\mathcal{N}=4$ SYM theory is an 11-dimensional stratified variety~\cite{Eager2021}.
This variety contains a $\mathbb{CP}^1$-family of supercharges for which on the one hand $I(Q)$ contains the (fixed) twisted Lorentz algebra $\mathfrak{so}^\prime(4)$ and on the other hand $E(Q) = V$, such that $Q$-cohomology becomes topological after twisting.
From the point of view of the ${\rm SO}^\prime(4)$ Lorentz group, the fields transform in the following representations:
\begin{itemize}\itemsep=0pt

	\item The connection $A$ transforms trivially under $R$-symmetry and thus remains unchanged.

	\item Four of the scalar fields $\phi_i$ combine into a vector representation $\phi$ with $F=0$, while the remaining two are ${\rm SO}(4)^\prime$ scalars that are rotated into each other by $U(1)_F \simeq {\rm SO}(2)$. Combining the latter two into $\sigma = \phi_5 + {\rm i}\phi_6$ and $\bar{\sigma} = \phi_5 - {\rm i}\phi_6$ makes these into $\mathfrak{g}_{\C}$-valued Lorentz scalars with fermion number $F=\pm 2$.

	\item The sixteen real components of the four Weyl fermions $\psi_a$ distribute into two vector representations $\lambda_1$, $\lambda_2$ of ${\rm SO}(4)^\prime$ with $F=1$, an antisymmetric representation $\chi$ with $F=-1$, and two Lorentz scalars $\eta_1$, $\eta_2$ with $F=-1$.

\end{itemize}

On a general Riemannian manifold the field content of the twisted theory thus arranges into a $\mathbb{Z}$-graded chain complex, where $Q$ acts as differential of degree $1$,
\begin{center}
\fontsize{8.5}{8.8}\selectfont
\setlength{\tabcolsep}{1pt}
\begin{tabularx}{\textwidth}{YYYYY}
$F=-2$ & $-1$ & $0$ & $1$ & $2$ \\[0.4em]
\makecell{${\bar{\sigma} \in \Omega^1(W^4, \ad E_{\C}})$}&%
\makecell{${\eta_1, \eta_2 \in \Omega^0(W^4,\ad E)}$ \\ ${\chi \in \Omega^2(W^4,\ad E)}$}&%
\makecell{${A \in \mathcal{A}(E)}$ \\ ${\phi \in \Omega^1(W^4, \ad E)}$}&%
\makecell{${\lambda_1,\lambda_2 \in \Omega^1(W^4,\ad E)}$}&%
\makecell{${\sigma \in \Omega^1(W^4, \ad E_{\C} )}$}
\end{tabularx}
\end{center}

Recall from Section~\ref{sec:background-super-Yang-Mills-theory} that in the presence of a boundary, a totally isotropic subspace \smash{$\bigl(\mathfrak{S} \otimes \C^4\bigr) \oplus \bigl(\overline{\mathfrak{S}} \otimes \bigl(\C^4\bigr)^\ast \bigr) \subset A^1$} of the fermionic symmetries can be preserved by choosing half-BPS boundary conditions.
The preserved supersymmetry determined by $\mathfrak{S}$ and the $\mathbb{CP}^1$-family of nilpotent charges that admit a compatible Kapustin--Witten twist intersect in a single point~$Q$~\cite{Witten2011}.

{\bf Localization of topologically twisted partition function.}
Topologically twisted ${\mathcal{N} = 4}$ super Yang--Mills theory is defined on a general Riemannian four-manifold with boundary $\bigl(W^4, g\bigr)$.
As before, the action is given by the expression
\begin{align*}
	S = \frac{1}{g_{\text{YM}^2}} \int_{W^4} \mathcal{L}_{\text{kin}} + \frac{\theta_\mathrm{YM}}{32 \pi^2} \int_{W^4} \Tr F_A \wedge F_A,
\end{align*}
where the Lagrangian arises from the one on flat Euclidean space by rewriting it in terms of the ${\rm SO}^\prime(4)$ fields on $W^4$, and adding curvature terms as necessary to make the action $Q$-invariant.
For example, the part of the Lagrangian that contains the connection $A$ and the one-form \smash{$\phi\in \Omega^1\bigl(W^4, \ad E\bigr)$} is given by{\samepage
\begin{align*}
		&\mathcal{L}_{\text{kin}}^{A,\phi}= \frac{1}{2} \norm{F_A}^2 + \norm{ {\rm d}_A \phi }^2 + \frac{1}{2} \norm{[\phi\wedge\phi]} + \langle \Ric \phi, \phi \rangle,
\end{align*}
where $\Ric$ denotes the Ricci tensor, viewed as linear map on $\Omega^1\bigl(W^4\bigr)$.}

Let us now investigate the partition function of the topologically twisted theory, possibly including the insertion of a 't Hooft operator supported on $K \subset \del W^4$ as specified by the BPS boundary conditions of Section~\ref{sec:background-super-Yang-Mills-theory}.
We denote the partition function of this theory by
\begin{align*}
	Z^Q_{\mathrm{SYM}}\bigl(W^4, K\bigr) = \int \mathcal{D}\Phi \exp(-S).
\end{align*}
It turns out that the path integral localizes on field configurations that obey $[Q, \Psi] = 0$, for all~$\Psi$ with odd fermion number $F$.
Among the fields with fermion number $F = -1$ is the two-form~$\chi$, with (anti-\nolinebreak)self-\nolinebreak{}dual parts $\chi^\pm$, and a scalar $\eta_1$.
The action of $Q$ on these fields produces
\begin{align*}
&	\bigl[Q, \chi^+\bigr]= ( F_A - [\phi\wedge\phi] + t {\rm d}_A \phi )^+ , \qquad	[Q, \chi^-]= \big( F_A - [\phi\wedge\phi] - t^{-1} {\rm d}_A \phi \big)^- , \\
&	[Q, \eta_1]= {\rm d}_A^{\hodge_4} \phi.
	%\label{eq:background-localization-equations}
\end{align*}
Since the topologically twisted theory only depends on $Q$-cohomology, we can modify the action by adding $Q$-exact terms without changing the partition function,
\begin{align*}
	S^\prime
	&= S -\biggl[ Q ,\ \frac{1}{\epsilon} \int_{W^4} \bigl( \bigl\langle \chi^+, \bigl[Q,\chi^+\bigr] \bigr\rangle + \langle \chi^- , [Q,\chi^-] \rangle + \langle \eta_1 , [Q,\eta_1] \rangle \bigr) \biggr] \\
	&= S - \frac{1}{\epsilon} \int_{W^4} \bigl( \norm{ \bigl[Q, \chi^+\bigr] }^2 + \norm{ [Q,\chi^-] }^2 + \norm{ [Q,\eta_1] }^2 + \cdots \bigr).
\end{align*}
The omitted terms on the right are fermion bilinears and can be neglected, since in a supersymmetric ground state fermionic fields vanish.
Taking $\epsilon \to 0$, the action diverges except when~${\bigl[Q,\chi^\pm\bigr] = [Q, \eta] = 0}$, such that the path integral localizes on field configurations that satisfy these equations.
These are known as Kapustin--Witten equations and will be discussed in much more detail in Section~\ref{sec:background-Kapustin-Witten-equations}.
There is a similar localization argument for the complex scalar~$\sigma$ and its complex conjugate that provides the equations ${\rm d}_A \sigma = [\phi, \sigma] = [\sigma, \bar\sigma] = 0$.
These equations imply that~$\sigma$ and~$\bar{\sigma}$ vanish, at least as long as $(A,\phi)$ are irreducible solutions of the Kapustin--Witten equations.

A standard argument relates the expected dimension of the moduli space of solutions of the localization equations to the index of a Dirac-like operator on $W^4$.
In the path-integral formalism this index is equal to the anomaly in the fermion number $F$, so the partition function of the topologically twisted theory on $W^4$ vanishes except when the expected dimension of the moduli space of solutions is zero.
As a result, the partition function reduces to a sum over classical solutions of the localization equations on $W^4$, which additionally satisfy the supersymmetric boundary conditions described in Section~\ref{sec:background-super-Yang-Mills-theory} if $\del W^4 \neq \varnothing$.

Consider now the contribution to the partition function from a given solution of the localization equations.
In expanding the path integral around the solution we assume that there are no bosonic or fermionic zero modes and the solution fully breaks gauge symmetry, which is the expected case if the index is zero.
Taking $g_\text{YM} \to 0$, the calculation reduces to a one-loop approximation.
On the one hand this results in a factor of $\exp(-\theta_\mathrm{YM} P) =: q^P$ from the topological part of the action.
Here we abbreviated \smash{$P=\frac{1}{32\pi^2}\int_{W^4} \Tr F\wedge F$}, which is the integral of the second Chern class of the principal bundle $E$ and is known as instanton number.
On the other hand, we pick up the ratio of the fermion and boson determinant, which due to supersymmetry are equal up to a sign $(-1)^F$, where $F$ is interpreted as the fermion number of the solution, as~alluded to earlier.
It follows that any classical solution of the localization equations contributes to the partition function with a term $(-1)^F q^P$.

Writing $\mathcal{M}^{\mathrm{KW}}$ for the set of classical solutions of the Kapustin--Witten equations, we find that the topological partition function is given by
\begin{align*}
	Z_\text{SYM}^Q \bigl(W^4, K ;\, q\bigr) = \sum_{\Phi \in \mathcal{M}^{\mathrm{KW}}} (-1)^{F(\Phi)} q^{P(\Phi)} = \sum_{n \in \mathbb{Z}} a_n q^n.
\end{align*}
In rewriting the sum in the second equation, we assume that there are only finitely many solutions and denote by $a_n$ the number of solutions with $P=n$.
Since the presence of a 't Hooft operator along a knot $K \subset \del W^4$ is encoded in the boundary conditions, it affects the coefficients $a_n$.

To conclude this section, we cite a further key result of~\cite{Witten2010, Witten2011}, which states that the action of SYM theory on $W^4 = S^3 \times \R^+$ is equivalent to
\begin{align*}
	S = [ Q , \,\cdot\, ] + {\rm i} \psi\ CS( A + {\rm i} \omega \phi ),
\end{align*}
where $\psi$ and $\omega$ are determined by $\theta_\mathrm{YM}$ -- or, depending on preference, the choice of preserved supersymmetry.
As was the case for the instanton number $P$, the definition of the Chern--Simons functional in that equation needs a bit of care, see~\cite{Witten2010, Witten2011, Witten2011a}.

The twisted partition function is independent of $Q$-exact operators, so topologically twisted SYM theory on $W^4=S^3\times \R^+$ together with a 't Hooft operator along $K\subset \del W^4$ is equivalent to Chern--Simons theory, which famously calculates the Jones polynomial of $K$~\cite{Witten1989}:
\begin{align*}
	Z_{\rm SYM}^Q\bigl(S^3\times \R^+ , K\bigr) = Z_{\rm CS}\bigl(S^3, K\bigr) = J(K).
\end{align*}

\section{Hilbert space of BPS states in five dimensions}
\label{sec:background-BPS-Kh-Homology}

The Jones polynomial admits a categorification known as Khovanov homology~\cite{Khovanov1999}.
Categorification involves replacing a classical mathematical object with a richer, more structured object ``one step up'' in category theory.
The new object usually captures more information and often allows for more powerful tools and insights.

In the case of Khovanov homology and the Jones polynomial, categorification replaces a~polynomial invariant of knots and links with a homological invariant that assigns to each knot or link $K$ a~bigraded vector space $\mathrm{Kh}(K) = \bigoplus_{i,j} \mathrm{Kh}^{i,j}(K)$.
The Jones polynomial is obtained from Khovanov homology as the Euler characteristic of the graded vector space
\begin{align*}
	J(K) = \sum_{i,j} (-1)^i q^j \dim \mathrm{Kh}^{i,j}(K).
\end{align*}
A similar phenomenon exists in supersymmetric quantum field theories when the localization procedure reduces the path integral to a sum over supersymmetric vacua.
In such situations the partition function can often be expressed as a trace over the Hilbert space of BPS states in one dimension higher.

Indeed, it is well-known that $4d$ $\mathcal{N}=4$ super Yang--Mills theory can be viewed as the low-energy effective theory of a $5d$ $\mathcal{N}=2$ super Yang--Mills theory\footnote{Five-dimensional super Yang--Mills theory is not UV-complete and from the physics point of view it may be more satisfying to consider compactifications of the $6d$ $\mathcal{N}=(2,0)$ superconformal theory or of $10d$ $\mathcal{N}=1$ super Yang--Mills theory, see~\cite{Witten2011}.
From the Floer theory point of view, as adopted here, the five-dimensional interpretation is more natural.} compactified on a small circle $S^1$.
The Nahm pole boundary condition lifts to the five-dimensional theory by a translation-invariant continuation in the direction of the circle.
In particular, any 't Hooft operator supported on a knot $K$ in $\del W^4$ lifts to an $S^1$-invariant surface operator supported on $\Sigma_K = S^1 \times K$ in the boundary of the five-manifold.
The Nahm pole boundary condition and 't Hooft operator preserve the same supersymmetry generator as in the four-dimensional theory.
Furthermore, the Kapustin--Witten twisting homomorphism of the four-dimensional theory corresponds to an~analogous topological twist in five dimensions.
In particular, the topological supercharge $Q$ that defined the topological subsector in four dimensions, remains a nilpotent symmetry when the model is lifted to five dimensions.

In the five-dimensional theory on $S^1 \times W^4$ one constructs the Hilbert space of states $\mathcal{H}$ by canonical quantization on a ``temporal'' slice, i.e., on a codimension one submanifold $\{s\} \times W^4$, $s \in S^1$, in the background determined by the boundary conditions.
The partition function of the topologically twisted four-dimensional theory on $W^4$ is then equivalent to a trace in $\mathcal{H}$ with certain operator insertions:
\begin{align*}
	Z^Q_{\text{SYM}}(q) = \Tr_{\mathcal{H}} (-1)^F q^P.
\end{align*}
As before, $F$ and $P$ denote fermion and instanton number, but in the five-dimensional theory are interpreted as operators on Hilbert space.
$P$ is given in terms of the classical integral of the four-dimensional fields and promoted to an operator by classical quantization.
The operator $F$ in the expansion around a classical solution is given by summing over the $F$-eigenvalues of the filled Dirac sea, i.e., the total fermion number of the combination of all negative energy states.

In this approach $\mathcal{H}$ plays the role of the chain complex underlying Khovanov homology.
Indeed, since $Q$ is a nilpotent fermionic symmetry, it satisfies $Q^2=0$ and accordingly acts as a~differential on the Hilbert space.
The cohomology of $Q$ is commonly known as the space of BPS states $\mathcal{H}_\text{BPS} := H^\bullet(\mathcal{H}, Q)$ and, in the current situation, is spanned by supersymmetric ground states.
While the physical Hilbert space generally depends on various parameters and perhaps on choices during quantization, BPS states are protected by supersymmetry and are independent of continuous deformations of the theory.
As a consequence, \smash{$\mathcal{H}_\text{BPS}=\mathcal{H}_\text{BPS}\bigl(W^4, K\bigr)$} is a knot invariant and one expects that it is a gauge-theoretic manifestation of Khovanov homology.

In a first approximation, the quantum ground states of the five-dimensional theory are determined by time-independent classical solutions of the equations of motion.
The construction outlined above results in the fact that these correspond to solutions of the localization equations of the four-dimensional theory, i.e., solutions of the Kapustin--Witten equations.
For simplicity, assume that there is a finite, non-degenerate set of solutions $\mathcal{M}^{\mathrm{KW}}$, in particular assume that after gauge fixing the solutions do not have bosonic zero modes.
Then an expansion around a~given solution $\Phi \in \mathcal{M}^{\mathrm{KW}}$ produces a single perturbative ground state $\ket{\Phi}$ of zero energy.

To move beyond perturbation theory, consider the space $\mathcal{H}_0$ spanned by all perturbative supersymmetric ground states.
States in $\mathcal{H}_0$ can fail to be true quantum mechanical ground states if they are lifted from zero energy by some non-perturbative process.
It is a well-known property of supersymmetric quantum theories that eigenstates of the Hamiltonian with non-zero energy must appear in Bose--Fermi pairs that only differ with respect to their spin statistics, which in the current situation is given by $\mathbb{Z}_F / 2\mathbb{Z}$.
Recall that $\mathcal{H}_0$ carries a $\mathbb{Z}_P \times \mathbb{Z}_F$ grading\footnote{On general five-manifolds there is no $U(1)_F$ symmetry, but there is still a $\mathbb{Z}/2\mathbb{Z}$ grading by spin-statistics.} with respect to the instanton and fermion numbers.
Since the fermion operator satisfies $[F,Q] = Q$, the fermion numbers of such a Bose--Fermi pair differ by exactly one, while their instanton numbers coincide.

In the present context, quantum corrections can only arise by tunneling from one classical (perturbative) solution to another.
To understand the fundamentally quantum mechanical process of tunneling in the context of field theories, it is helpful to slightly change perspective.
The~result is the well-known application of Morse theory to Yang--Mills theory pioneered by Floer in the context of flat connections on three-manifolds and anti-self-dual connections (Yang--Mills instantons) on four-manifolds~\cite{Floer1988, Floer1989}.
For this, one reinterprets super Yang--Mills theory on~${\R_s \times W^4}$ as supersymmetric quantum mechanics on the space of field configurations over $W^4$, where the real coordinate $s$ takes the place of the time coordinate in quantum mechanics.
If one interprets the action functional of super Yang--Mills theory as potential energy, one finds that a~tunneling event corresponds to a solution of gradient flow equations~\cite{Witten1982}.
Following this line of argument in the current context and formulating everything in terms of five-dimensional fields, one arrives at the Haydys--Witten equations on $\R_s \times W^4$~\cite{Witten2011}.
We defer a detailed description of these equations and their specializations to the upcoming Section~\ref{sec:background-equations}.

The space of BPS ground states $\mathcal{H}_\text{BPS}$ is now given by $Q$-cohomology of the approximate space $\mathcal{H}_0$, where $Q$ acts by instanton corrections as explained above.
Since $Q$ has $F$-degree $+1$, or by the Bose--Fermi pair argument from above, only gradient flows that interpolate between solutions whose fermion numbers $f_{i}$ and $f_{j}$ differ by 1 can contribute to the correction.
The action of $Q$ is then given by
\begin{align*}
	Q \ket{\Phi_i} = \sum_{ \substack{\Phi_j \in \mathcal{M}^{\text{KW}} \\ \abs{f_i-f_j} = 1} } n_{ij} \ket{\Phi_j},
\end{align*}
where $n_{ij}$ is the signed count of instanton solutions that interpolate between the solutions $\Phi_i$ and $\Phi_j$.

Conceptually, the approximate Hilbert space $\mathcal{H}_0$ corresponds to the Morse--Smale--Witten complex, quantum corrections are given by the instanton Floer differential associated to the Haydys--Witten equations, and the Hilbert space of BPS states yields the associated Floer cohomology of the manifold.
The mathematical formulation of this instanton Floer theory is standard and will be summarized in Section~\ref{sec:background-HWF-theory}.
The key insight provided by physics and bottom line of this section is that this Floer theory based on the Haydys--Witten and Kapustin--Witten equations gives rise to interesting topological invariants of four-manifolds.

\section{The Haydys--Witten equations and their specializations}
\label{sec:background-equations}

The instanton equations mentioned in the preceding section are conveniently summarized in a formulation due to Haydys~\cite{Haydys2015}.
Below we first introduce Haydys' geometric setup and the Haydys--Witten equations, which provide a covariant lift of anti-self-dual equations in four dimensions to five-manifolds $M^5$ that are equipped with a non-vanishing vector field $v$.
Subsequently, we review the definitions and relevant properties of several closely related differential equations on lower-dimensional manifolds.
Namely, the $\theta$-Kapustin--Witten and Vafa--Witten equations on four-manifolds, the twisted extended Bogomolny equations on three-manifolds, and Nahm's equations on one-dimensional manifolds.
All these equations can be viewed as dimensional reductions of the Haydys--Witten equations.
We dedicate Section~\ref{sec:background-dimensional-reduction} to a detailed discussion of this fact.

\subsection{The Haydys--Witten equations}
\label{sec:background-Haydys--Witten-equations}

The Haydys--Witten equations are a set of partial differential equations on Riemannian five-manifolds that are equipped with a nowhere-vanishing vector field~$v$.
The equations were introduced by Haydys on general five-manifolds~\cite{Haydys2015} and at roughly the same time discovered independently by Witten in the special case $\R_s \times X^3 \times \R^+_y$ with $v=\del_y$~\cite{Witten2011}.
This section closely follows the original exposition of~\cite{Haydys2015}.

Let \smash{$\bigl(M^5,g\bigr)$} be a Riemannian five-manifold and $v$ be a nowhere-vanishing vector field of pointwise unit norm.
Consider a principal $G$-bundle $E\to M^5$ for $G$ a compact Lie group, write~$\mathcal{A}(E)$ for the space of connections, and denote by $\ad E$ the adjoint bundle associated to the Lie algebra $\mathfrak{g}$ of $G$.
Furthermore, for a connection $A \in \mathcal{A}(E)$ we denote the associated covariant derivative by $\nabla^A$ and the exterior covariant derivative by ${\rm d}_A$.

Write ${\eta = g(v,\cdot) \in \Omega^1(M)}$ for the one-form dual to the vector field $v$ and observe that the pointwise linear map
\begin{align*}
	T_\eta\colon\ \Omega^2(M) \to \Omega^2(M), \qquad
	\omega \mapsto \hodge_{5} ( \omega \wedge \eta)
\end{align*}
has eigenvalues $\{ -1, 0, 1\}$, such that $\Omega^2(M)$ decomposes into the corresponding eigenspaces:{\samepage
\begin{align*}
	\Omega^2(M) = \Omega^2_{v,-}(M) \oplus \Omega^2_{v,0}(M) \oplus \Omega^2_{v,+}(M).
\end{align*}
Below we will use the notation $\omega^+$ to denote the part of $\omega$ that lies in $\Omega^2_{v,+}(M)$.}

At every point $p\in M^5$, the fiber \smash{$\Omega^2_{v,+}(M)\restr_p$} is an oriented three-dimensional vector space, with orientation induced by that of \smash{$M^5$}, and thus carries a natural Lie algebra structure given by the usual cross product $(\cdot \times \cdot)$ of $\mathbb{R}^3$.
The map \smash{$\sigma(\cdot, \cdot) = \frac{1}{2} (\cdot \times \cdot) \otimes [ \cdot, \cdot ]_{\mathfrak{g}}$} determines a corresponding cross product on \smash{$\Omega^2_{v,+}(M, \ad E) \simeq \R^3 \otimes \mathfrak{g}$}.

\begin{Example}
To parse these constructions, consider their incarnation in a small neighbourhood~$U$ of a point $p\in M$, see Figure~\ref{fig:background-Haydys-geometry}.
Choose orthonormal coordinates $(x_i, y)_{i=0,1,2,3}$ based at~$p$ in~such a way that $v = \del_{y}$.
An explicit basis of $\Omega_{v,+}^2(U)\restr_p$ is given by
\begin{align*}
&	e_1= {\rm d}x^0\wedge {\rm d}x^1 + {\rm d}x^2 \wedge {\rm d}x^3 , \\
&	e_2= {\rm d}x^0\wedge {\rm d}x^2 - {\rm d}x^1 \wedge {\rm d}x^3 , \\
&	e_3= {\rm d}x^0\wedge {\rm d}x^3 + {\rm d}x^1 \wedge {\rm d}x^2 .
\end{align*}
Observe that this is also a basis of the self-dual two-forms with respect to $\hodge_4$ acting on the orthogonal complement of $v$ in $\Omega^2(U)\restr_p$.
We can extend the $e_i$ to a local frame over all of $U$ such that an element of \smash{$\Omega_{v,+}^2(U, \ad E)$} is of the form $B = e_1 \otimes \phi_1 + e_2 \otimes \phi_2+ e_3 \otimes \phi_3$ for some $\mathfrak{g}$-valued functions $\phi_a$, $a=1,2,3$.
Equivalently, the non-vanishing components of $B$ are $B_{0a} =\phi_a$, $B_{ab} = \epsilon_{abc} \phi_c$.
Finally, the cross product of $B$ with itself is given by
\begin{align*}
	\sigma(B, B) = e_1 \otimes [\phi_2, \phi_3] + e_2 \otimes [\phi_3, \phi_1] + e_3 \otimes [\phi_1,\phi_2],
\end{align*}
which in 2-form components corresponds to \smash{$\sigma(B,B)_{\mu\nu} = \frac{1}{4} g^{\rho\tau} [B_{\mu\rho}, B_{\nu\tau}]$}.
\end{Example}
\begin{Example}
Consider $\smash{\bigl(M^5, v\bigr) = \bigl(W^4 \times \R_{y\ } , \del_y \bigr)}$ with product metric and denote by $i\colon W^4 \hookrightarrow W^4 \times \R_y$ inclusion at $y=0$.
Then
\begin{align*}
	 \Omega^2_\pm\bigl(W^4\bigr) \simeq i^\ast \bigl( \Omega_{\del_y, \pm}^2\bigl(W^4 \times \R_y\bigr) \bigr),
\end{align*}
where \smash{$\Omega_\pm^2\bigl(W^4\bigr)$} denotes (anti-)self-dual two-forms on $W^4$ with respect to $\hodge_4$.
Indeed, whenever the metric is of product type, the Hodge star operator factorizes as
\begin{align*}
	\hodge_5 ( \alpha \wedge \beta ) = (-1)^{k\ell} \hodge_{W^4} \alpha \wedge \hodge_{\R_y} \beta,
\end{align*}
when \smash{$\alpha \in \Omega^k\bigl(W^4\bigr)$}, $\beta \in \Omega^\ell(\R_y)$ and this gives rise to the stated isomorphisms.
\end{Example}

\begin{figure}[t]
\centering
\includegraphics[width=\textwidth, height=0.16\textheight, keepaspectratio]{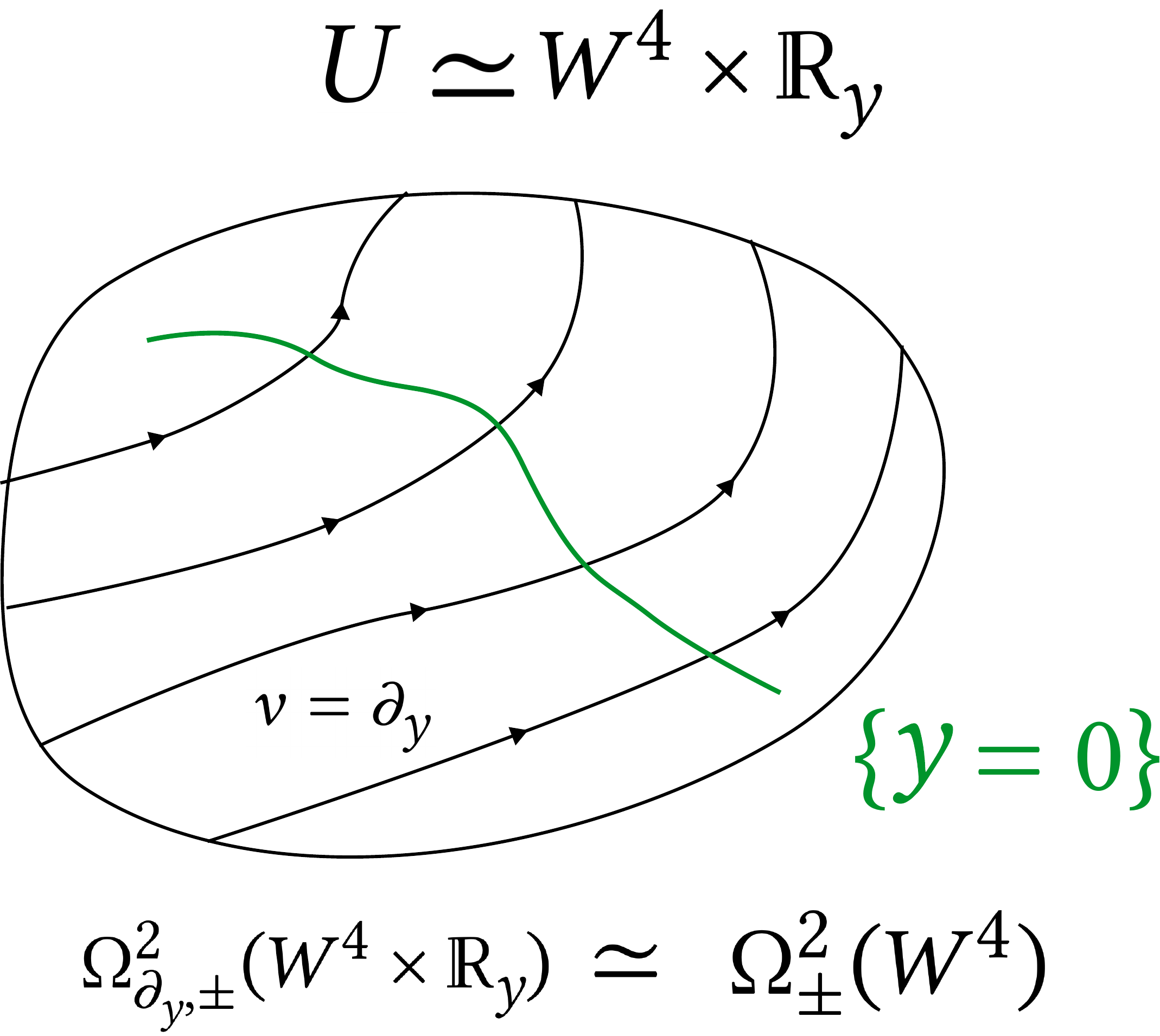}
\caption{}
\label{fig:background-Haydys-geometry}
\end{figure}

The two examples demonstrate that Haydys' setup provides a covariant lift of (anti-)self-dual 2-forms into 5-manifolds.
Given the relevance of gauge theory and anti-self-dual connections in the study of 4-manifolds, Haydys then suggests to consider a closely related set of equations in five-dimensional gauge theory.
The equations make use of the following differential for Haydys-self-dual two-forms
\begin{align*}
\begin{split}
	\delta_A^+ \colon\ \Omega^2_{v,+}(M, \ad E)
 \ \xrightarrow{\nabla^{A}}\ {}&T^\ast M \otimes \Omega^2_{v,+}(M, \ad E)
\simeq TM \otimes \Omega^2_{v,+}(M,\ad E)\\
\xrightarrow{-\imath} \ {} &\Omega^1(M, \ad E).
\end{split}
\end{align*}
Here, as usual, $\nabla^{A}$ is the covariant derivative with respect to both, the connection and the Levi-Civita connection, while we use $\imath$ to denote contraction $\imath(u\otimes \omega) := \imath_u \omega$.
In normal coordinates $(x^\mu, y)_{\mu=0,1,2,3}$ with $v=\del_y$, the action of this differential is given by \smash{$\delta_A^+ B = - \sum_{\mu=0}^{3} \nabla^{A}_\mu\; \imath_\mu B$}, since by construction $\imath_{\del_y} B=0$.

\begin{Definition}[Haydys--Witten equations]
Let \smash{$\bigl(M^5, v\bigr)$} be a Riemannian 5-manifold together with a preferred vector field and ${E \to M^5}$ a principal $G$-bundle.
With notation as above, consider a pair \smash{${(A,B) \in \mathcal{A}(E)\times \Omega^2_{v,+}(\ad E)}$}.
The Haydys--Witten equations for $(A,B)$ are given by
\begin{align}
	&F_A^+ - \sigma(B, B) - \nabla_v^A B= 0, \qquad \imath_v F_A - \delta_A^+ B= 0. \label{eq:background-Haydys--Witten-equations}
\end{align}
We denote the corresponding differential operator by
\begin{align*}
	\HW[v]\colon\ \mathcal{A}(E) \times \Omega_{v,+}^2(\ad E) \to \Omega_{v,+}^2(\ad E) \times \Omega^1(\ad E).
\end{align*}
\end{Definition}

If $B=0$ the Haydys--Witten equations provide a five-dimensional analogue of the anti-self-dual equations that underlie Donaldson--Floer theory on 4-manifolds.
This special case was studied from a slightly different perspective in~\cite{Fan1996}.

As mentioned already above, the perspective from supersymmetric Yang--Mills theory suggests that the Haydys--Witten equations have an interpretation in the six-dimensional $\mathcal{N}=(2,0)$ SCFT.
In fact they can be obtained via dimensional reduction in several, closely related ways: from six-dimensional equations~\cite[Section~5]{Witten2011}, from octonionic monopole equations on seven-manifolds with $G_2$ holonomy, or from $\operatorname{Spin}(7)$-instantons on eight-manifolds~\cite{Cherkis2015}.
The octonionic structure is closely related to the 10d $\mathcal{N}=1$ super-Poincar\'e algebra.

\subsection{The Kapustin--Witten equations}
\label{sec:background-Kapustin-Witten-equations}
The Kapustin--Witten equations are a one-parameter family of partial differential equations on Riemannian four-manifolds $W^4$.
They were first described by Kapustin and Witten in the context of the geometric Langlands program and its interpretation in quantum field theory, in~which case one considers a product of two Riemann surfaces $W^4 = \Sigma \times C$~\cite{Kapustin2007}.
A few years later the equations resurfaced in the context of the gauge theoretic approach to Khovanov homology on manifolds of the form $W^4 = X^3 \times \R^+$~\cite{Witten2011}.
Since then their study has grown into an active field of research, for an incomplete list of recent developments see, e.g.,~\cite{Gagliardo2012, He2015, He2019d, Nagy2021, Tanaka2019, Taubes2013, Taubes2017a}.

Let $\bigl(W^4,g\bigr)$ be a smooth Riemannian four-manifold and $G$ a compact Lie group.
Consider a principal $G$-bundle $E\to W^4$ together with a principal connection $A$, and denote by $\ad E$ the adjoint bundle associated to the Lie algebra $\mathfrak{g}$ of $G$.
Furthermore, consider an $\ad E$-valued one-form $\phi \in \Omega^1\bigl(W^4, \ad E\bigr)$, usually called the Higgs field.
The $\theta$-Kapustin--Witten equations for the pair $(A,\phi)$ and an angle $\theta \in [0,2\pi]$ are the following family of differential equations.
\begin{Definition}[Kapustin--Witten equations] %\label{def:background-Kapustin--Witten-equations}
\begin{align}
&	\bigl( \cos \tfrac{\theta}{2} \bigl( F_A - \thalf [\phi \wedge \phi] \bigr) - \sin \tfrac{\theta}{2} {\rm d}_A \phi \bigr)^+= 0,\nonumber \\
&	\bigl( \sin \tfrac{\theta}{2} \bigl( F_A - \thalf [\phi \wedge \phi] \bigr) + \cos \tfrac{\theta}{2} {\rm d}_A \phi \bigr)^-= 0,\nonumber \\
&	{\rm d}_A^{\hodge_4} \phi= 0.\label{eq:background-Kapustin--Witten-equations}
\end{align}
The corresponding differential operator is denoted as
\begin{align*}
\KW[\theta]\colon \ \mathcal{A}(E) \times \Omega^1\bigl(X^4, \ad E\bigr) \to \Omega^2_+\bigl(X^4, \ad E\bigr)\times \Omega^2_-\bigl(X^4, \ad E\bigr) \times \Omega^0\bigl(X^4, \ad E\bigr).
\end{align*}
\end{Definition}

For $\theta \neq 0 \pmod{\pi}$, the self-dual and anti-self-dual parts in \eqref{eq:background-Kapustin--Witten-equations} can be combined into the following single expression
\begin{align} \label{eq:background-Kapustin--Witten-equations-combined}
	F_{A} - \frac12 [\phi\wedge\phi] + \cot\theta {\rm d}_{A} \phi - \csc\theta \hodge_4 {\rm d}_A \phi	= 0.
\end{align}
Furthermore, as observed already by Kapustin and Witten~\cite{Kapustin2007} and discussed in more detail by Gagliardo and Uhlenbeck~\cite{Gagliardo2012}, this is equivalent to phase-shifted conjugate anti-self-dual equations for the $G_{\C}$-connection $A + {\rm i}\phi$:
\begin{align}\label{eq:background-Kapustin--Witten-equations-cplx-asd}
	F_{A + {\rm i} \phi} = {\rm e}^{{\rm i} (\pi - \theta)} \hodge_4 \overline{ F_{A + {\rm i} \phi} }.
\end{align}
This point of view suggests the applicability of several powerful results from the theory of self-dual Yang--Mills connections and geometric analysis in general.
For example, it is apparent from~\eqref{eq:background-Kapustin--Witten-equations-cplx-asd} that in the case $\phi = 0$ the one-parameter family of Kapustin--Witten equations interpolates between Donaldson's anti-self-dual equations ($\theta = 0$) and self-dual equations~(${\theta = \pi}$).

\begin{Remark}
In the literature, it is more common to parametrize the family of Kapustin--Witten equations by ${\theta_{\rm GU} = \theta/2 \in [0,\pi/2]}$, see, e.g.,~\cite{Gagliardo2012}.
However, when viewed as dimensional reduction of the Haydys--Witten equations as explained in Section~\ref{sec:background-dimensional-reduction}, the parameter $\theta$ obtains a~geometric interpretation as angle between the non-vanishing vector field $v$ and the direction of invariance.
This motivates the slightly non-standard choice of normalization used throughout this article.
The naturality of $\theta$, as opposed to $\theta_{\mathrm{GU}}$, can also be seen from equations~\eqref{eq:background-Kapustin--Witten-equations-combined}~and~\eqref{eq:background-Kapustin--Witten-equations-cplx-asd}, where our normalization avoids an additional factor of two.
\end{Remark}

At the midpoint $\theta = \pi/2$, the equations are usually referred to as ``the'' Kapustin--Witten equations and written more succinctly as
\begin{align*}
	&F_A - \tfrac12 [\phi \wedge \phi] - \hodge_4 {\rm d}_A \phi= 0, \qquad
	{\rm d}_A^{\hodge_4} \phi = 0.
	%\label{eq:background-the-Kapustin--Witten-equations}
\end{align*}

In their original article, Kapustin and Witten realized that on closed manifolds the study of solutions is relatively simple.

\begin{Theorem}[\cite{Gagliardo2012, Kapustin2007}]
\label{thm:background-Kapustin--Witten-vanishing}
Let $E\to W^4$ be an ${\rm SU}(2)$ principal bundle over a compact manifold without boundary.
Assume $(A,\phi)$ satisfies the $\theta$-Kapustin--Witten equations with $\theta \in (0,\pi)$.
If~${E\to W^4}$ has non-zero Pontryagin number, then $A$ and $\phi$ are identically zero.
Otherwise~${A+ {\rm i} \phi}$ is a flat ${\rm PSL}(2,\mathbb{C})$ connection; equivalently $F_A = \thalf [\phi\wedge\phi]$ and $\nabla^A \phi = 0$.
\end{Theorem}

As a consequence, one mostly concentrates on open manifolds or manifolds with boundary.
Recently, solutions of Kapustin--Witten equations have been studied on $\R^4$ and it turns out that also in this case solutions are remarkably constrained.
\begin{Theorem}[Taubes' dichotomy~\cite{Taubes2017a}]
\label{thm:background-taubes-dichotomy}
Let $W^4 = \R^4$, $G={\rm SU}(2)$, and define
\[
\kappa^2(r) := r^{-3} \int_{\del B_r} \norm{\phi}^2.
\]
Assume that $(A,\phi)$ is a solution of the Kapustin--Witten equations, then either there is an $a>0$ such that ${\liminf_{r\to\infty} \kappa/r^a > 0}$ or $A$ is flat, $\nabla^A \phi=0$, and $[\phi\wedge\phi] = 0$.
\end{Theorem}
In a similar way, the Kapustin--Witten energy
\begin{align*}
	E_{\text{KW}} = \int_{W^4} \bigl( \norm{F_A}^2 + \norm{\nabla^A \phi}^2 + \norm{[\phi\wedge\phi]}^2 \bigr)
\end{align*}
offers some degree of control over the asymptotic behaviour of the Higgs field $\phi$ on manifolds with controlled asymptotic volume growth (ALX spaces).
\begin{Theorem}[Kapustin--Witten solutions on ALE and ALF spaces {\cite[Main Theorem 1]{Nagy2021}}]
%\label{thm:background-nagy-oliveira}
Assume $W^4$ is an ALE or ALF gravitational instanton and $(A,\phi)$ is a finite energy solution of the $\theta$-Kapustin--Witten equations, $\theta\neq 0 \pmod{\pi}$.
Then $\phi$ has bounded $L^2$-norm.
\end{Theorem}
Since bounded $L^2$-norm in particular implies bounded $L^2$-average on spheres, it follows that for finite energy solutions on $\R^4$ and $S^1 \times \R^3$ the function $\kappa$ in Theorem~\ref{thm:background-taubes-dichotomy} converges to $0$ as~${r\to\infty}$.
Taubes' dichotomy then yields the following corollary.
\begin{Corollary}\label{cor:background-Nagy-Oliveira-Conjecture}
On $\R^4$, $\R^3\times S^1$ and compact manifolds, solutions of the $\theta$-Kapustin--Witten equations with finite positive energy are such that $A$ is flat, $\nabla^A \phi = 0$, and $[\phi\wedge\phi]=0$.
\end{Corollary}
A generalization of Taubes' dichotomy to any ALE or ALF space was proposed in~\cite{Bleher2023a}.
With this Corollary~\ref{cor:background-Nagy-Oliveira-Conjecture} applies to ALE, ALF, and compact manifolds.

\subsection{The Vafa--Witten equations}
%\label{sec:background-Vafa--Witten-equations}
The Vafa--Witten equations are partial differential equations on a four-manifold $W^4$.
They were first discovered by Vafa and Witten in the context of 4d $\mathcal{N}=2$ super Yang--Mills theory~\cite{Vafa1994}.
Their solutions give rise to topological invariants of four-manifolds.
The equations have since been subject to close scrutiny both in physics and mathematics, see~\cite{Gukov2022, Mares2010, Ong2022, Tanaka2017, Tanaka2019, Taubes2017} and references therein.

Let $\bigl(W^4,g\bigr)$ be a smooth Riemannian four-manifold and $G$ a compact Lie group.
Consider a principal $G$-bundle $E\to W^4$ together with a connection $A$, and denote by $\ad E$ the adjoint bundle associated to the Lie algebra $\mathfrak{g}$ of~$G$.
Let \smash{$B\in \Omega^2_+\bigl(W^4\bigr)$} be a self-dual two-form (with respect to the Hodge star operator) and \smash{$C\in\Omega^0\bigl(W^4,\ad E\bigr)$} a section of $\ad E$.
In complete analogy to the situation in Haydys' setup, there is a three-dimensional cross product $\sigma(\cdot,\cdot) = \thalf (\cdot \times \cdot) \otimes [\cdot, \cdot]_{\mathfrak{g}}$ on \smash{$\Omega^2_+\bigl(W^4, \ad E\bigr)$}.

The Vafa--Witten equations for the triple $(A,B,C)$ are the following partial differential equations.
\begin{Definition}[Vafa--Witten equations] %\label{def:background-Vafa--Witten-equations}
\begin{align}\label{eq:background-Vafa--Witten-equations}
	&F_A^+ - \sigma(B,B) - \tfrac12 [C,B]= 0,\qquad
	{\rm d}_A^{\hodge 4} B + {\rm d}_A C = 0.
\end{align}
We denote the associated differential operator by $\VW(A,B,C)$.
\end{Definition}

The Vafa--Witten equations are closely related to the Kapustin--Witten equations.
For one, we will later describe in detail that both arise as dimensional reduction of the Haydys--Witten equations.
Furthermore, it is well known that on Euclidean space the $\theta=0$ version of the Kapustin--Witten equations and the Vafa--Witten equations are equivalent.
The correspondence arises by combining the components of \smash{$B = \sum_{i=1}^3 \phi_i \bigl({\rm d}x^0\wedge {\rm d}x^i + \frac{1}{2} \epsilon_{ijk} {\rm d}x^j\wedge {\rm d}x^k\bigr)$} and $C$ into a~one-form $\phi = C {\rm d}x^0 + \phi_i {\rm d}x^i$.

\subsection{The extended Bogomolny equations}
%\label{sec:background-EBE}

The extended Bogomolny equations (EBE) are a set of partial differential equations on a three-manifold $X^3$.
As the name suggests, they are an extension of Bogomolny's magnetic monopole equations on three-manifolds~\cite{Atiyah1988a, Bogomolny1976, Jaffe1980}.
The latter are a dimensional reduction of the self-dual Yang--Mills equations from four dimensions to three dimensions.
From a purely three-dimensional point of view, the EBE can be viewed as complexification of the Bogomolny equations~\cite{Nagy2020}.

In the context of Haydys--Witten theory, the EBE appear as a one-parameter family, called twisted extended Bogomolny equations (TEBE), obtained by dimensional reduction of the Haydys--Witten (or Kapustin--Witten) equations to three dimensions.
They play an important role in the definition of the Nahm pole boundary conditions in the presence of knots, because the 't Hooft operator, upon reduction to three dimensions, reduces to the insertion of a~monopole-like solution of the EBE.

Let \smash{$\bigl(X^3,g\bigr)$} be a Riemannian three-manifold and $E\to X^3$ a $G$-principal bundle.
Consider a connection $A$, a one-form \smash{$\phi \in \Omega^1\bigl(X^3, \ad E\bigr)$}, and two functions \smash{$c_1, c_2 \in \Omega^0\bigl(X^3, \ad E\bigr)$}.
The extended Bogomolny equations are the following set of differential equations for the tuple~$(A,\phi,c_1, c_2)$:
\begin{Definition}[extended Bogomolny equations]
\begin{align}
	&F_A - \tfrac12 [\phi\wedge\phi] + \hodge_3 ( {\rm d}_A c_1 - [c_2, \phi] ) = 0, \nonumber\\
	&{\rm d}_A c_2 + [c_1, \phi ] - \hodge_3 {\rm d}_A \phi = 0, \qquad
	{\rm d}_A^{\hodge_3} \phi - [c_1,c_2]=0.
	\label{eq:background-EBE}
\end{align}
\end{Definition}
We write $\EBE(A,\phi,c_1,c_2)$ for the associated differential operator.

As a side remark, note that the equations admit several specializations, each of which had important reverberations in contemporary mathematics.
\begin{itemize}\itemsep=0pt
	\item If $\phi = c_2 = 0$, these are the Bogomolny equations for a magnetic monopole $(A,c_1)$ on a~three-manifold $X^3$.
	\item If $X^3 = \Sigma \times \mathbb{R}_y$, $c_1 = c_2 = 0$, and $(A,\phi)$ are independent of $y$, the equations reduce to Hitchin's equations for a Higgs bundle over $\Sigma$~\cite{Hitchin1987}.
	Note that solutions to the $y$-invariant equations provide natural stationary boundary conditions at non-compact ends.
	\item If $X^3 = \Sigma \times \mathbb{R}_y$, $A = c_2 =0$, and $\phi$ is independent of $\Sigma$, they reduce to Nahm's equations~\cite{Nahm1983}.
	These will be described in more detail in the upcoming Section~\ref{sec:background-Nahm-equations}.
\end{itemize}

Since the extended Bogomolny equations can be obtained by a dimensional reduction of the~${\theta=\pi/2}$ version of the Kapustin--Witten equations (cf.\ Section~\ref{sec:background-dimensional-reduction}), it is clear that there should also be a one-parameter family of EBEs, defined by dimensional reduction for any value of~${\theta \in [0,\pi]}$.
The result is known as $\theta$-twisted extended Bogomolny equations (TEBE)~\cite{Dimakis2022}.
\begin{Definition}[twisted extended Bogomolny equations]
\begin{align}
	&F_A - \tfrac12 [\phi\wedge\phi] + \cot\theta\, {\rm d}_A \phi + \csc\theta \hodge_3 ( {\rm d}_A c_1 - [c_2, \phi] )= 0, \nonumber \\
	&{\rm d}_A c_2 + [c_1, \phi] + \cot\theta ( {\rm d}_A c_1 - [c_2, \phi] ) - \csc\theta \hodge_3 {\rm d}_A \phi= 0, \qquad
	{\rm d}_A^{\hodge_3} \phi - [c_1, c_2] = 0.
	\label{eq:background-TEBE}
\end{align}
We write $\TEBE[\theta](A,\phi,c_1,c_2)$ for the associated differential operator.
\end{Definition}
Observe that for $\theta = \pi/2$, one obtains the untwisted EBE of~\eqref{eq:background-EBE}, i.e., $\TEBE[\pi/2] = \EBE$.

Let us also mention the work of Nagy and Oliveira in~\cite{Nagy2020, Nagy2021}, where the TEBE are investigated at $\theta=0$ and $\pi/2$, respectively.
From their point of view, the two sets of equations arise from two ways to extend the Hodge star operator to the complexification of the principal bundle.
From our point of view, these are special points of a one-parameter family of extended Bogomolny equations, and their findings are analogues in three dimensions of the interpretation of the Kapustin--Witten equations as phase-shifted anti-self dual equations of a complex connection in four dimensions, see~\eqref{eq:background-Kapustin--Witten-equations-cplx-asd}.

Sometimes the EBE are defined only for a triple $(A,\phi,c_1)$.
The reason for this is the following result.
\begin{Proposition} \label{prop:background-EBE-vanishing}
Assume $(A,\phi, c_1, c_2)$ is an irreducible solution of the extended Bogomolny equations on a Riemannian three-manifold $X^3$.
If $X^3$ is closed, or if it has ends at which the fields satisfy boundary conditions such that $\int_{X^3} {\rm d} \Tr \left( c_1 \wedge {\rm d}_A c_2 - [c_1, c_2] \wedge \hodge_3 \phi \right) = 0$, then $c_2=0$.
\end{Proposition}
\begin{proof}
The proof was originally outlined in~\cite[p.~58]{Witten2011}.
The following, more explicit presentation of the argument is taken from~\cite{He2019c}.

If we set
\begin{align*}
	&I_1= \int_{X^3} \bigl\| F_A - \tfrac12 [\phi\wedge\phi] + \hodge_3 {\rm d}_A c_1 \bigr\|^2 + \norm{\hodge_3 {\rm d}_A \phi - [c_1, \phi]}^2 + \norm{{\rm d}_A^{\hodge_3} \phi}^2 , \\
	&I_2= \int_{X^3} \norm{[c_2, \phi]}^2 + \norm{ {\rm d}_A c_2}^2 + \norm{[c_1,c_2]}^2, \\
	&I_3= \int_{X^3} {\rm d} \Tr ( c_1 \wedge {\rm d}_A c_2 - [c_1, c_2] \wedge \hodge_3 \phi ),
\end{align*}
then there is a Weitzenb\"ock formula
\begin{align*}
	\int_{X^3} \norm{\EBE(A,\phi,c_1,c_2)}^2 = I_1 + I_2 + I_3 .
\end{align*}
By assumption, the boundary term $I_3 = 0$, either because $\del M= \varnothing$ or because the boundary conditions on $(c_1,c_2)$ are exactly such that $I_3$ vanishes.
The remaining terms in the Weitzenb\"ock formula are non-negative, so any solution of the extended Bogomolny equations also satisfies $I_1 = I_2 = 0$.
If $c_2\neq 0$, vanishing of the terms in $I_2$ implies that the pair $(A,\phi)$ is reducible, which is in contradiction to our assumption and concludes the proof.
\end{proof}

An astoundingly important fact about the EBE is that over three-manifolds of the form $X^3=\Sigma\times\R$, for some Riemann surface $\Sigma$, they admit a Hermitian Yang--Mills structure~\cite{Witten2011}.
To see this, one introduces three differential operators
\begin{align*}
	\mathcal{D}_1 = \nabla^A_1 + {\rm i} \nabla^A_2, \qquad
	\mathcal{D}_2 = [\phi_1, \,\cdot\,] + {\rm i} [\phi_2, \,\cdot\,], \qquad
	\mathcal{D}_3 = \nabla^A_3 + {\rm i} [\phi_3, \,\cdot\,].
\end{align*}
The extended Bogomolny equations are then equivalent to
\begin{align*}
	[\mathcal{D}_i, \mathcal{D}_j] = 0,
	\qquad
	\sum_{i=1}^3 [\overline{\mathcal{D}_i}, \mathcal{D}_i] = 0 .
\end{align*}
These equations are equivalent to the Hermitian Yang--Mills equation for a Hermitian connection on a holomorphic vector bundle over $\Sigma\times \R$.

More generally, the $\theta$-TEBE exhibit a Hermitian Yang--Mills structure if $c_2 - \tan(\beta/3) c_1 = 0$, where $\beta$ denotes the complementary angle $\beta = \pi/2-\theta$~\cite{Gaiotto2012a}.
Note that for $\theta=\pi/2$ this condition is automatically satisfied for all irreducible solutions due to Proposition~\ref{prop:background-EBE-vanishing}.

The Hermitian Yang--Mills structure is an important tool in the classification of solutions of the EBE and TEBE.
It has been used to prove a Kobayashi--Hitchin correspondence between solutions of the EBE and Higgs bundles with certain extra structure~\cite{Gaiotto2012a, He2019c, He2020b}.
First analogous results regarding the classification of TEBE-solutions have been achieved in~\cite{Dimakis2022, He2019b}.
In fact, in favourable circumstances a closely related Hermitian Yang--Mills structure also exists for a~decoupled version of the Haydys--Witten and Kapustin--Witten equations~\cite{Bleher2023b}.
A first attempt to utilize this Hermitian Yang--Mills structure to classify solutions of the decoupled equations in terms of geometric objects is described in the final chapter of the author's PhD thesis~\cite{Bleher2023c}.

\subsection{Nahm's equations}
\label{sec:background-Nahm-equations}

Nahm's equations are ordinary, non-linear differential equations for a vector bundle of $\mathfrak{g}$-valued functions over a one-dimensional interval $I$.
The equations rely on the existence of a cross product on the vector bundle.
As is well known, real vector spaces with a (bilinear) cross product are in correspondence with the imaginary part of normed division algebras.
Correspondingly, the associated equations are referred to as complex, quaternionic, and octonionic Nahm equations, depending on the rank of the vector bundle.

The equations were originally introduced in the quaternionic case by Nahm and play an~important role in the classification of monopoles~\cite{Donaldson1984, Hitchin1983, Nahm1983}.
They can be viewed as dimensional reductions either from anti-self dual Yang--Mills equations on four-manifolds or equivalently from Bogomolny's monopole equations on three-manifolds.
The octonionic Nahm equations first appear in~\cite{Grabowski1993}.
They have recently attracted renewed attention in the context of $\operatorname{Spin}(7)$-instantons on eight-dimensional manifolds and monopoles on seven-dimensional manifolds with~$G_2$ holonomy~\cite{Charbonneau2022a, Cherkis2015}.
Finally, in a recent article He discusses some properties of the moduli space of solutions for the octonionic Nahm equations~\cite{He2020a}.

In the context of Haydys--Witten theory, a dimensional reduction of the Haydys--Witten equations from five to one dimension in general gives rise to a twisted version of the octonionic Nahm equations.
Since the Haydys--Witten equations can be viewed as a lift of self-dual Yang--Mills equations to five dimensions, it is not too surprising that a variant of Nahm's equations appears.
The twisted octonionic Nahm equations will play an important role in the definition of the twisted (or tilted) Nahm pole boundary conditions with knot singularities in~Section~\ref{sec:background-Nahm-pole-boundary-condition}.

Below, we first provide the general definition of Nahm's equations and state explicit formulae for each case, mostly following a similar exposition in~\cite{He2020a}.
Subsequently, the twisted octonionic equations are introduced, where we explain the underlying structure from the point of view of octonionic multiplication.

Let $G$ be a compact Lie group with Lie algebra $\mathfrak{g}$ and $I$ a real interval with coordinate~$y$.
Consider a trivial $G$-principal bundle $E$ over $I$ with connection $A = A_y {\rm d}y$ and denote the associated covariant derivative by \smash{$\nabla^A_y = \frac{{\rm d}}{{\rm d}y} + [A_y, \cdot]$}.
Let $\mathbb{V}$ be one of the normed division algebras~$\C$,~$\mathbb{H}$,~$\mathbb{O}$.
Write $(\operatorname{Im} \mathbb{V}, \times)$ for its imaginary part together with the cross product induced by multiplication.

We consider the trivial bundle $\operatorname{Im} \mathbb{V}\otimes \ad E \to I$.
Let the tuple $\vec{X} = (X_1, \ldots, X_k)$ denote a~section of this bundle, where $k = \dim_{\mathbb{R}}\mathbb{V} -1$.
In particular, the components $X_i$ are $\mathfrak{g}$-valued functions over $I$.
This bundle admits a cross product, induced by the cross product on $\operatorname{Im} \mathbb{V}$ and the Lie bracket on $\mathfrak{g}$ and given by
\begin{align*}
	\vec{X} \times \vec{X}
	= \sum_{i,j,k} f_{ijk}\ e_i \otimes [X_j, X_k],
\end{align*}
where $f_{ijk}$ are the structure constants of $\mathbb{V}$.

Let $E$ be a $G$-principal bundle over an interval $I$ with connection $A$ and denote a section $\vec{X} \in \Gamma(I, \operatorname{Im} \mathbb{V}\otimes \mathfrak{g})$.
The Nahm equations associated to $\mathbb{V}$ are the following system of non-linear, ordinary differential equations for the pair $(A,\vec{X})$.

\begin{Definition}[Nahm equations]
\samepage{\begin{align*}
	\nabla^A_y \vec{X} + \vec{X} \times \vec{X} = 0.
\end{align*}
We will occasionally write $\Nahm[\mathbb{V}](A, \vec{X})$ for the associated differential operator.}
\end{Definition}
\begin{Remark}
The division algebras only appear fiberwise, in the multiplication of sections.
In particular, the underlying differential equations are based in real analysis, as opposed to complex, quaternion, or octonion analysis.
Furthermore, Nahm's equations only make use of pairwise products, such that the non-associativity of the octonions fortunately need not be taken into consideration.
\end{Remark}

For $\mathbb{V} = \C$, the section $\vec{X}$ corresponds to a single $\ad E$-valued function $X_1$, while the cross product on $\operatorname{Im} \C$ is the zero map.
Hence, the complex Nahm equation is the single equation
\begin{align*}
	\nabla^A_y X_1 = 0,
\end{align*}
which is just the statement that $X_1$ is covariantly constant along $I$.

For $\mathbb{V} = \mathbb{H}$, the section consists of three components $\vec{X} = (X_1 , X_2, X_3)$ and the structure constants are the completely anti-symmetric tensor $\epsilon_{ijk}$, for $i,j,k \in \{1,2,3\}$, which is~1 when~${ijk=123}$,
\begin{align*}
	\nabla^A_y X_i + \tfrac{1}{2} \epsilon_{ijk} [X_j, X_k] = 0.
\end{align*}
The quaternionic Nahm equations are typically simply known as ``the'' Nahm equations.
An~important set of solutions are Nahm pole solutions $(A,\vec{X}) = (0, X_1, X_2, X_3)$, with $\smash{X_i = \frac{\mathfrak{t}_i}{y}}$, where $\mathfrak{t}_i \in \mathfrak{g}$ satisfy $\mathfrak{su}(2)$ commutation relations.
	
For $\mathbb{V} = \mathbb{O}$, the section has seven components $\vec{X} = (X_1, \ldots, X_7)$.
A possible choice of structure constants\footnote{We use a slightly different convention than~\cite{He2020a}; the two choices differ by $e_7 \mapsto -e_7$.} for the octonions is given by the completely antisymmetric tensor $f_{ijk}$, $i,j,k \in \{1,\ldots 7\}$, that is $+1$ when $ijk$ is any of 123, 145, 176, 246, 257, 347, or 365.
Explicitly, the octonionic Nahm equations are given by
\begin{align*}
	&\nabla^A_y X_1 + [X_2, X_3] + [X_4, X_5] - [X_6, X_7] = 0, \\
	&\nabla^A_y X_2 - [X_1, X_3] + [X_4, X_6] + [X_5, X_7] = 0, \\
	&\nabla^A_y X_3 + [X_1, X_2] + [X_4, X_7] - [X_5, X_6] = 0, \\
	&\nabla^A_y X_4 - [X_1, X_5] - [X_2, X_6] - [X_3, X_7] = 0, \\
	&\nabla^A_y X_5 + [X_1, X_4] - [X_2, X_7] + [X_3, X_6] = 0, \\
	&\nabla^A_y X_6 + [X_1, X_7] + [X_2, X_4] - [X_3, X_5] = 0, \\
	&\nabla^A_y X_7 - [X_1, X_6] + [X_2, X_5] + [X_3, X_4] = 0.
\end{align*}
Note that solutions of the quaternionic Nahm equations give rise to solutions of the octonionic Nahm equations:
If $(A,X_1,X_2,X_3)$ is a solution of the quaternionic equations, then $(A,X_1,X_2,X_3,0,0,0,0)$ is a solution of the octonionic equations.
For more about the moduli space of the octonionic Nahm equations, see~\cite{He2020a}.

We will also come across a twisted version of the octonionic equations.
To explain this, it is convenient to first rename components as $\vec{X} = (X_1, X_2, X_3, Y, Z_1, Z_2, Z_3)$.
We can then express the octonionic Nahm equations in terms of $\epsilon_{ijk}$ with $i,j,k \in \{1,2,3\}$ and $\epsilon_{123} = 1$ as
\begin{align*}
\begin{split}
	&\nabla^A_y X_i + [Y, Z_i] + \epsilon_{ijk} \bigl( \tfrac12 [X_j, X_k] - \tfrac12 [Z_j, Z_k] \bigr) = 0, \\
	&\nabla^A_y Y - [X_i,Z_i] = 0 ,\\
	&\nabla^A_y Z_i - [Y, X_i] - \epsilon_{ijk} [X_j, Z_k] = 0.
\end{split}
\end{align*}
In this notation the twisted equations arise from mixing the terms that appear in the last column by a rotation of angle $\beta \in [0,\pi]$, as follows:
\begin{align*}
	\begin{split}
	&\nabla^A_y X_i + [Y, Z_i] + \epsilon_{ijk} \bigl( \cos\beta \bigl( \tfrac12 [X_j, X_k] - \tfrac12 [Z_j, Z_k] \bigr) + \sin\beta [X_j, Z_k] \bigr) = 0 ,\\
	&\nabla^A_y Y - [X_i,Z_i] = 0, \\
	&\nabla^A_y Z_i - [Y, X_i] - \epsilon_{ijk} \bigl( -\sin\beta \bigl( \tfrac12 [X_j, X_k] - \tfrac12 [Z_j, Z_k] \bigr) + \cos\beta [X_j, Z_k] \bigr) = 0.
	\end{split}
\end{align*}

The twisted equations may be viewed as the result of deforming the cross product on $\operatorname{Im} \mathbb{O}$.
To explain this, first recall that by the Cayley--Dickson construction octonions can be viewed as a product of the quaternions equipped with a particular multiplication.
Explicitly, if we denote the basis of the quaternions by $(1,i,j,k)$, we can identify the basis elements of $\mathbb{O}$ with $1 = (1,0)$, $e_1 = (i,0)$, $e_2=(j,0)$, $e_3=(k,0)$, $h=(0,1)$, $f_1 =(0,i)$, $f_2=(0,j)$, $f_3=(0,k)$.
As real vector spaces, the imaginary octonions can then be identified with the direct sum $\operatorname{Im}\mathbb{O} = \operatorname{Im} \mathbb{H} \oplus \mathbb{R} \oplus \operatorname{Im} \mathbb{H}$.
Observe that this corresponds precisely to the previous renaming of components $\vec{X} = ( \vec{X}, Y, \vec{Z})$.
Octonionic multiplication can be summarized as follows: $1$ is the unit element, for any other basis element set $(x_i)^2= -1$, and for the remaining imaginary products specify $x\times y := \frac{1}{2}(xy - yx)$ to be given by $e_i \times f_i = -h$ and when $i\neq j$ by
\begin{align*}
\begin{split}
	&e_i \times e_j = \epsilon_{ijk} e_k,\qquad
	f_i \times f_j = - \epsilon_{ijk} e_k,\qquad
	e_i \times f_j = - \epsilon_{ijk} f_k,\\
	&h \times e_i = - f_i,\qquad
	h \times f_i = e_i.
\end{split}
\end{align*}
Clearly, octonionic multiplication does not preserve the decomposition as real vector spaces.
In~particular, $h$ maps one of the $\operatorname{Im} \mathbb{H}$ factors into the other.
This is used to deform the cross product by a rotation between the two factors of $\operatorname{Im} \mathbb{H}$.
More precisely, whenever the product has values in one of the $\operatorname{Im} \mathbb{H}$'s, we post-compose it with the left-action of $\cos\beta 1 + \sin\beta h$, which corresponds to adjusting the following multiplications:
\begin{align*}
&e_i \times_\beta e_j
 = ( \cos\beta 1 + \sin\beta h ) (e_i \times e_j)
		 = \epsilon_{ijk} ( \cos\beta e_k - \sin\beta f_{k} ),\\
	&f_{i} \times_\beta f_{j}
 = ( \cos\beta 1 + \sin\beta h ) (f_i \times f_j)
	 = - \epsilon_{ijk} ( \cos\beta e_{k} - \sin\beta f_{k} ) ,\\
	&e_{i} \times_\beta f_{j}
 = ( \cos\beta \; 1 + \sin\beta h ) (e_i \times f_j)
		 = - \epsilon_{ijk} ( \sin\beta e_k + \cos\beta f_{k} ).
\end{align*}
Here $i,j,k \in \{1,2,3\}$ and we sum over repeated indices.
All other products remain unchanged.

With this deformation the twisted version of the octonionic Nahm equations can be succinctly defined by the following equations.
\begin{Definition}[twisted octonionic Nahm equations] %\label{def:background-twisted-Nahm-equations}
\begin{align*}
	\nabla^A_y \vec{X} + \vec{X} \times_\beta \vec{X} = 0.
	%\label{eq:background-twisted-Nahm-equations-compact}
\end{align*}
More explicitly, writing $\vec{X} = (X_1,X_2,X_3,Y,Z_1,Z_2,Z_3)$, the $\beta$-twisted octonionic equations expand to
\begin{align}
	&\nabla^A_y X_i + [Y,Z_i] + \epsilon_{ijk} \bigl( \cos\beta \bigl(\tfrac12 [X_j, X_k] - \tfrac12 [Z_j, Z_k] \bigr) + \sin\beta [X_j, Z_k] \bigr)= 0 ,\nonumber\\
	&\nabla^A_y Z_i - [Y,X_i] - \epsilon_{ijk} \bigl( -\sin\beta \bigl(\tfrac12 [X_j, X_k] - \tfrac12 [Z_j, Z_k] \bigr) + \cos\beta [X_j, Z_k] \bigr)= 0 ,\nonumber\\
	&\nabla^A_y Y - [X_i,Z_i]= 0 	.
	\label{eq:background-twisted-Nahm-equations-expanded}
\end{align}
We will occasionally denote the associated differential operator by $\Nahm[\mathbb{O}, \beta](A, \vec{X})$.
\end{Definition}

The embedding of quaternionic solutions into the moduli space of octonionic solutions carries over to the twisted case by rotating $\vec{X}$ into $\vec{Z}$:
If $(A,\vec{X}) = (A,X_1,X_2,X_3)$ is a solution of the quaternionic Nahm equations, then
\begin{align*}
	&(A,\cos\beta \vec{X},0,\sin\beta \vec{X}^\tau)
:= (A,\cos\beta X_1,\cos\beta X_2,\cos\beta X_3,0, \sin\beta X_1,\sin\beta X_3,\sin\beta X_2)
\end{align*}
is a solution of the $\beta$-twisted octonionic Nahm equations.
Note that the identification of $\vec{X}$ with~$\vec{Z}$ involves an anti-cyclic permutation of components, denoted by $\tau = (132)$.

\section{Dimensional reductions of the Haydys--Witten equations}
\label{sec:background-dimensional-reduction}

As mentioned before, each of the equations presented in Section~\ref{sec:background-equations} can be viewed as a dimensional reduction of the Haydys--Witten equations.
Here we explicitly perform the reduction steps and explain how the various one-parameter families of equations arise naturally from Haydys' five-dimensional geometry.

Throughout this section, we denote the Haydys--Witten fields by $\bigl(\hat A , \hat B\bigr)$ to explicitly distinguish them from four-dimensional fields $A$, $B$ and $\phi$ and three-dimensional fields $\tilde A$ and $\tilde \phi$.
For convenience, let us repeat the Haydys--Witten equations, as defined in equation~\eqref{eq:background-Haydys--Witten-equations}.
\begin{align}
	&F_{\hat A}^+ - \sigma\bigl(\hat B, \hat B\bigr) - \nabla^{\hat A}_v \hat B= 0,\qquad
	\imath_v F_{\hat A} - \delta_{\hat A}^+ \hat B = 0.
	\label{eq:background-dim-red-haydys-witten-equations}
\end{align}
Dimensional reduction on $M^5 = \R^k \times Y^{5-k}$ assumes that the fields $\bigl(\hat A, \hat B\bigr)$ and all gauge transformations are independent of the position in $\R^k$.
Equivalently, if we write $u_i$, $i=1,\ldots, k$, for a set of orthogonal unit vector fields on $\R^k$, then $\hat A$ and $\hat B$ are invariant under the action of all~$u_i$'s.
Note that this only makes sense if the glancing angles $g(u_i,v)= \cos \theta_i$ between~$u_i$ and the distinguished vector field $v$ are constant, since otherwise the varying angles will introduce an~explicit position-dependence in $\R^k$.
Below, we discuss dimensional reduction for $k=1$, $2$ and~$4$, which leads to the $\theta$-Kapustin--Witten equations, $\theta$-twisted extended Bogomolny equations, and $\beta$-twisted octonionic Nahm equations, respectively.

\subsection{\texorpdfstring{$\mathbb{R}$-invariant solutions}{R-invariant solutions}}\label{sec:background-dimensional-reduction-KW}

Consider a product space $M^5 = \R_s \times W^4$ equipped with a product metric and denote the inclusion of $W^4$ at $s=0$ by ${i\colon W^4 \hookrightarrow \R_s \times W^4 }$.
Let $u:=\del_s$ be the unit vector field tangent to $\mathbb{R}_s$ and assume that ${g(u,v) = \cos \theta}$ is constant.
The angle $\theta$ can either be thought of as the glancing angle between $v$ and the direction of invariance $\R_s$, or equivalently as incidence angle of $v$ on the hypersurface $W^4$, see Figure~\ref{fig:background-incidence-angles-0}.

\begin{figure}[!ht]
\centering
\includegraphics[width=\textwidth, height=0.1\textheight, keepaspectratio]{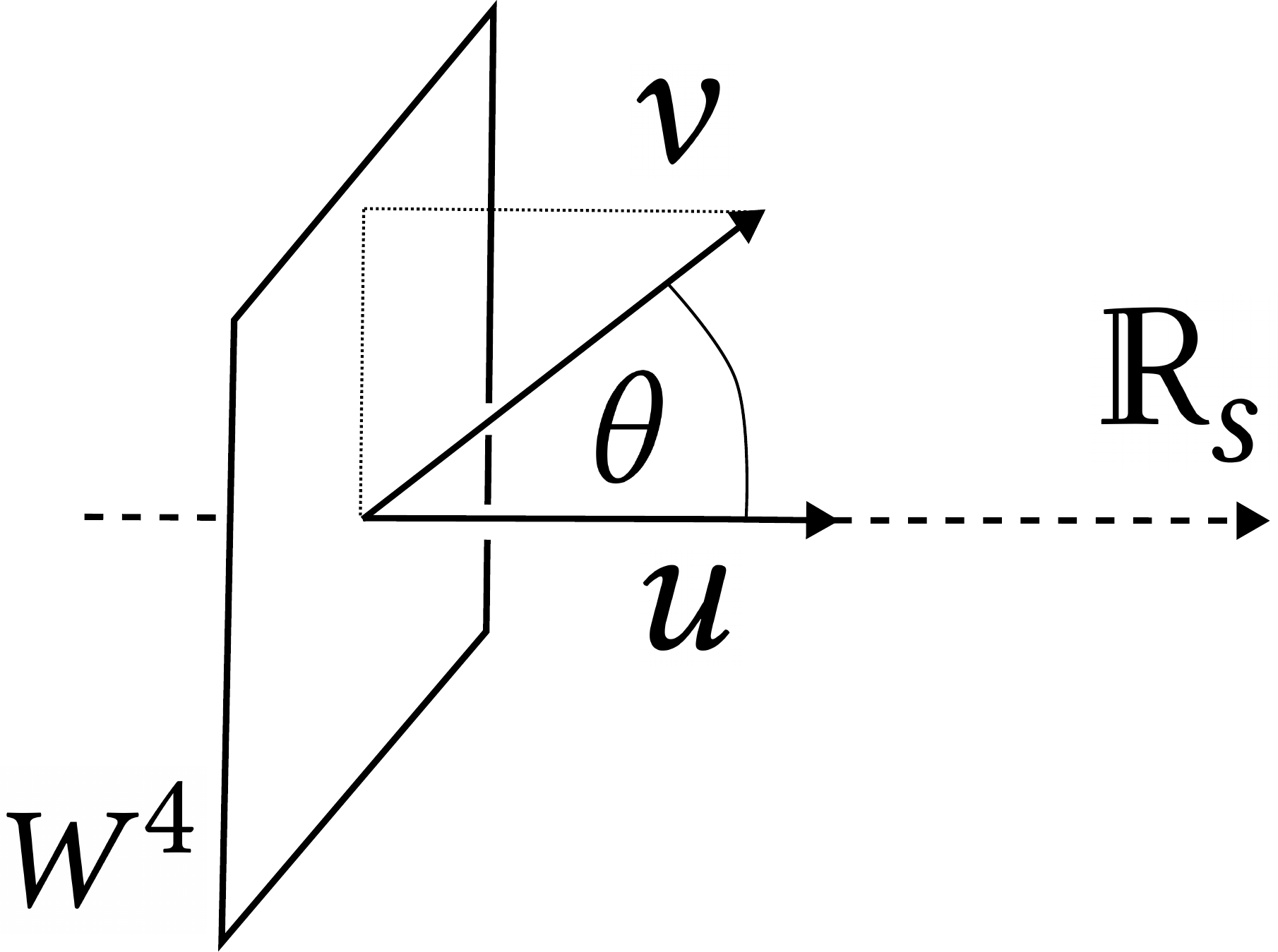}%
\caption{}%
\label{fig:background-incidence-angles-0}%
\end{figure}

As explained in Section~\ref{sec:background-Haydys--Witten-equations}, Haydys' geometric setup provides a lift of four-dimensional \mbox{(anti-)self-dual} two-forms into five dimensions.
Loosely speaking, this was achieved by identifying the orthogonal complement of $v$ with the tangent space of a four-manifold $W^4$ that foliates~$M^5$.
When $u$ and $v$ are aligned, dimensional reduction simply recovers the self-dual two-forms.
However, in general $u$ and $v$ are not necessarily aligned and the consequences of imposing $u$-invariance depend on the interplay between the orthogonal complements of $u$ and $v$.
To that end observe that $u$ and $v$ define a distribution $\Delta_{(u,v)} \subset TM$.
Since $u$ and $v$ are non-vanishing and $\theta$ is constant, $\Delta_{(u,v)}$ is regular, i.e., a vector bundle.
We now have to distinguish two cases: ${\theta \equiv 0 \pmod{\pi}}$ and $\theta\neq 0$.

If $\theta\equiv 0 \pmod{\pi}$ or equivalently if $u=\pm v$, the vector bundle $\Delta_{(u,v)}$ has rank one.
In this case, the orthogonal complements of $u$ and $v$ are identical, perhaps up to a reversal of orientation, and dimensional reduction is fairly straightforward.
The connection splits into $\hat A = \hat{A}_s {\rm d}s + A $, where $A = i^\ast \hat A$ is a connection over $W^4$.
The component $\hat{A}_s$ can be reinterpreted as an $\ad E$-valued function $C \in \Omega^0(\ad E)$, since gauge transformations are assumed to be $u$-invariant.\footnote{In general $\imath_u \hat{A}$ transforms as \smash{$\imath_u \hat{A} \mapsto g^{-1} \imath_u \hat{A} g + g^{-1} \nabla^{\hat{A}}_u g$} under gauge transformations.
If the gauge transformation $g$ is $u$-invariant the second term vanishes and $\imath_u \hat{A}$ is equivalent to an $\ad E$-valued function.}
Regarding~$\hat B$, recall from Section~\ref{sec:background-Haydys--Witten-equations} that there is an isomorphism \smash{$i^\ast \bigl( \Omega^2_{\del_s, +}\bigl(\R_s \times W^4\bigr) \bigr) \simeq \Omega^2_+\bigl(W^4\bigr)$}.
As~a~result \smash{${\bigl(\hat A, \hat B\bigr)}$} pull back to a triple of fields $(A, B, C)$ on the four-manifold $W^4$, where $A$ is a connection, $B$ is a self-dual two-form, and $C$ is an $\ad E$-valued function.
With these identifications in place, note that \smash{$\nabla^{\hat A}_v \hat{B} = [C\wedge B]$}, \smash{$\imath_v F_{\hat A} = {\rm d}_A C$}, and \smash{$\delta_{\hat A}^+ \hat{B} = {\rm d}_A^{\hodge_4} B$}.
Plugging this into the Haydys--Witten equations for \smash{$\bigl(\hat A,\hat B\bigr)$} immediately yields the Vafa--Witten equations~\eqref{eq:background-Vafa--Witten-equations} for the triple $(A,B,C)$.

If $\theta \not\equiv 0 \pmod{\pi}$, the vector bundle $\Delta_{(u,v)}$ has rank two.
The tangent bundle splits into orthogonal complements \smash{\raisebox{0.5pt}{$TM = \Delta_{(u,v)} \oplus \Delta_{(u,v)}^\perp$}} and this induces a similar decomposition for~one-forms as sections of \smash{$\bigl(\Delta_{(u,v)}\bigr)^\ast \oplus \bigl(\Delta_{(u,v)}^\perp\bigr)^\ast$}.
Moreover, since $\Delta_{(u,v)}$ admits the two linearly independent global sections $u$ and $v$, it is trivial.
It follows that there is a non-vanishing vector field $w$ that together with $u$ provides an orthonormal basis of \smash{$\Delta_{(u,v)}$}.
In this basis, $v$ is given as $v= \cos\theta u + \sin\theta w$ and we can define $v^\perp := -\sin\theta u + \cos\theta w$, which is the unique (up to a sign) unit vector field in $\Delta_{(u,v)}$ that is orthogonal to $v$.

Crucially, contraction with $v^\perp$ provides an isomorphism between $u$-invariant self-dual two-forms and sections of \smash{$\bigl(\Delta_{(u,v)}^\perp\bigr)^\ast$}.
One way to see this is to observe that locally the following two-forms provide a basis of \smash{$\Omega^2_{v,+}$} (cf.\ Section~\ref{sec:background-Haydys--Witten-equations}):
\begin{align*}
	e_i = \eta^\perp \wedge {\rm d}x^i + \tfrac12 \epsilon_{ijk} {\rm d}x^j \wedge {\rm d}x^k ,\qquad i = 1,2,3 ,
\end{align*}
where $\eta^\perp$ is the (global) one-form dual to $v^\perp$ and ${\rm d}x^i$ are (local) sections of \smash{$\bigl(\Delta_{(u,v)}^\perp\bigr)^\ast$}.
If~we locally write \smash{$\hat B = \sum_i \phi_i e_i$}, contraction with $v^\perp$ yields a one-form \smash{$\imath_{v^\perp} \hat B = \sum_i \phi_i {\rm d}x^i$}.
Using this~isomorphism, the $u$-invariant fields \smash{$\bigl(\hat A, \hat B\bigr)$} on $\R_s \times W^4$ can be reinterpreted as a connection~$A$ and an~$\ad E$-valued one-form $\phi$ on \smash{$W^4$}.

To make this more explicit, consider for the moment the example of Euclidean space $\R_s\times \R^4$ with Cartesian coordinates $(s,t,x^a)_{a=1,2,3}$, chosen in such a way that
\begin{align*}
	u=\del_s ,\qquad w=\del_t ,\qquad \text{and}\qquad v= \cos\theta \del_s + \sin\theta \del_t.
\end{align*}
The connection can be split into $\hat A= \hat{A}_s {\rm d}s + A$, where $A$ is the part of the connection on $\mathbb{R}^4$.
Furthermore, we can combine the remaining component $\hat{A}_s$ with the 3 components $\phi_i$ of $B$ into a one-form ${\phi = \hat{A}_s {\rm d}t + \sum_i \phi_i {\rm d}x^i}$ on $\mathbb{R}^4$.

The definitions in the Euclidean case are the local model underlying the following identifications for general manifolds $\R_s \times W^4$:
\begin{align*}
	A := i^\ast \hat{A} , \qquad
	\phi := \hat{A}_s \wedge w^\flat + \imath_{v^\perp} \hat{B}.
\end{align*}
\begin{Remark}
To shed some light on the definition of $\phi$, it might be helpful to directly compare the situations for $\theta = 0$ and $\theta\neq 0$.
In both cases, the pullback of $\hat{A}$ provides a connection on the pullback bundle $i^\ast E\to W^4$ and projects out the component $\hat{A}_s$, which on its own can be viewed as an $\ad E$-valued function.
The difference arises in the reinterpretation of $\hat{B}$.
If $\theta \neq 0$ and we pull back $\hat{B}$ to a two-form on $W^4$, we on the one hand project out any components that annihilate~$u$ and on the other hand won't obtain a generic (self-dual) two-form on $W^4$.
Instead, we consider the contraction $\imath_{v^\perp} \hat{B}$ as a section of \smash{$\bigl(\Delta_{(u,v)}^\perp\bigr)^\ast$}, which contains neither $u^\flat$- nor $w^\flat$-components.
The absence of terms proportional to $u^\flat$ ($={\rm d}s$) ensures that the pullback is injective, i.e., does not project out any components of $\imath_{v^\perp} \hat{B}$.
Using $\hat{A}_s$ as the missing $w^\flat$\nobreakdash-component, we then obtain a~generic one-form~$\phi$ on~$W^4$.
\end{Remark}

To determine the reduction of the Haydys--Witten equations in terms of $(A,\phi)$ on $W^4$, it is sufficient to investigate the differential equations~\eqref{eq:background-dim-red-haydys-witten-equations} in arbitrarily small neighbourhoods of a~point~$x$.
Hence, choose normal coordinates $(s,t,x^i)_{i=1,2,3}$ at $x$ such that $u=\del_s$, $w=\del_t$, and ${v=\cos\theta \del_s + \sin\theta \del_t}$.
Due to $u$-invariance and after setting ${\hat{A}_s=\phi_t}$, we find
\begin{align*}
 \nabla^{\hat A}_{v^\perp} = -\sin\theta [\phi_t, \,\cdot\,] + \cos\theta \nabla^A_t,\qquad
 \bigl(F_{\hat A}\bigr)_{s\mu} = - \nabla^{A}_\mu \phi_t.
\end{align*}
The second of the Haydys--Witten equations~\eqref{eq:background-dim-red-haydys-witten-equations} thus becomes
\begin{align*}
	0 ={}& \imath_v F_{\hat A} - \delta_{\hat A}^+ \hat{B} \\
	={}& ( \cos\theta \imath_{\del_s} + \sin\theta \imath_{\del_t}) F_{\hat A} + \Bigl(\nabla^{\hat A}_{v^\perp} \imath_{v^\perp} + \sum \nabla^{\hat A}_i \imath_{\del_i} \Bigr) \hat{B} \\
	={}& \Biggl(- \nabla^{A}_t \phi_t - \sum_{i=1}^3 \nabla^{A}_i \phi_i \Biggr) \eta^\perp \\
	 &{}{+}\, \sum_{(ijk)} \bigl( \sin\theta ( F_{ti} - [\phi_t,\phi_i]) + \cos\theta\bigl( \nabla^{A}_t\phi_i - \nabla^{A}_i \phi_t \bigr) - \bigl( \nabla^{A}_j \phi_k + \nabla^{A}_k \phi_j \bigr) \bigr) {\rm d}x^i,
\end{align*}
where the sum in the last line is over cyclic permutations of $(123)$.
The $\eta^\perp$-component of this equation states ${\rm d}_{A}^{\hodge_4} \phi = 0$, which is the $\theta$-independent part of the Kapustin--Witten equations~\eqref{eq:background-Kapustin--Witten-equations}.
The ${\rm d}x^i$-components imply vanishing of the $ti$-components of the Kapustin--Witten equations as given in equation~\eqref{eq:background-Kapustin--Witten-equations-combined}.

For the evaluation of the first of the Haydys--Witten equations~\eqref{eq:background-dim-red-haydys-witten-equations}, we expand
\begin{align*}
	{F_{\hat A}^+ = \sum_{(ijk)} (-\sin\theta F_{si} + \cos\theta F_{ti} + F_{jk}) \bigl(\eta^\perp \wedge {\rm d}x^i + {\rm d}x^j \wedge {\rm d}x^k\bigr)}
\end{align*}
and similarly for ${\sigma\bigl(\hat B, \hat B\bigr) = \sum_{(ijk)} [\phi_j,\phi_k] \bigl(\eta^\perp\wedge {\rm d}x^i + {\rm d}x^j\wedge {\rm d}x^k\bigr)}$.
The equations then become
\begin{align*}
	0 &= F_{\hat A}^+ - \sigma\bigl(\hat B, \hat B\bigr) - \nabla^{\hat A}_v \hat{B} \\
	&= \sum_{(ijk)}\bigl(
		\sin\theta \nabla^A_i \phi_t + \cos\theta F_{ti} + F_{jk} - [\phi_j,\phi_k]- \cos\theta [\phi_t, \phi_i] - \sin\theta \nabla^A_t \phi_i
	\bigr) \\
	&\hphantom{= \sum_{(ijk)}}\,
		\times \bigl(\eta^\perp\wedge {\rm d}x^i + {\rm d}x^j\wedge {\rm d}x^k\bigr).
\end{align*}
This implies that also the $ij$-components of the Kapustin--Witten equations~\eqref{eq:background-Kapustin--Witten-equations-combined} vanish.
To see this, multiply by $\sin\theta$ and use the ${\rm d}x^i$-component of the earlier equation to replace $\sin\theta (F_{ti} - [\phi_t, \phi_i])$.

In summary, the key result of this section is the following statement.
\begin{Proposition} \label{prop:background-dimensional-reduction-KW}
Let $M^5 = \R\times W^4$ equipped with a product metric and a non-vanishing unit vector field $v$.
Write $u$ for the unit vector field along $\mathbb{R}$ and assume $g(u,v) = \cos \theta$ is constant.
Let \smash{$\bigl(\hat A, \hat B\bigr)$} be Haydys--Witten fields on $M^5$, write $A=i^\ast \hat{A}$ for the pullback connection on $W^4$, and depending on the value of $\theta$ define fields on $W^4$ as follows:
\begin{align*}
	& \theta = 0\colon \
	B = i^\ast \hat B , \qquad
	C = \hat{A}_{s},
	\\
	& \theta \neq 0\colon \
	\phi = \hat{A}_{s} w^\flat + \imath_{v^\perp} \hat B.
\end{align*}
Then $u$-invariant Haydys--Witten equations for $\bigl(\hat A,\hat B\bigr)$ are either equivalent to the Vafa--Witten equations for $(A,B,C)$ if $\theta \equiv 0 \pmod{\pi}$, or to the $\theta$-Kapustin--Witten equations for $(A,\phi)$ otherwise,
\begin{align*}
	\HW[v] \bigl( \hat A, \hat B \bigr) \longrightsquigarrow{\R\text{-inv.}}
	\begin{cases}
		\VW(A, B, C), & \theta \equiv 0 \pmod{\pi} , \\
		\KW[\theta](A, \phi), & \text{otherwise}.
	\end{cases}
\end{align*}
\end{Proposition}
Let us stress that dimensional reduction is not continuous at $\theta=0$.
For general four-manifolds, the Vafa--Witten equations and $\theta=0$ version of the Kapustin--Witten equations are not equivalent.
From the perspective of the Haydys--Witten equations, it should be expected that there is a possibly non-trivial relation between solutions of these equations whenever the four-manifold admits a non-vanishing vector field.
This is well known for $W^4 = \R^4$, where the Vafa--Witten and $\theta=0$ Kapustin--Witten equations are equivalent by identifying $\phi = C {\rm d}x^0 + B_{0i} {\rm d}x^i$, but it has not yet been investigated for more general four-manifolds.

The dimensional reductions of the Haydys--Witten equations have previously been carried out for the cases $\theta = 0$ and $\pi/2$ by Witten~\cite{Witten2011}, and independently for $\theta=0$ and in slightly more generality by Haydys~\cite{Haydys2015}.
Concretely, Witten considered the case where the five-manifold is $M^5 = \mathbb{R}_s \times X^3 \times \mathbb{R}_y^+$ and $v=\del_y$.
On~the~one hand, he investigates $\R_s$-invariant solutions, i.e., dimensional reduction with respect to $u=\del_s$.
In this situation, the glancing angle is $\theta = \pi/2$ (or $3\pi/2$; the two cases differ only by a reversal of orientation).
Witten explains that setting $\phi = \hat{A}_s {\rm d}y + B_{si} {\rm d}x^i$, the $u$-invariant Haydys--Witten equations are equivalent to the $\theta=\frac{\pi}{2}$ version of the Kapustin--Witten equations.
This is in accordance with the general definition $\phi = \imath_{u} \hat A \wedge w^\flat + \imath_{v^\perp} \hat B$, because in the current situation $u=\del_s$, $w^\flat = {\rm d}y$, and $v^\perp = - \del_s$.
On~the~other hand, Witten also briefly inspected $\R_y^+$-invariant solutions, which provide non-trivial boundary conditions at $y\to\infty$.
Since in that case $u = \del_y$ coincides with $v$, the glancing angle is $\theta = 0$ and the equations reduce to the Vafa--Witten equations.
This was observed more generally by Haydys for $\R_y$-invariant solutions on $M^5 = \R_y \times W^4$ with $v=\del_y$~\cite[Section~4.1]{Haydys2015}.

Let us remark that Witten explains in great detail that the full Haydys--Witten equations on $\mathbb{R}_s \times X^3 \times \mathbb{R}_y^+$ represent antigradient flow equations (with respect to a conveniently chosen functional) that interpolate between $\theta=\pi/2$ Kapustin--Witten solutions at $s\to\pm\infty$.
Haydys similarly explains that the full Haydys--Witten equations on $\R_s \times W^4$ represent antigradient flow equations that interpolate between Vafa--Witten solutions at $s\to \pm\infty$.

Both of these statements can be viewed as special case of the more general fact that on $\R_s \times W^4$ the equations $\HW[v]\bigl(\hat A, \hat B\bigr)=0$ take the form of flow equations that interpolate between $\R_s$-invariant solutions at $s\to\pm\infty$.
In particular, for $\theta \neq 0, \pi/2$, the Haydys--Witten equations on~${\R_s \times W^4}$ are equivalent to the following equations:
\begin{align*}
	&\nabla^A_s A= - \imath_{\del_s} ( {\rm d}_A \phi - \hodge_4 ( \cos\theta ( F_A - [\phi \wedge \phi] ) + \sin\theta {\rm d}_A \phi ) ) ,\\
	&\nabla^A_s \phi= - \imath_{\del_s} ( F_A - \hodge_4 ( - \sin\theta ( F_A - [\phi \wedge \phi] ) + \cos\theta {\rm d}_A \phi ) ), \\
	&{\rm d}_A^{\hodge_4} \phi= 0.
\end{align*}

\subsection[R\^{}2-invariant solutions]{$\boldsymbol{\mathbb{R}^2}$-invariant solutions}
\label{sec:background-dimensional-reduction-EBE}

Consider a product space $M^5 = \R^2 \times X^3$ equipped with a product metric and denote by ${i\colon X^3 \hookrightarrow \R^2 \times X^3}$ inclusion at the origin of $\R^2$.
Let $s_1$ and $s_2$ be Cartesian coordinates on~$\R^2$ and~$u_1$,~$u_2$ the associated coordinate vector fields.
Assume that $g(u_1, v) = \cos \theta_1$ and $g(u_2,v) = \cos\theta_2$ are constant.
We are free to choose coordinates in such a way that $u_2$ and $v$ are orthogonal, fixing $\theta_2 = \pi/2$, see Figure~\ref{fig:background-incidence-angles-1}.
The dimensional reduction only depends on the remaining parameter $\theta := \theta_1$, which is the glancing angle between $v$ and $\R^2$.

\begin{figure}[!ht]
\centering
\includegraphics[width=\textwidth, height=0.1\textheight, keepaspectratio]{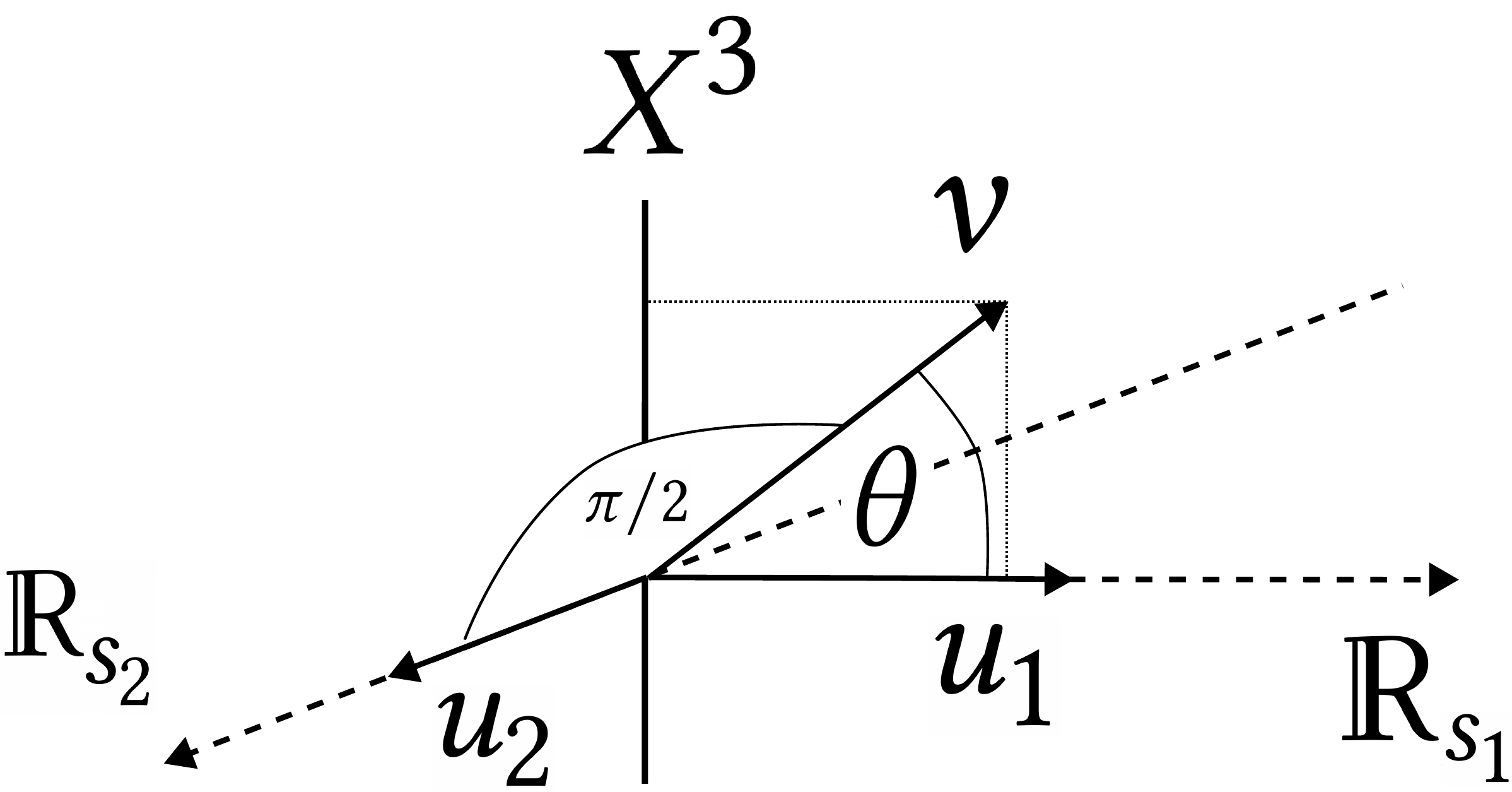}%
\caption{}%
\label{fig:background-incidence-angles-1}%
\end{figure}

Instead of imposing $\R^2$-invariance from scratch, we may proceed by iteration: First utilize the results of Section~\ref{sec:background-dimensional-reduction-KW} to determine the dimensional reduction along one of the directions and afterwards additionally demand invariance in the second direction.
Due to the results of the previous section it is clear that we need to distinguish between the cases ${\theta \equiv 0}$ and~${\theta \not\equiv 0 \pmod{\pi}}$.

In the case where $\theta = 0$ (the case $\theta=\pi$ follows from this by a reversal of orientation), we start by imposing $u_2$-invariance.
By Proposition~\ref{prop:background-dimensional-reduction-KW}, this leads to a pair $(A, \phi)$ on $\R_{s_1} \times X^3$ that satisfies the $\theta_2 = \pi/2$ version of the Kapustin--Witten equations.
The Higgs field is given by \smash{$\phi = \hat{A}_{s_2} {\rm d}s_1 - \imath_{u_2} \hat B$}, while $A$ is the pullback connection.
In the case where $\theta \neq 0$, it is more convenient to first consider $u_1$-invariance, which leads to the $\theta$-Kapustin--Witten equations for a (different) pair $(A,\phi)$.
The Higgs field is given by \smash{$\phi = \hat{A}_{s_1} w^\flat + \imath_{v^\perp} \hat B$}, where $w$ and $v^\perp$ are the sections of $\Delta_{(u_1,v)}$ that were introduced in Section~\ref{sec:background-dimensional-reduction-KW}.

It is helpful to have a closer look at the structure of \smash{$\Omega^2_{v,+}\bigl(\R^2 \times X^3\bigr)$}, to simplify the subsequent dimensional reduction along $u_2$.
Observe that $w$ is a non-vanishing vector field that is orthogonal to both $u_1$ and $u_2$, i.e., it is the pushforward of a non-vanishing vector field on $X^3$.
Let us generalize the notation from the previous section and denote by \smash{$\Delta_{(u_1,u_2,v)}$} the regular distribution spanned by the three vector fields $u_1$, $u_2$, and $v$.
If $\theta \neq 0$, this distribution is a~trivial subbundle of $TM^5$ of rank 3 that admits an orthonormal basis of sections $(u_1,u_2,w)$.
The orthogonal complement \smash{$\Delta_{(u_1,u_2,v)}^\perp$} is a rank 2 subbundle of $TX^3$ orthogonal to $w$.
It~\mbox{follows} that the tangent space $TX^3$ splits into a trivial one-dimensional part, spanned by $w$, and a~two-dimensional part that we denote \smash{$\Delta_{(u_1,u_2,v)}^\perp$}.
This provides a splitting of $\Omega^1\bigl(\R_{s_2} \times X^3\bigr)$ into sections of
\[
C^\infty\bigl(M^5\bigr) {\rm d}s_2 \oplus C^\infty\bigl(M^5\bigr) w^\flat \oplus \bigl(\Delta_{(u_1,u_2,v)}^\perp\bigr)^\ast,
\]
 which in turn induces the existence of a~global section \smash{$e_1 := \thalf (1 + T_\eta) \bigl(\eta^\perp \wedge {\rm d}s_2\bigr)$} of \smash{$\Omega^2_{v,+}$}.
As a consequence, $\hat{B}$ splits globally into
$\hat{B}= \phi_1 e_1 + \varphi$.
In the usual local basis, the two-form $\varphi$ is given by $\varphi = \phi_2 e_2 + \phi_3 e_3$.
With respect to this expression of $\hat B$ and the splitting of $\Omega^1\bigl(\R_{s_2} \times X^3\bigr)$, the Higgs field is defined as
\[\phi = \phi_1 {\rm d}s_2 + \hat{A}_{s_1} w^\flat + \imath_{v^\perp} \varphi.\]
In local coordinates \smash{$\bigl(s_2, x^i\bigr)_{i=1,2,3}$} of $\R_{s_2} \times X^3$, this reduces to
\[
\phi = \phi_1 {\rm d}s_2 + \hat{A}_{s_1} {\rm d}x^1 + \phi_2 {\rm d}x^2 + \phi_3 {\rm d}x^3.
\]

Regardless of the value of $\theta$, we are now ready to perform dimensional reduction along the second direction.
In either case, denote the remaining direction of invariance by $s$, i.e., $s=s_1$ if $\theta = 0$ and $s=s_2$ if $\theta\neq 0$.
We are in the situation of either the $\pi/2$- or $\theta$-version of the Kapustin--Witten equations for $(A,\phi)$ on $\R_s \times X^3$.
Imposing invariance in the second direction thus corresponds to a fairly straightforward dimensional reduction of the Kapustin--Witten equations:
Regardless of the five-dimensional origin of $\phi$ and $A$ they split on $\R_s \times X^3$ as~${\phi = c_1 {\rm d}s + \tilde \phi}$ and ${A = c_2 {\rm d}s + \tilde A}$.
As usual, we view the ${\rm d}s$ components $c_1$ and $c_2$ as $\ad E$-valued functions over $X^3$.
Choose an orientation, say ${\rm d}s \wedge \mu_{X^3}$, and determine the individual terms in the Kapustin--Witten equations~\eqref{eq:background-Kapustin--Witten-equations-combined} in terms of \smash{$\bigl(\tilde A, \tilde \phi, c_1, c_2\bigr)$}, whilst dropping any derivatives in the direction of $s$:
\begin{align*}
	&F_A = \quad {\rm d}_{\tilde A} c_2 \wedge {\rm d}s + F_{\tilde A}, \\
	&\tfrac12 [\phi\wedge\phi] = - \bigl[c_1, \tilde \phi\bigr] \wedge {\rm d}s + \tfrac12 \bigl[\tilde \phi \wedge \tilde \phi\bigr] ,\\
	&{\rm d}_A \phi = - \bigl(\bigl[c_2 , \tilde \phi\bigr] - {\rm d}_{\tilde A} c_1\bigr) \wedge {\rm d}s + {\rm d}_{\tilde A} \tilde\phi ,\\
	&\hodge_4 {\rm d}_A \phi = \hodge_3 {\rm d}_{\tilde A} \tilde\phi \wedge {\rm d}s + \hodge_3 \bigl( \bigl[c_2 , \tilde \phi\bigr] - {\rm d}_{\tilde A} c_1\bigr).
	%\label{eq:background-dim-red-kapustin-witten-equations-individual-terms}
\end{align*}
Plugging these expressions into~\eqref{eq:background-Kapustin--Witten-equations-combined} yields the first two lines of the TEBE~\eqref{eq:background-TEBE}.
The third equation of the TEBE follows from the remaining constraint \smash{$0 = {\rm d}_A^{\hodge_4} \phi = -[c_1,c_2] + {\rm d}_{\tilde A}^{\hodge_3} \tilde\phi$}.

\begin{Proposition} \label{prop:background-dimensional-reduction-EBE}
Let $M^5 = \R^2\times X^3$ equipped with a product metric and a non-vanishing unit vector field $v$.
Write $u_i$, $i=1,2$, for Cartesian vector fields on $\mathbb{R}^2$, chosen such that $g(u_1, v) = \cos \theta$ and $g(u_2,v)=0$, and assume both angles are constant.
Let \smash{$\bigl(\hat A,\hat B\bigr)$} be Haydys--Witten fields on $M^5$, write $\tilde A=i^\ast \hat{A}$ for the pullback connection on $X^3$, and depending on the value of $\theta$ define fields on $X^3$ as follows:
\begin{align*}
	&\theta = 0\colon\
	\tilde \phi = - \imath_{u_2} \hat B ,\qquad
	c_1 = \hat{A}_{s_2} ,\qquad
	c_2 = \hat{A}_{s_1},
	\\
	&\theta \neq 0\colon\
	\tilde \phi = \hat{A}_{s_1} w^\flat + \imath_{v^\perp} \varphi ,\qquad
	c_1 = \phi_1 , \qquad
	c_2 = \hat{A}_{s_2}.
\end{align*}
Then $(u_1,u_2)$-invariant Haydys--Witten equations for \smash{$\bigl(\hat A,\hat B\bigr)$} are equivalent to the EBE if $\theta = 0$ or to the $\theta$-twisted extended Bogomolny equations for \smash{$\bigl(\tilde A, \tilde \phi, c_1, c_2\bigr)$},
\begin{align*}
	\HW[v]\bigl(\hat A, \hat B\bigr) \longrightsquigarrow{\R^2\text{\rm -inv.}}
	\begin{cases}
		\EBE\bigl( \tilde A, \tilde\phi, c_1, c_2\bigr), & \theta \equiv 0 \pmod{\pi} , \\
		\TEBE[\theta]\bigl(\tilde A, \tilde\phi, c_1, c_2\bigr), & \text{otherwise}.
	\end{cases}
\end{align*}
\end{Proposition}

The dimensional reduction of the Haydys--Witten equations to three-manifolds inherits the discontinuity at $\theta=0$ that is already present in the reduction to four-manifolds.
In particular, in the limit $\theta \to 0$ dimensional reduction does not lead to the $\theta=0$ version of the TEBE, but instead to the ``untwisted'' $\pi/2$ version.
As before, this behaviour is encoded in the rank of the regular distribution $\Delta_{(u_1,u_2,v)}$ spanned by $u_1$, $u_2$ and $v$.
If ${\theta \equiv 0 \pmod{\pi}}$, i.e., if $v$ is orthogonal to $X^3$, $\Delta_{(u_1,u_2,v)}$ is of rank two and dimensional reduction leads to the (untwisted) EBE.
If $\theta \not\equiv 0 \pmod{\pi}$, the distribution \smash{$\Delta_{(u_1,u_2,v)}$} has rank three, there exists a non-vanishing vector field $w$ on $X^3$, and dimensional reduction produces the $\theta$-TEBE.

Another special situation arises when $v$ is parallel to $X^3$, since then dimensional reduction leads to the $\pi/2$-TEBE, which we recall are just the (untwisted) EBE.
However, the existence of the vector field $w$ provides additional structure that allows us to continuously deform the $\pi/2$-TEBE to generic $\theta$-TEBE by rotating $v \mapsto v=\cos \theta u_1 + \sin\theta w$.
Such a continuous deformation does not exist if the EBE arise from a dimensional reduction for which $\Delta_{(u_1,u_2,v)}$ is of rank 2.
In that case any small deformation of $v\mapsto v+\epsilon w$, lifting $v$ off the plane spanned by $u_1$ and $u_2$, leads to a discontinuous jump from the $\pi/2$-TEBE to $\theta_\epsilon$-TEBE, where $\theta_\epsilon$ is the corresponding (small) glancing angle.

The idea of deforming (or twisting) the EBE away from $\theta=\pi/2$ is used to great effect by Gaiotto and Witten in~\cite{Gaiotto2012a}.
In their setup, they consider the dimensional reduction of the Haydys--Witten equations from
\begin{align*}
	M^5 = \R_s \times \R_t \times \Sigma \times \R^+_y \to \Sigma \times \R^+_y = X^3.
\end{align*}
The three vector fields of interest are $u_1=\del_s$, $u_2=\del_t$ and $w=\del_y$.
When $v=\del_y$, dimensional reduction results in the EBE.
However, we are in the situation where $\Delta_{(u_1,u_2,v)}$ is of rank three and we can deform the EBE equations away from $\theta=\pi/2$, e.g., by considering ${v = \cos\theta u_1 + \sin\theta w}$.

\subsection[R\^{}4-invariant solutions]{$\boldsymbol{\mathbb{R}^4}$-invariant solutions}
%\label{sec:background-dimensional-reduction-Nahm}

Consider now the case $M^5 = \R^4 \times I$, where $I$ is a real interval, and let $\del_y$ be the coordinate vector field along $I$.
Assume Haydys' preferred vector field $v$ is such that the incidence angle between~$v$ and $\R^4$ -- determined by $g(v,\del_y) = \cos\beta$ -- is constant.
On $\R^4$ fix Cartesian coordinates $(s,x^i)$, $i=1,2,3$, where $\del_s$ is the vector field that satisfies $g(v,\del_s) = \sin \beta$, while $g(v, \del_i) = 0$.
In these coordinates $v= \sin\beta \del_s + \cos\beta \del_y$, see Figure~\ref{fig:background-incidence-angles-2}.
Notice that in the current situation $\beta$ is the incidence angle between $v$ and the hyperplane of invariant directions.
This is in contrast to the preceding discussions, where it was more convenient to use the glancing angle $\theta = \pi/2 - \beta$.

\begin{figure}[!ht]
\centering
\includegraphics[width=\textwidth, height=0.1\textheight, keepaspectratio]{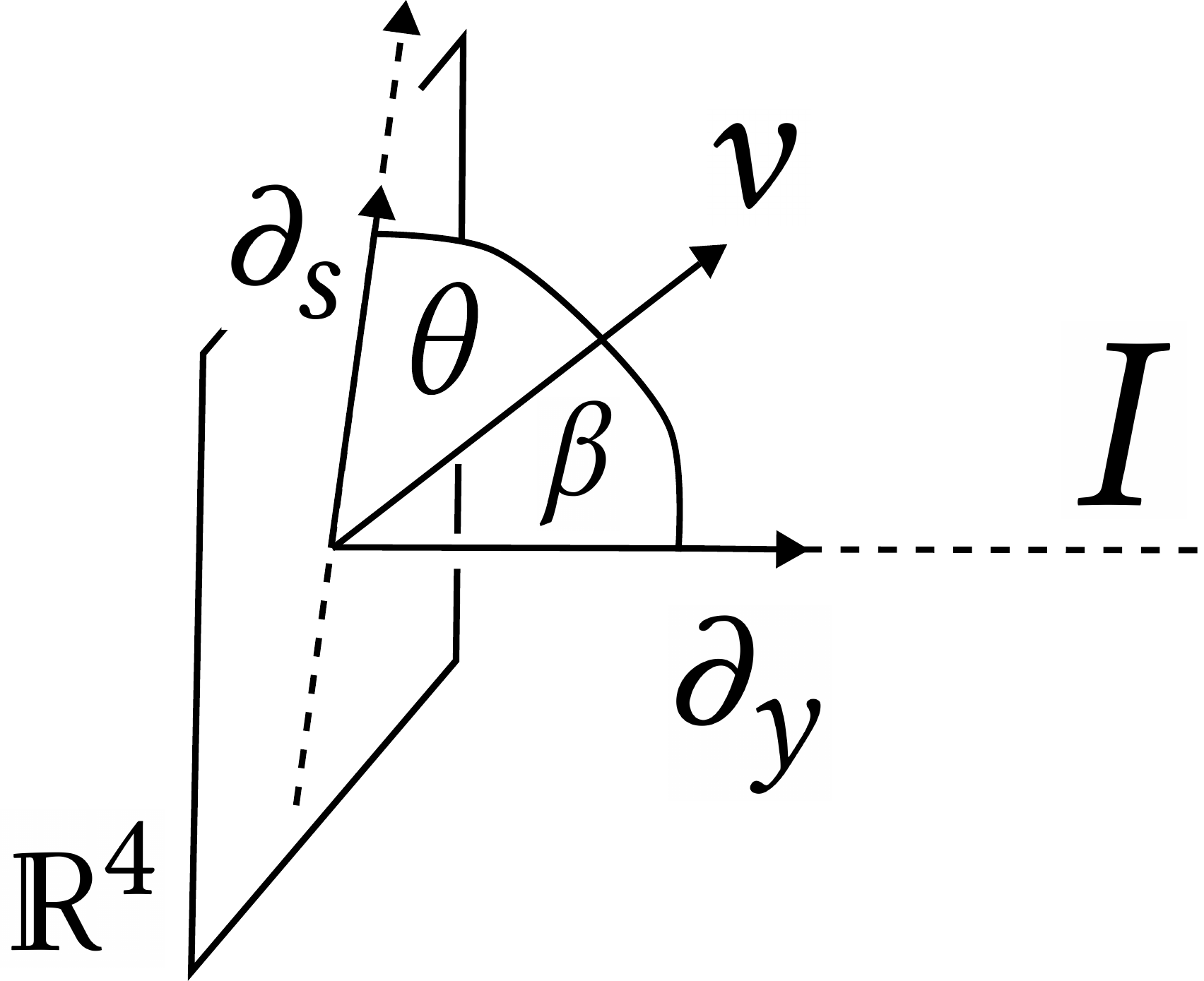}%
\caption{}\label{fig:background-incidence-angles-2}%
\end{figure}

Write \smash{$\hat A = A_s {\rm d}s + \sum_i A_i {\rm d}x^i + A_{y} {\rm d}y $} and \smash{$\hat B= \sum_i \phi_i e_i$} with \smash{$e_i = \thalf(1+T_\eta)\bigl(\eta^\perp \wedge {\rm d}x^i\bigr)$}.
Under~the assumption that \smash{$\hat A$} and \smash{$\hat B$} are independent of $\R^4$, the components $A_s$, $A_i$ and $\phi_i$, $i=1,2,3$, become a collection of seven $\mathfrak{g}$-valued functions on $I$, while $A = A_y {\rm d}y$ provides a~connection over $I$.
As a result dimensional reduction of the Haydys--Witten equations is relatively straight-forward.
The $i$-th component \big(with respect to the basis $\{e_1,e_2,e_3\}$ of $\Omega^2_{v,+}$\big) of the Haydys--Witten equations~\eqref{eq:background-Haydys--Witten-equations} is given by
\begin{align*}
	0&= \bigl( F_{\hat A}^+ - \sigma\bigl(\hat B,\hat B\bigr) - \nabla^{\hat A}_v \phi \bigr)_i \\
	&= \bigl( \cos\beta F_{si} - \sin\beta F_{yi} + \tfrac12 \epsilon_{ijk} F_{jk} \bigr) - \tfrac12 \epsilon_{ijk} [\phi_j,\phi_k] - \cos\beta \nabla^A_y \phi_i - \sin\beta \nabla^A_s \phi_i\\
	&= - \sin\beta \bigl( \nabla^A_y A_i + [A_s, \phi_i] \bigr) - \cos\beta \bigl( \nabla^A_y \phi_i - [A_s, A_i] \bigr) - \frac12 \epsilon_{ijk} ( [\phi_j,\phi_k] - [A_j, A_k] ).
\end{align*}
Meanwhile, the second of the Haydys--Witten equations becomes
\begin{align*}
	0={}& \imath_v F_{\hat A} - \delta_{\hat A}^+ {\hat B}
	= (\cos\beta \imath_{\del_y} + \sin\beta \imath_{\del_s} ) F_{\hat A}
	+ \Big(\nabla^{\hat A}_{s} \imath_{s} + \nabla^{\hat A}_{y} \imath_{y} + \sum \nabla^{\hat A}_i \imath_i \Big) B \\
	={}& \bigl( \nabla^A_y A_s + [A_i, \phi_i] \bigr) \eta^\perp
		\nonumber \\
	&{+} \bigl( \cos\beta \bigl( \nabla^A_y A_i + [A_s,\phi_i] \bigr)- \sin\beta \bigl( \nabla^A_{y} \phi_i - [A_s, A_i] \bigr) - \epsilon_{ijk} [A_j , \phi_k] \bigr) {\rm d}x^i.
\end{align*}
The component proportional to $\eta^\perp$ is exactly the third of the twisted octonionic Nahm equations~\eqref{eq:background-twisted-Nahm-equations-expanded}, while the remaining equations are just a linear combination of the first two lines of these equations.
To see this, multiply the $i$-th components of the two equations with $\sin\beta$ and $\cos\beta$, respectively, and add them up (and vice versa with subsequent subtraction).
More explicitly, the reduced Haydys--Witten equations are thus rearranged to read
\begin{align*}
	&\nabla^A_y \phi_i - [A_s, A_i] + \epsilon_{ijk} \bigl( \cos\beta \tfrac{1}{2} ([\phi_j,\phi_k] - [A_j,A_k]) + \sin\beta [\phi_j, A_k] \bigr)= 0, \\
	&\nabla^A_y A_i + [A_s, \phi_i] - \epsilon_{ijk} \bigl( - \sin\beta \tfrac{1}{2} ([\phi_j,\phi_k] - [A_j,A_k]) + \cos\beta [\phi_j, A_k] \bigr)= 0, \\
	&\nabla^A_y A_s + [A_i, \phi_i]= 0.
\end{align*}
These are exactly the $\beta$-twisted octonionic Nahm equations for $\vec{X} = (\phi_1, \phi_2, \phi_3, -A_s, A_1,A_2,A_3 )$.
\begin{Proposition} \label{prop:background-dimensional-reduction-Nahm}
Let $M^5 = \R^4\times I$, where $I$ is a connected one-dimensional manifold, equipped with a product metric and a preferred non-vanishing unit vector field $v$.
Write $u$ for the unit vector field on $I$.
Assume $g(u,v) = \cos\beta$ is constant and \smash{$\bigl(\hat A, \hat B\bigr)$} are invariant under translations in $\mathbb{R}^4$ and arranged into the pair $(A,\vec{X})$ as specified above.
Then the Haydys--Witten equations reduce to the $\beta$-twisted octonionic Nahm equations
\begin{align*}
	\HW[v]\bigl(\hat A, \hat B\bigr) \longrightsquigarrow{\R^4-\text{inv.}} \Nahm[\mathbb{O},\beta](A, \vec{X}).
\end{align*}
\end{Proposition}

\section{The Nahm pole boundary condition}
\label{sec:background-Nahm-pole-boundary-condition}

The Nahm pole boundary conditions with knot singularities play a fundamental role in the relation between Haydys--Witten theory and Khovanov homology.
They prescribe an asymptotic equivalence of the fields $(A,B)$ with a certain set of singular model solutions near the boundary.
Which model solutions to use depends on whether one is in the vicinity of a knot or not.

Recall from Section~\ref{sec:background-BPS-Kh-Homology} that in the five-dimensional setting a knot is supported on a two-dimensional surface $\Sigma_K$ inside the four-dimensional boundary of $M^5$; one direction is parallel to the longitude of the original, one-dimensional knot $K$, while the other direction arises from extending $K$ along the additional flow direction of Floer theory.
Typically $\Sigma_K$ arises in this way from either a compact knot or a collection of infinitely extended strands embedded in~$X^3$, so for all intents and purposes $\Sigma_K$ is either an embedding of~$\R\times S^1$ or a disjoint union of~$\R^2$'s.
Nevertheless, Nahm pole boundary conditions are defined for general embedded surfaces, including link cobordisms and knotted surfaces and this is how they are presented below.

The singular structure of the model solutions comes in two flavours.
First, denoting by~$y$ a~boundary-defining function, the fields diverge as $y^{-1}$.
Second, near the position of a knot they display a monopole-like behaviour associated to a singularity inside $\ad E$.

A careful analysis of the Nahm pole boundary conditions with knot singularities is described in great detail in~\cite{Mazzeo2014, Mazzeo2017}.
These articles make use of Melrose's $b$-calculus~\cite{Melrose1990}; or rather a variant of it that was introduced by Mazzeo~\cite{Mazzeo1991} and further developed by Mazzeo--Vertman~\cite{Mazzeo2013}.
In this context, it is natural to consider the geometric blowup of $M^5$ along $\Sigma_K$.
This is the manifold with corners, denoted by $\bigl[M^5; \Sigma_K\bigr]$, whose underlying set of points is the disjoint union of $M^5$ and the inward-pointing unit normal bundle of $\Sigma_K$.
Many analytic properties of differential operators and their solutions become more apparent when viewed as conormal distributions on the blowup.
While we will rely on this perspective in Section~\ref{sec:background-Nahm-pole-boundary-condition-elliptic-theory}, it will only feature implicitly in the definitions below.
As long as an explicit distinction between the original manifold and its blowup is irrelevant, we will simply denote the blowup by $M^5$, the original boundary component after removing $\Sigma_K$ by $\del_0 M^5$, and the newly introduced boundary at $\Sigma_K$ by $\del_K M^5$.

The definition of Nahm pole boundary conditions features one aspect of the blowup $\bigl[M^5; \Sigma_K\bigr]$.
Namely, the boundary conditions for the two types of boundary $\del_0 M$ and $\del_K M$ have individual descriptions.
Near $\del_0 M$ the Haydys--Witten pair $(A,B)$ is locally modeled on maximally symmetric, $\R^4$-invariant Nahm-poles on $\R^4\times\R^+_y$, while near $\del_K M$ the model solution is that of an~EBE-monopole on $\R^2 \times \R^+_y$.

Below, we first provide descriptions of these two distinct model solutions, followed by a~definition of the Nahm pole boundary conditions with knot singularities on general manifolds.
We~conclude with an investigation of the analytic properties of the Haydys--Witten equations with $\beta$-twisted Nahm pole boundary conditions, which has not previously appeared in the literature, but is readily available by combining various known results with the geometric interpretation of the twisting angle.

\subsection{Model solutions without knot singularity}
Consider Euclidean half-space $\R^4\times \R^+_y$ and denote Cartesian coordinates by $(s,x^i,y)_{i=1,2,3}$.
Assume that $v= \sin\beta \del_s + \cos\beta \del_y$, where the incidence angle $\beta$ between $v$ and the boundary is constant.
This geometry is invariant under translations parallel to the boundary, and accordingly, we demand that the model solutions are independent of the position in the boundary.
Proposition~\ref{prop:background-dimensional-reduction-Nahm} states that they must then be solutions of the $\beta$-deformed octonionic Nahm equations on $\R^+$.

Let $\rho\colon\mathfrak{su}(2) \to \mathfrak{g}$ be a Lie algebra homomorphism and denote by $(\mathfrak{t}_i)_{i=1,2,3}$ the image of the standard basis of $\mathfrak{su}(2)$ under $\rho$.
Furthermore, let us fix the anti-cyclic permutation $\tau = (132)$.
As mentioned in Section~\ref{sec:background-Nahm-equations}, one easily checks that the following is a solution of the $\beta$-twisted octonionic Nahm equations~\eqref{eq:background-twisted-Nahm-equations-expanded}:
\begin{align} \label{eq:background-Nahm-Pole-model-solutions}
	A_i = \frac{\sin\beta \mathfrak{t}_{\, \tau(i)}}{y}, \qquad
	\phi_i = \frac{\cos\beta \mathfrak{t}_i}{y}, \qquad
	A_s = A_y = 0.
\end{align}
Since the fields exhibit a pole at $y=0$, these are called ($\beta$-twisted or tilted) Nahm pole solutions.
We say the Nahm pole is \emph{regular}, if $\rho$ is a principal embedding in the sense of Kostant, i.e., if~the~commutant of $\mathfrak{su}(2)$ in $\mathfrak{g}$ is a Cartan subalgebra.
These solutions are the local model for the Nahm pole boundary condition near $\del_0 M$.

If $v=\del_y$ is orthogonal to the boundary, i.e., $\beta=0$, the gauge field $A$ vanishes and the model solution coincides with the standard, untwisted Nahm pole solution described in~\cite{Mazzeo2014, Witten2011}.
The twisted Nahm pole model solutions have previously appeared in the context of supersymmetric boundary conditions in~\cite[Section~4]{Gaiotto2009a} and their role in calculating the Jones polynomial via gauge theory was described in~\cite{Gaiotto2012a}.

\subsection{Model solutions with knot singularity}
Consider, again, Euclidean half-space $\R^4 \times \R^+_y$, but now assume that we additionally include a~'t~Hooft operator supported on a single, infinitely extended ``strand''.
More precisely, in five-dimensions this corresponds to the inclusion of a distinguished two-dimensional plane $\Sigma_K = \R^2$ in the boundary of $\R^4 \times \R^+$.
Denote Cartesian coordinates $\bigl(s,t,x^2,x^3,y\bigr)$, where $\Sigma_K$ extends along the $(s,t)$-plane.
For simplicity, assume that $v=\cos\theta \del_s + \sin\theta \del_y$.
To make contact with notation in Section~\ref{sec:background-dimensional-reduction-EBE}: the orthonormal coordinate vector fields parallel to $\Sigma_K$ coincide with $u_1= \del_{s}$ and $u_2 = \del_{t}$, while the unit normal vector $w=\del_y$ plays the role of a distinguished global vector field on the remaining three-manifold \smash{$X^3=\R_{x^2,x^3}^2\times \R^+_y$}.
For reasons that will become clear momentarily, we only consider $\theta \not\equiv 0 \pmod{\pi}$.

As a first step we demand that the model solutions are invariant with respect to translations along $\Sigma_K$.
Due to Proposition~\ref{prop:background-dimensional-reduction-EBE}, this means that the relevant model is a solution of the $\theta$-TEBE~\eqref{eq:background-TEBE} for three-dimensional fields \smash{$\bigl(\tilde A , \tilde\phi, c_1, c_2\bigr)$} on \smash{$\R^2_{x^2,x^3} \times \R^+_y$}.
Since $\theta\neq 0$, we are in the situation where the Haydys--Witten fields are expressed as $A = A_{s} {\rm d}s + A_{t} {\rm d}t + \tilde A$ and $B = \phi_1 e_1 + \varphi$, where the two-form $\varphi=\phi_2 e_2 + \phi_3 e_3$ is such that $\imath_{v^\perp} \varphi = \phi_2 {\rm d}x^2 + \phi_3 {\rm d}x^3$.
Disentangling the general definitions of Proposition~\ref{prop:background-dimensional-reduction-EBE} in this way specifies the three-dimensional fields in terms of the components of $(A,B)$ as follows:
\begin{align*}
	\tilde A = A_2 {\rm d}x^2 + A_3 {\rm d}x^3 + A_y {\rm d}y , \qquad
	\tilde \phi = \phi_2 {\rm d}x^2 + \phi_3 {\rm d}x^3 + A_{s} {\rm d}y , \qquad
	c_1 = \phi_1 , \qquad
	c_2 = A_{t}.
\end{align*}

For the case $\theta = \pi/2$ and $G={\rm SU}(2)$, the relevant model solutions of the $\pi/2$-TEBE (which are simply the untwisted EBE) were described by Witten~\cite{Witten2011}.
Introduce (hemi-)spherical coordinates $(R,\psi,\vartheta)$ on \smash{$\R^2_{x^2,x^3} \times \R^+_y \simeq [0,\infty)_R \times H^2_{\vartheta, \psi}$}, where $R\in[0,\infty)$, $\psi \in [0,\pi/2]$ and $\vartheta \in [0,2\pi]$ are given by
\begin{align*}
	R^2 = x_2^2 + x_3^2 + y^2, \qquad
	\cos \psi= \frac{y}{R}, \qquad
	\cos \vartheta = \frac{x_2}{\sqrt{x_2^2 + x_3^2}}.
\end{align*}
Let $(\mathfrak{t}_i)_{i=1,2,3}$ denote a standard basis of $\mathfrak{su}(2)$ and view $\mathfrak{t}_1$ as the generator of a fixed Cartan subalgebra.
Introduce, by abuse of notation, the $\mathfrak{sl}(2,\mathbb{C})$-valued function $\varphi = \phi_2 - {\rm i} \phi_3$ that conveniently combines the components $\phi_2$ and $\phi_3$ of the two-form $\varphi$ of the same name.
Similarly, denote by $E = \mathfrak{t}_2 - {\rm i}\mathfrak{t}_3$, $H=\mathfrak{t}_1$, and $F = \mathfrak{t}_2 + {\rm i}\mathfrak{t}_3$ the elements of an $\mathfrak{sl}(2,\mathbb{C})$-triple $(E,H,F)$.
Finally, express the components of the three-dimensional connection in terms of spherical coordinates $\tilde A = A_R {\rm d}R + A_\psi {\rm d}\psi + A_{\vartheta} {\rm d}\vartheta$.
The knot singularity solutions of the EBE with charge $\lambda \in \mathbb{Z}$ in terms of the components of $(A,B)$ are given by the following expressions:
\begin{align*}%\label{eq:background-Nahm-Pole-knot-singularity-model-solutions}
	&A_\vartheta= - (\lambda+1) \cos^2 \psi \frac{(1+\cos \psi)^{\lambda} - (1-\cos\psi)^{\lambda}}{(1+\cos\psi)^{\lambda+1}-(1-\cos\psi)^{\lambda+1}} H ,\\
	&\phi_1= -\frac{\lambda+1}{R} \frac{(1+\cos\psi)^{\lambda+1} + (1-\cos\psi)^{\lambda+1}}{(1+\cos\psi)^{\lambda+1} - (1-\cos\psi)^{\lambda+1}} H, \\
	&\varphi= \frac{(\lambda+1)}{R} \frac{\sin^\lambda \psi\exp({\rm i}\lambda\vartheta) }{(1+\cos\psi)^{\lambda+1} - (1-\cos\psi)^{\lambda+1}} E, \\
	&A_s= A_t = A_R = A_\psi = 0.
\end{align*}
These solutions exhibit a singular behaviour in several distinct ways.
First, the components of $B$ diverge with rate $R^{-1}$ as $R\to 0$.
Second, whenever $R\neq 0$ the solution is asymptotically equivalent to the (untwisted) Nahm pole solution as we approach the original boundary component $\psi \to \pi/2$.
This compatibility will be relevant in the definition of the boundary conditions on general manifolds.
Third, and most importantly, the solutions exhibit a monopole-like singularity.
This is characterized by a nontrivial monodromy of the connection as one moves around the origin in the $(x_2,x_3)$-plane.
The monodromy is supported by a non-trivial behaviour of $\varphi$, which picks up an extra factor\footnote{$\varphi$ remains single-valued, since for $G={\rm SU}(N)$ only integer values appear, while for $G={\rm SO}(N)$ half-integer values may appear, but then the Lie algebra is really $\mathfrak{psl}(2,\mathbb{C})$, where multiplication by $-1$ is modded out.} of ${\rm e}^{2\pi {\rm i} \lambda}$ when $\vartheta$ increases by $2\pi$, and vanishes on the half-line $\psi=0$ that sits over the origin in the $\bigl(x^2,x^3\bigr)$-plane.
An insightful way to view the behaviour of $\varphi$ is to observe that it takes values in the nilpotent cone $\mathcal{N} \subset \mathfrak{g}_{\mathbb{C}} = \mathfrak{sl}(2, \mathbb{C})$ and that it approaches the cone singularity of $\mathcal{N}$ for $\psi \to 0$.

Analogous solutions for the more general case $\theta = \pi/2$ and $G={\rm SU}(N)$ have been constructed by Mikhaylov in~\cite{Mikhaylov2012}.
In this case, the solution is labeled by an element of the co-character lattice $\lambda \in \Gamma_\text{ch}^\vee = \Hom(\C^{\times}, G_{\mathbb{C}})$, or equivalently, by a representation of the Langlands dual group $G_{\mathbb{C}}^\vee$.
From the physics perspective, $\lambda$ corresponds to a choice of magnetic charge.
In~this generalization the divergence of order $R^{-1}$ remains unchanged.
However, the singular behaviour within the nilpotent cone $\mathcal{N} \subset \mathfrak{sl}(N, \mathbb{C})$ has a richer structure, since the nilpotent cone has various singularities, see, for example,~\cite{Collingwood1993}.
The co-character $\lambda$ determines which of these singularities $\varphi$ approaches as $\psi \to 0$.

With regard to a generalization by twisting, Gaiotto and Witten describe in~\cite{Gaiotto2012a} that it~is sometimes beneficial to consider the $\theta$-TEBE for $\theta \neq \pi/2$.
They predicted that there are analogous knot singularity solutions also in these cases.
This has recently been confirmed for $G={\rm SU}(2)$ by Dimakis~\cite{Dimakis2022}, who utilized a continuity argument to prove the existence of knot singularity solutions for the $\theta$-TEBE for any $\theta \in (0,\pi)$.
The deformation of $\theta$ away from $\pi/2$ to $\pi/2 - \beta$ has the effect that the Nahm pole divergence of order $R^{-1}$ that appears in $B$ is rotated into $A$, in very much the same way as is the case for the twisted Nahm pole solution in~\eqref{eq:background-Nahm-Pole-model-solutions}.
As we have seen in Proposition~\ref{prop:background-dimensional-reduction-EBE}, the dimensional reduction of the Haydys--Witten equations for $\theta=0$ is generally not continuously connected to the $\theta\neq 0$ reductions, such that continuity methods break down at $\theta=0$.
It is not currently known if there are knot singularity models for~${\theta\equiv 0\pmod{\pi}}$.

Both of these generalizations are given by less explicit descriptions than the model solutions above and the exact formulas, where available, do not provide additional insights.
For our purposes, it will suffice to assume that model solutions exist for any $\theta \in (0,\pi)$ and $G={\rm SU}(N)$, and are labeled by a magnetic charge $\lambda \in \Gamma_\mathrm{char}^\vee$.

\subsection{From model solutions to boundary conditions}

Let $M^5$ be a Riemannian manifold with a single boundary component, together with a preferred non-vanishing unit vector field $v$ that approaches the boundary at a constant angle.
We take this to mean that there is a tubular neighbourhood of the boundary $U=\del M^5 \times [0,\epsilon)_y$, on which the incidence angle of $v$ with the boundary is constant: $\cos\beta = g(v, \del_y )$.
Furthermore, let~${\Sigma_K \subseteq \del M^5}$ be an embedded surface and assume that also the glancing angle of $v$ with $\Sigma_K$, given by ${\cos \theta = \min_{u \in T\Sigma_K} g(u,v)/\norm{u}}$, is constant on $\Sigma_K$.

To simplify the discussion, we impose as an additional assumption that the incidence angle of $v$ with $\del M$ and its glancing angle with $\Sigma_K$ are related by $\theta = \pi/2 - \beta$.
This means that the glancing angle between $v$ and $\Sigma_K$ is identical to the glancing angle between $v$ and the full boundary $\del M^5$.
In this case, there is a neighbourhood of $\del_K M$ where $v = \cos\theta u + \sin\theta \del_y$ for some non-vanishing unit vector field $u\in T\Sigma_K$, i.e., the projection of $v$ to the boundary is parallel to $\Sigma_K$.
This is the situation one encounters in the context of Khovanov homology.

As explained in the introductory paragraphs of the current section, we promote $M^5$ to the geometric blowup along $\Sigma_K$, such that there are two boundary components, $\del_0 M$ and $\del_K M$.
In the preceding sections, we have described the model solutions that shall describe the local behaviour of the fields at each of the two boundaries.
A complete boundary condition requires a~global specification of the fields on ${\del M = \del_0 M \sqcup \del_K M}$.
Since the model solutions diverge at the boundary, this corresponds to specifying the leading order behaviour on tubular neighbourhoods of $\del_0 M$ and $\del_K M$, respectively.
The descriptions on these neighbourhoods must of course be compatible on intersections.

We start by fixing the global boundary data on $\del_0 M$.
Consider a tubular neighbourhood $U = \del_0 M \times [0,\epsilon)_y$ on which $g(\del_{y} , v) = \cos \beta$ is constant.
Observe that the components of~$A$ and~$B$ in the Nahm pole model~\eqref{eq:background-Nahm-Pole-model-solutions} are given by the same expression, up to a relative rotation with respect to the incidence angle $\beta$.
For this reason, it is helpful to recall that there is a~relation between one-forms and self-dual two-forms on $U$.
Hence, note that whenever the incidence angle~$\beta \neq 0$,~$v$ induces a non-vanishing vector field $u$ parallel to $\del_0 M$.
This leads to a splitting of the tangent space \smash{$TU \simeq \Delta_{(u,v)} \oplus \Delta_{(u,v)}^\perp$}, where the orthogonal complement~\smash{$\Delta_{(u,v)}^\perp$} is a~vector bundle of rank three.
As in Section~\ref{sec:background-dimensional-reduction}, contraction with $v^\perp$ then provides an isomorphism between~$\Omega^2_{v,+}(U)$ and sections of \smash{$\bigl(\Delta_{(u,v)}^\perp\bigr)^\ast$}.
In coordinates \smash{$\bigl(s, x^i, y\bigr)_{i=1,2,3}$} where ${v=\sin\beta \del_s + \cos\beta \del_y }$, the isomorphism identifies $\sum \phi_i e_i \mapsto \sum \phi_i {\rm d}x^i$, where $e_i$ denotes the usual basis of $\Omega^2_{v,+}(U)$.
It follows that any $\ad E$-valued self-dual two-form over $U$ is equivalent to an~$\ad E$-valued one-form on a subbundle of $TU$,
\begin{align*}
	\Omega^2_{v,+}(U,\ad E) \simeq \Hom\bigl( \Delta_{(u,v)}^\perp , \ad E \bigr).
\end{align*}
To keep notation at a minimum, this identification will be used implicitly in the formulas below.

Regardless of the value of $\beta$, the three-dimensional cross product on $\Omega^2_{v,+}(U)$ (cf.\ Section~\ref{sec:background-Haydys--Witten-equations}) provides each fiber with the Lie algebra structure of $\mathfrak{su}(2)$.
Thus, at every point in the tubular neighbourhood $U$, any $\ad E$-valued self-dual two-form $\phi_\rho$ that satisfies $\phi_\rho - \sigma(\phi_\rho, \phi_\rho) = 0$ gives rise to a Lie algebra homomorphism $\rho\colon \mathfrak{su}(2) \to \mathfrak{g}$.
If we denote the image of the standard basis of $\mathfrak{su}(2)$ under $\rho$ by $(\mathfrak{t}_i)_{i=1,2,3}$, then \smash{$\phi_\rho = \sum_{i=1}^3 \mathfrak{t}_i e_i$} in the usual local basis.

Conversely, a smooth family of homomorphisms $\{\rho_p\colon \mathfrak{su}(2) \to \mathfrak{g} \}_{p\in U}$ determines a unique two-form $\phi_\rho \in \Omega^2_{v,+}(U,\ad E)$ that satisfies $\phi_\rho - \sigma(\phi_\rho, \phi_\rho) = 0$.
Moreover, $\{\rho_p\}$ also induces another two-form $\phi_\rho^\tau$, related to $\phi_\rho$ by a change of orientation of $\Omega^2_{v,+}(U)$ from~$(e_1, e_2, e_3)$ to~$(e_1, e_3, e_2)$.
In a local basis \smash{$\phi_\rho^\tau = \sum_i \mathfrak{t}_{\, \tau(i)} e_i$} where $\tau$ is the anti-cyclic permutation $(132)$ from earlier.
Since $\sigma(\cdot,\cdot)$ is defined with respect to the original orientation on $\Omega^2_{v,+}(U, \ad E)$, $\phi_\rho^\tau$~satisfies \smash{${\phi_\rho^\tau + \sigma\bigl(\phi_\rho^\tau , \phi_\rho^\tau \bigr) = 0}$}.

Hence, let $\rho\colon \mathfrak{su}(2) \to \ad E$ be a Lie algebra homomorphism and $\phi_\rho$ the associated two-form.
Consider the Haydys--Witten fields on $U$ that are given by
\begin{align*}
	A^{\rho, \beta} = \frac{\sin\beta \phi_\rho^\tau}{y}, \qquad
	B^{\rho, \beta} = \frac{\cos\beta \phi_\rho}{y}.
\end{align*}
Locally $\bigl(A^{\rho, \beta}, B^{\rho, \beta}\bigr)$ coincide with the Nahm pole solutions, perhaps up to conjugation in~$\mathfrak{g}$.
The main take-away is that the boundary data at $\del_0 M$ is fully determined by a choice of $y$-independent two-form \smash{$\phi_\rho \in \Omega^2_{v,+}(U, \ad E) \simeq \Hom\bigl( \Delta_{(u,v)}^\perp, \ad E \bigr)$} in a tubular neighbourhood $U$ of $\del_0 M$, that satisfies $\phi_\rho - \sigma(\phi_\rho, \phi_\rho)=0$.

Moving on to the boundary component $\del_K M$, denote by $V=\del_K M \times [0,\epsilon)_R$ a tubular neighbourhood.
$V$ is the product of $\Sigma_K$ and the filled hemisphere \smash{$H^2_{\psi,\vartheta} \times [0,\epsilon)_R$}, where the latter admits global coordinates $(\psi, \vartheta, R)$.
We wish to impose that at leading order $R^{-1}$ the behaviour of $(A,B)$ is described completely by the knot singularity model solution.
The only degree of freedom is the choice of $\mathfrak{su}(2)$ generators $\mathfrak{t}_i \in \mathfrak{g}$ at each point in $V$.
However, except for the points at $\psi=0$ or $R=0$, every $p\in V$ is also contained in the tubular neighbourhood~$U$ of~$\del_0 M$, where $\phi_\rho$ already determines a triple $(\mathfrak{t}_i)_{i=1,2,3}$.
Since the hemisphere at $R=0$ corresponds to a single point on the original manifold and the line $\psi=0$ is of codimension two, the two-form $\phi_\rho \in \Omega^2_{v,+}(U,\ad E)$ extends uniquely to all of~$V$.

It remains to note that since $\theta\neq 0$, the two-form decomposes on $V$ into ${\phi_\rho = (\phi_\rho)_1 e_1 + \varphi_\rho}$.
This splitting provides a natural distinction between $H$ and $E$ in the model solutions, by identifying the $(\phi_\rho)_1$ component with the Cartan element $H$.
Using this to replace the generators~$\mathfrak{t}_i$ in the knot singularity solutions by $(\phi_\rho)_i$, i.e., pointwise by the image of the induced map $\rho\colon \mathfrak{su}(2) \to \mathfrak{g}$, determines a unique field configuration \smash{$\bigl(A^{\lambda,\theta} , B^{\lambda,\theta}\bigr)$} of order \smash{$\mathcal{O}\bigl(R^{-1}\bigr)$} on all of $V$.

In the more general situation with $\theta \neq \pi/2 -\beta$, the same discussion goes through with minor modifications when identifying coordinates and field components over $U$ and $V$, respectively.
We can now state the definition of the regular Nahm pole boundary conditions.

\begin{Definition}[regular Nahm pole boundary conditions with knot singularities]
\label{def:background-Nahm-pole-boundary-condition}
Assume~$v$ approaches $\del_0 M$ at a constant incidence angle $\beta$ ($\neq \pi/2$) and has glancing angle $\theta$ ($\neq 0$) with $\Sigma_K$.
Let $\{ \rho_p\colon \mathfrak{su}(2) \to \mathfrak{g} \}$ be a smooth family of principal embeddings on a tubular neighbourhood of the boundary and $\phi_\rho$, $\phi_\rho^\tau$ the associated self-dual two-forms.
The Haydys--Witten pair $(A,B)$ satisfies the regular Nahm pole boundary conditions at $\del M$, with knot singularity of weight~$\lambda\in \Gamma_{\mathrm{char}}^\vee$ along $\Sigma_K$, if for some $\epsilon>0$
\begin{enumerate}\itemsep=0pt
	\item[(1)] near $\del_0 M$: $(A,B) = \bigl( A^{\rho, \beta}, B^{\rho, \beta}\bigr) + \mathcal{O}\bigl(y^{-1+\epsilon}\bigr)$,
	\item[(2)] near $\del_K M$: $(A,B) = \bigl( A^{\lambda,\theta}, B^{\lambda,\theta}\bigr) + \mathcal{O}\bigl(R^{-1+\epsilon}\bigr)$,
\end{enumerate}
and such that the leading orders are compatible at the corner $R=y=0$.
This means that in spherical coordinates, where $y = R \cos\psi$, the expansion is of product type
\[
(A,B) = \bigl(A^{\lambda,\theta}, B^{\lambda,\theta}\bigr) + \mathcal{O}\bigl( R^{-1+\epsilon} \cos\psi^{-1+\epsilon} \bigr).\]
\end{Definition}

\begin{Remark}
There is an analogous definition associated to non-regular embeddings $\rho\colon\! {\mathfrak{su}(2) \!\to\! \mathfrak{g}}$ (see \cite{Mazzeo2017}).
However, throughout this article we only consider the \emph{regular} Nahm pole boundary conditions and omit a discussion of this more general case.
\end{Remark}

\subsection{Elliptic theory of Nahm pole boundary conditions}
\label{sec:background-Nahm-pole-boundary-condition-elliptic-theory}

In this subsection, we summarize some relevant analytic properties of the Haydys--Witten equations.
The fundamental questions include under which conditions $\HW[v]$, acting on appropriate function spaces, is Fredholm and to analyze the regularity of solutions of $\HW[v](A,B) = f$.
These properties are controlled by the fact that the Haydys--Witten and Kapustin--Witten operators are elliptic.

As is common for gauge theoretic equations, the Haydys--Witten and Kapustin--Witten equations on their own are not elliptic ``on the nose''.
But they become elliptic after choosing a~reference connection $A^0$ and imposing additionally that the linearization of the gauge action vanishes.
Since ellipticity depends only on the principal symbol, we are free to add terms in subleading orders of derivatives.
In the context of Nahm pole boundary conditions it is convenient to include, in this way, the leading order term $B^{\mathrm{NP}}$ that captures the Nahm pole behaviour of~$B$.
The gauge fixing equation we use is
\begin{align}
	{\rm d}_{A^0}^{\hodge} \bigl(A - A^0 \bigr) + \sigma\bigl( B^{\mathrm{NP}} , B - B^{\mathrm{NP}} \bigr) = 0.
	\label{eq:background-Nahm-gauge-fixing}
\end{align}
From now on, we always assume that the Haydys--Witten (and Kapustin--Witten) equations include this equation.

On closed manifolds, standard elliptic theory provides answers to many of the relevant analytic questions.
On manifolds with boundary, however, these considerations are complicated by the choice of boundary conditions.
In particular, under the assumption of Nahm pole boundary conditions with knot singularities, the associated differential operators are known to be ``depth-two incomplete iterated edge (iie) operators''.
The study of such operators is part of the larger framework of geometric microlocal analysis and may be viewed as a variant of Melrose's $b$-calculus.
We refer to~\cite[Sections~8 and 9]{Mazzeo2017} for a very clear, if concise, account of the relevant ideas.
Also see~\cite{Mazzeo1991, Mazzeo2013} for a more detailed discussion of much of the relevant background.

The results we discuss here, as well as most of the necessary background, have previously been described in great detail in the context of the $\theta=\pi/2$ version of the Kapustin--Witten equations~\cite{Mazzeo2014, Mazzeo2017}.
In these articles, Mazzeo and Witten proved that the Nahm pole boundary conditions with knot singularities amend the Kapustin--Witten operator to an elliptic system.
At the heart of the analysis lies the determination of formal rates of growth for homogeneous solutions of the linearization of the Kapustin--Witten equations.
As is usually the case, ellipticity is accompanied by a regularity theorem, showing that these formal growth rates provide the building blocks for an asymptotic expansion of solutions of $\HW[v](A,B) = f$ near the boundaries of $[M;\Sigma_K]$.

In this context, regularity is described in terms of polyhomogeneous functions, which are defined by the existence of asymptotic expansions with respect to scale functions of the form $y^\alpha (\log y)^k$.
To make this more precise, let us call $\Delta \subset \C \times \N_0$ an \emph{indicial set} if it is a countable subset that is ``bounded from the left'' in $\C$ and contains only ``finite towers'' in $\N_0$.
In other words, for any $\alpha_0 \in \R$, there are only finitely many elements $(\alpha, k) \in \Delta$ with $\operatorname{Re} \alpha \leq \alpha_0$ and if $(\alpha, k) \in \Delta$ with $k>0$, then so is $(\alpha, k-1)$.
A function $f$ is polyhomogeneous at a submanifold $\{y = 0\}$ if there is an indicial set $\Delta$ such that \smash{$f \sim \sum_{(\alpha, k) \in \Delta} f_{\alpha, k} y^\alpha (\log y)^k$} as $y\to 0$, where the functions $f_{\alpha, k}$ are independent of $y$.

\begin{Theorem}[elliptic regularity, cf.~{\cite[Proposition~5.9]{Mazzeo2014} and~\cite[Theorem~9.6]{Mazzeo2017}}]
\label{thm:background-elliptic-regularity}
The Haydys--Witten and Kapustin--Witten operators, together with regular Nahm pole boundary conditions with knot singularities, are elliptic iie operators.
In particular, assume that $(A,B)$ is a solution of the Haydys--Witten equations satisfying the $\beta$-twisted Nahm pole boundary conditions with $\theta$-twisted knot singularities as described in Definition~$\ref{def:background-Nahm-pole-boundary-condition}$.
Denote by $A^{\mathrm{NP}}$ and $B^{\mathrm{NP}}$ the leading terms of $(A,B)$ at the boundary, choose a reference connection $A^0 = A^{\mathrm{NP}} + \omega$ where $\omega$ is a~connection on the restriction of $E$ to the boundary, and write $A = A^{\mathrm{NP}} + \omega + a$ and $B = B^{\mathrm{NP}} + b$.
Then $a$ and $b$ are polyhomogeneous on $[M;\Sigma_K]$, i.e., there are asymptotic expansions
\begin{alignat*}{4}
	&a \sim \sum_{(\alpha, k) \in \Delta_0} a_{\alpha,k} y^{\alpha} (\log y)^k ,\qquad&&
	b \sim \sum_{(\alpha, k) \in \Delta_0 } b_{\alpha, k} y^\alpha (\log y)^k \qquad &&
	(y\to 0), &\\[1em]
	&a \sim \sum_{(\beta, m) \in \Delta_K} a_{\beta,m} R^{\beta} (\log R)^m ,\qquad&&
	b \sim \sum_{(\beta, m) \in \Delta_K } b_{\beta, m} R^\beta (\log R)^m \qquad &&
	(R\to 0),&
\end{alignat*}
and corresponding product-type expansions near the corner $y=R=0$, that are compatible with the fact that $y = R \cos \psi$.
Moreover, the indicial sets $\Delta_0$ and $\Delta_K$ are bounded from the left by~${\alpha \geq 1}$ and $\beta \geq 0$, respectively.
\end{Theorem}

A complete proof for the case with incidence angle $\beta = 0$ (equivalently $\theta = \pi/2$) was provided by Mazzeo and Witten in~\cite{Mazzeo2014, Mazzeo2017}.
The proof for other values of $\beta$ is completely analogous, so we refrain from reproducing the full details here.
Instead, we will only briefly recapitulate the main line of arguments and then report in Proposition~\ref{prop:background-indicial-set} the calculation of the indicial sets $\Delta_0$ and $\Delta_K$ that is required to establish the theorem for general twisting angle.

Consider the Haydys--Witten operator on the depth-2 stratified space with strata $M^5 \setminus \del M^5$, $\del M^5 \setminus \Sigma_K$, and $\Sigma_K$, as an iie operator of order 1.
This means that in a neighbourhood of each stratum its linearization takes a certain iterative form, that combines a differential operator on the stratum with an iie operator (of smaller depth) on the link.
For example, in coordinates $(s,t,R,\psi,\vartheta)$ in a neighbourhood of a point on the lowest (depth 2) stratum $\Sigma_K$, the linearization must look like
\begin{align*}
	L = ( L_s \del_s + L_t \del_t ) + L_R \del_R + \frac{1}{R} \biggl( L_\psi \del_\psi + L_\vartheta \del_{\vartheta} + \frac{1}{\cos\psi} L_0 \biggr),
\end{align*}
where each $L_i$ is a smooth (or polyhomogeneous) endomorphism of $\mathfrak{g}$.
Note that for $\psi\! \to \! \pi/2$ (i.e., $\cos\psi \to 0$) one approaches points in the middle (depth 1) stratum $\del M^5 \setminus \Sigma_K$.
Thus, the operator that is multiplied with $R^{-1}$ is itself an iie operator of depth 1, exemplifying the iterative nature of the definition.
Linearizing the Haydys--Witten operator around $\bigl(A^{\mathrm{NP}}, B^{\mathrm{NP}}\bigr)$ indeed yields an operator of that form.

The definition of full ellipticity as iie operator then involves three properties.
The first property is invertibility of the ``iie symbol'', which is a suitable analogue of the principal symbol of~$\HW[v]$, while the second and third property are about the iterative invertibility of certain model operators, called ``normal operators'', at points on strata of depth 1 and depth 2, respectively.

To determine the iie symbol, one considers the non-singular operator $R\cos\psi L$ and replaces derivatives by vectors according to a rule that takes into account the depth of the associated stratum.
In the specific case above, one replaces $R\cos\psi \del_s$ and $R\cos\psi \del_t$ by $-{\rm i} \xi_1$ and $-{\rm i}\xi_2$, respectively, while $\cos\psi \del_\psi$ and $\cos\psi \del_\vartheta$ are similarly replaced by $-{\rm i}\xi_3$ and $-{\rm i}\xi_4$, and subleading orders of differentiation (here $L_0$ of order 0) are discarded.
Invertibility of the iie symbol of the Haydys--Witten equations immediately carries over from the same result for the Kapustin--Witten equations.

The two relevant normal operators are determined as follows.
A tubular neighborhood in~$M^5$ of either of the two strata $S=\del M^5 \setminus \Sigma_K$ or $S=\Sigma_K$ is diffeomorphic to a bundle of cones~$C(Z)$, where $Z$ is either a point or the hemisphere $H^2$, respectively.
For any $p\in S$, the normal operator~$N_p(L)$ is defined as the scale- and translation-invariant operator on $T_p S \times C(Z)$, that is induced by freezing the coefficient functions $L_i$ to their values at the point $p$.
In general, the properties of~$N_p(L)$ may depend on $p \in S$ as a parameter, but this is fortunately not the case in our situation.

Invertibility of $N_p(L)$, acting on appropriately defined ``iterated edge Sobolev spaces'' on the blowup $[M;S]$, depends on the formal rates of growth for solutions of $Lu = 0$.
These rates are called the indicial roots of $N_p(L)$ and are determined by solving the condition $N_p(L)(y^\alpha u) = \mathcal{O}(y^\alpha)$ for~$\alpha$, where $y$ denotes a boundary defining function of the (blown-up) stratum under consideration.
The key property we need to show is that, under the assumption of regular Nahm pole boundary conditions, there are no indicial roots in some non-empty interval $(-1 , \overline{\alpha} )$.
Once this is established, $N_p(L)$ is invertible on function spaces associated to the scale function $y^\mu$ for any weight $\mu \in (-1, \overline{\alpha})$.
As a consequence, the Haydys--Witten operator is an elliptic iie operator and the indicial sets $\Delta_0$ and $\Delta_K$ in the polyhomogeneous expansion of $a$ and $b$ are bounded from the left by $\overline{\alpha}$.

As it turns out, the relevant indicial roots $\underline{\alpha}$ and $\overline{\alpha}$ are independent of the twisting angle $\beta$.
Abstractly, the reason for this is that the Haydys--Witten equations with $\beta$-twisted Nahm pole boundary conditions are locally equivalent to Haydys--Witten equations with untwisted Nahm pole boundary conditions, but where instead the corrections $a$ and $b$ are rotated into each other by an angle $-\beta$.
As a consequence, the only difference is that the two originally distinct indicial roots of $a$ and $b$ at $\beta=0$ are combined.
In what follows, this statement is explained in full detail for the indicial equations at the depth 1 stratum $\del M^5 \setminus \Sigma_K$.
Subsequently, we also comment on the indicial roots of the depth 2 stratum $\Sigma_K$, where indicial roots are known only implicitly.

\begin{Proposition}\label{prop:background-indicial-set}
If $G={\rm SU}(N)$ and $(A,\phi) \sim \bigl(A^{\rho,\beta}, \phi^{\rho,\beta}\bigr)$ as $y\to 0$ satisfies the regular $\beta$-twisted Nahm pole boundary condition, then the set of indicial roots is $\{-(j+1), -j, j,\allowbreak {j+1}\}_{j=1,\ldots, N-1}$ and does not depend on the twisting angle $\beta$.
In particular, there are no indicial roots in the interval $(-1, 1)$.
\end{Proposition}
\begin{proof}
Let $p \in \del M^5 \setminus \Sigma_K$, in which case the normal operator $N_p(L)$ acts on pairs of functions $(a,b)$ over \smash{$T_p \bigl(\del M^5 \setminus \Sigma_K\bigr) \times C(\{\text{pt.}\}) \simeq \R^4 \times \R^+_y$}.
Assume that $(a,b)$ are $\R^4$-invariant and, moreover, that they are of the form $a = y^\alpha a_0$ and $b=y^\alpha b_0$ for some $y$-independent, $\mathfrak{g}$-valued differential forms $a_0$, $b_0$.
Since the Nahm pole terms are proportional to $y^{-1}$ (and their contributions at order $y^{-2}$ vanish by construction), the leading order of $N_p(L)(a, b)$ is $y^{\alpha-1}$.
Terms at this order arise from the action of $\del_y$, as well as from commutators with $A^{\mathrm{NP}}$ or $B^{\mathrm{NP}}$.
The condition $N_p(L)(y^\alpha a_0, y^\alpha b_0) = \mathcal{O}(y^\alpha)$ then corresponds to equations that arise from setting to zero the terms proportional to $y^{\alpha-1}$.
These are to be interpreted as equations for $\alpha$ and determine the indicial roots.

More concretely, since the indicial equations invoke $\R^4$-invariance, it is clear from Proposition~\ref{prop:background-dimensional-reduction-Nahm} that the indicial equations are equivalent to the linearization of the $\beta$-twisted octonionic Nahm equations around $\bigl(A^{\mathrm{NP}}, B^{\mathrm{NP}}\bigr)$.
In the current situation, the leading order terms are given by the model solutions $A_i^{\mathrm{NP}} = y^{-1} \sin\beta \mathfrak{t}_{\tau(i)}$ and~${B_i^{\mathrm{NP}} = y^{-1} \cos\beta \mathfrak{t}_{i}}$.
Plugging $A^{\mathrm{NP}} + y^\alpha a$ and~${B^{\mathrm{NP}} + y^\alpha b}$ into these equations and extracting the terms at order $y^{\alpha-1}$ leads to the following set of indicial equations:
\begin{gather}
\samepage		\alpha b_i + \sin\beta\bigl[\mathfrak{t}_{\tau(i)}, a_s\bigr] - \cos\beta[\mathfrak{t}_i, a_y] \label{eq:background-Nahm-indicial-equations-1}\\
		\qquad{}+ \epsilon_{ijk} \bigl( \cos^2\beta[\mathfrak{t}_j, b_k] + \sin\beta \cos \beta[\mathfrak{t}_j, a_k] - \sin\beta \cos\beta\bigl[\mathfrak{t}_{\tau(j)}, a_k\bigr] + \sin^2\beta \bigl[\mathfrak{t}_{\tau(j)}, b_k\bigr] \bigr)
= 0, \nonumber \\
		\alpha a_i - \cos\beta[\mathfrak{t}_{i}, a_s] - \sin\beta[\mathfrak{t}_{\tau(i)}, a_y] \label{eq:background-Nahm-indicial-equations-2}\\
		\qquad{}- \epsilon_{ijk} \bigl( \cos^2\beta[\mathfrak{t}_{j}, a_k] - \sin\beta \cos\beta[\mathfrak{t}_{j}, b_k] + \sin\beta \cos \beta\bigl[\mathfrak{t}_{\tau(j)}, b_k\bigr] + \sin^2\beta \bigl[\mathfrak{t}_{\tau(j)}, a_k\bigr] \bigr)
	= 0, \nonumber \\
	\alpha a_s - \cos\beta[\mathfrak{t}_i, a_i] + \sin\beta\bigl[\mathfrak{t}_{\tau(i)}, b_i\bigr] = 0, \label{eq:background-Nahm-indicial-equations-3}\\
	\alpha a_y + \sin\beta\bigl[\mathfrak{t}_{\tau(i)}, a_i\bigr] + \cos\beta[\mathfrak{t}_i, b_i] = 0. \label{eq:background-Nahm-indicial-equations-4}
\end{gather}
The first three equations arise from the $\beta$-twisted octonionic Nahm equations~\eqref{eq:background-twisted-Nahm-equations-expanded}, while the last is the gauge fixing condition~\eqref{eq:background-Nahm-gauge-fixing}.

Observe that when $\beta=0$, and thus $\theta = \pi/2$, the equations decouple into two quaternionic Nahm-like equations for ${(a_y, \vec{b} = (b_1, b_2, b_3) )}$ and ${(a_s , \vec{a} = (a_1, a_2, a_3) )}$, respectively.
Up to a~reinterpretation of $a_s$, these are the indicial equations for the $\theta = \pi/2$ version of the Kapustin--Witten equations that were analyzed by Mazzeo and Witten.
As will be explained momentarily, an analogous decoupling also exists for $\beta\neq 0$, and this will ultimately lead to the conclusion that the indicial roots at $\del M^5 \setminus \Sigma_K$ do not depend on $\beta$ at all.
In the upcoming discussion we closely follow the exposition for the case $\beta=0$ in~\cite[Section~2.3]{Mazzeo2014}.

The Lie algebra $\mathfrak{su}(2)$ acts on the fields in~\eqref{eq:background-Nahm-indicial-equations-1}--\eqref{eq:background-Nahm-indicial-equations-4} in three important ways and this can be exploited to simplify the equations.
Below, we denote by $V_j$ the $2j+1$ dimensional representations of $\mathfrak{su}(2)$, where $j$ is a non-negative half-integer, commonly called spin.

First, consider the subalgebra $\mathfrak{su}(2)_\mathfrak{t} \subseteq \mathfrak{g}$ that is spanned by $(\mathfrak{t}_i)_{i=1,2,3}$.
It is the image of $\rho\colon \mathfrak{su}(2) \to \mathfrak{g}$ and depends on the choice of $\phi_\rho$ in the Nahm pole boundary condition.
Since we are only concerned with regular Nahm pole boundary conditions, $\rho$ is a principal embedding and $\mathfrak{su}(2)_\mathfrak{t}$ is a regular subalgebra.
Under the adjoint action of $\mathfrak{su}(2)_\mathfrak{t}$, the Lie algebra $\mathfrak{g}$ then decomposes\footnote{For a generic $\mathfrak{su}(2)$-subalgebra, this decomposition exists only for the complexification $\mathfrak{g}_{\C}$, in which case the decomposition may also involve half-integer spins.
}~into a direct sum of non-zero integer spin representations $V_j$, with $j=1,\ldots, N-1$.
Observe that none of the terms in the indicial equations mixes components with values in distinct~$V_j$'s (the action of $\mathfrak{t}_i$ preserves $V_j$ by definition).
This means that in solving the equations we can from now on assume that $a_s$, $a_y$, $\vec{a}$ and $\vec{b}$ all take values in the same representation $V_j$.

Second, there is an $\mathfrak{su}(2)$ action on the vector degrees of freedom of $\vec{a}$ and $\vec{b}$, which we denote $\mathfrak{su}(2)_\mathfrak{s}$ with generators $(\mathfrak{s}_i)_{i=1,2,3}$.
For this, recall from Section~\ref{sec:background-Nahm-equations} that in the quaternionic Nahm equations we think of the vectors $\vec{a}$ and $\vec{b}$ as elements of $\mathfrak{g} \otimes \operatorname{Im} \mathbb{H}$.
The imaginary quaternions $\operatorname{Im} \mathbb{H}$ naturally form an $\mathfrak{su}(2)$ representation, induced by acting with commutators (= cross product) on themselves.
Slightly abusing notation, this action is represented on $\R^3 \simeq \operatorname{Im} \mathbb{H}$ by multiplication with the $3\times 3$ matrices $(\mathfrak{s}_i)_{jk} = - \epsilon_{ijk}$.

If $\beta \neq 0$, we need to take into account that the two quaternionic parts combine into $(a_y, \vec{b}, a_s, \vec{a}) \in \mathfrak{g}\otimes \mathbb{O}$.
There is an analogous $\mathfrak{su}(2)_\mathfrak{s}$ action on each of the vector space summands in the decomposition $\mathbb{O} \simeq \R \oplus \operatorname{Im}\mathbb{H} \oplus \R \oplus \operatorname{Im}\mathbb{H}$, which is induced by taking commutators with elements of the first $\operatorname{Im} \mathbb{H}$-factor and subsequently discarding any terms that land outside the summand in question.
Specifically, we define $\mathfrak{su}(2)_\mathfrak{s}$ by the action of its generators $(\mathfrak{s}_i)_{i=1,2,3}$ as follows.
The action on $a_y$ and $a_s$ is trivial, i.e., generators are represented by multiplication with $\mathfrak{s}_i = 0$.
Meanwhile, the action on $\vec{b}$ is represented by the $3\times 3$ matrices $(\mathfrak{s}_i)_{jk} = -\epsilon_{ijk}$ as in the quaternionic case.
The action on $\vec{a}$ is similar but comes with an additional subtlety, since octonionic multiplication introduces an additional sign in the action of $\operatorname{Im} \mathbb{H}$.
In this case, the (octonionic) action of the imaginary quaternions is represented by the $3\times 3$ matrices $(A_i)_{jk} = +\epsilon_{ijk}$, which satisfy the commutation relations $[A_i, A_j] = - \epsilon_{ijk} A_k$.
This only provides an $\mathfrak{su}(2)_\mathfrak{s}$ representation if we let $\mathfrak{s}_i$ act via $A_{\tau(i)}$ with an additional anti-cyclic permutation $\tau=(132)$.
In conclusion, $a_s$ and $a_y$ take values in the trivial representation $V_0$, while $\vec{a}$ and $\vec{b}$ are elements of three-dimensional representations $V_1$, where the generators $\mathfrak{s}_i$ act as described above.

Third, the indicial equations are invariant under $\mathfrak{su}(2)_\mathfrak{f}$, generated by the action of $\mathfrak{f}_i := \mathfrak{t}_i \otimes 1 + 1 \otimes \mathfrak{s}_i$ on $\mathfrak{g}\otimes \mathbb{O}$.
If the fields take values in $V_j \subset \mathfrak{g}$, they decompose under the action of~$\mathfrak{su}(2)_\mathfrak{f}$ as follows:
\begin{align*}
	&a_s, a_y\in V_j \otimes V_0 = V_j^0, \qquad \vec{a}, \vec{b}\in V_j \otimes V_1 = \bigoplus_{\eta\in\{-1, 0, +1\}} V_j^\eta.
\end{align*}
Here we have introduced $V_j^\eta \simeq V_{j+\eta}$ to denote representations with total spin $j+\eta$.
It is worth pointing out that for fixed $j$, according to the first line, the components $a_s$ and $a_y$ can only be non-zero when $\eta=0$.

Now, to better understand the indicial equations, consider the ``spin-spin'' operator
\begin{align*}
	\mathfrak{J} := \mathfrak{t}\cdot\mathfrak{s} = \sum_{i=1}^3 \mathfrak{t}_i \otimes \mathfrak{s}_i.
\end{align*}
This operator yields the expressions in \eqref{eq:background-Nahm-indicial-equations-1}~and~\eqref{eq:background-Nahm-indicial-equations-2} that contain $\epsilon_{ijk}$.
Indeed, denoting $\vec{a}^\tau = (a_1, a_3, a_2)$ and $\vec{b}^\tau = (b_1, b_3, b_2)$, the action of $\mathfrak{J}$ is given by
\begin{alignat*}{3}
	&(\, \mathfrak{J}\ \vec{a}\, )_i = - \epsilon_{ijk}\bigl[t_{\tau(j)}, a_k\bigr] ,\qquad&&
	(\, \mathfrak{J}\ \vec{b}\, )_i = \epsilon_{ijk} [t_j, b_k] ,&\\
	&(\, \mathfrak{J}\ \vec{a}^{\, \tau}\, )_{\tau(i)} = \epsilon_{ijk}[t_j, a_k] ,\qquad&&
	(\, \mathfrak{J}\ \vec{b}^{\, \tau}\, )_{\tau(i)} = - \epsilon_{ijk} \bigl[t_{\tau(j)}, b_k\bigr].&
\end{alignat*}
The action of $\mathfrak{J}$ on elements of $V_j^\eta$ is determined by the quadratic Casimir operators of the three~$\mathfrak{su}(2)$ actions
$\mathfrak{J} = -\tfrac{1}{2} \bigl( C_\mathfrak{f}^2 - C_\mathfrak{t}^2 - C_\mathfrak{s}^2 \bigr)$.
In general, the quadratic Casimir operator of $\mathfrak{su}(2)$ with basis $\mathfrak{c}_i$ is defined by \smash{$C^2 = -\sum_{i=1}^3 \mathfrak{c}_i^2$}.
On a spin $J$ representation it takes the constant value $C^2 = J(J+1)$.
In our case, there are three such operators $C_\mathfrak{t}^2$, $C_\mathfrak{s}^2$ and $C_\mathfrak{f}^2$, associated to the three $\mathfrak{su}(2)$ actions on $\mathfrak{g}\otimes\mathbb{O}$.
The values \smash{$C_\mathfrak{t}^2 = j(j+1)$} and \smash{$C_\mathfrak{s}^2 = 2$} are fixed, while \smash{$C_\mathfrak{f}^2$} depends on \smash{$V_j^\eta$} and takes values $(j+\eta)(j+\eta+1)$.
It follows that \smash{$V_j^\eta$} are eigenspaces of the spin-spin operator $\mathfrak{J}$ with eigenvalues $j+1$, $1$, and $-j$, respectively.

Note that orientation reversal via $\tau$ does not preserve the total spin: if $\vec{a}\in V_j^\eta$, then $\vec{a}^\tau$ does not have definite spin with respect to $\mathfrak{su}(2)_\mathfrak{f}$, but is instead given by some non-trivial linear combination in $\oplus_\eta V_j^\eta$.
Since the indicial equations~\eqref{eq:background-Nahm-indicial-equations-1}--\eqref{eq:background-Nahm-indicial-equations-4} contain contributions from $\vec{a}$, $\vec{a}^\tau$, $\vec{b}$, and $\vec{b}^\tau$, its not possible to restrict the equations to $V_j^\eta$.
However, by taking suitable linear combinations of~\eqref{eq:background-Nahm-indicial-equations-1}~and~\eqref{eq:background-Nahm-indicial-equations-2}, one can rewrite these as a set of decoupled equations in~${\sin\beta \vec{a} + \cos\beta \vec{b}}$ and $\cos\beta \vec{a} - \sin\beta \vec{b}$.

On the one hand, if we restrict to $\eta \neq 0$, the terms containing $a_s$ and $a_y$ vanish.
In this case, the indicial equations are equivalent to
\begin{align*}
	&\alpha ( \sin\beta \vec{a} + \cos\beta \vec{b} ) + \mathfrak{J} ( \sin\beta \vec{a} + \cos\beta \vec{b} )= 0 ,\\
	&\alpha ( \cos\beta \vec{a}^{\,\tau} - \sin\beta \vec{b}^{\,\tau} ) - \mathfrak{J} ( \cos\beta \vec{a}^{\, \tau} - \sin\beta \vec{b}^{\, \tau} )= 0	.
\end{align*}
On the other hand, if $\eta = 0$, we can replace the terms containing $a_s$ and $a_y$ by utilizing that equations~\eqref{eq:background-Nahm-indicial-equations-3}~and~\eqref{eq:background-Nahm-indicial-equations-4} are solved by
\begin{align*}
	&a_i= \frac{\alpha}{j(j+1)} \bigl( \cos\beta [\mathfrak{t}_i, a_s] + \sin\beta\big[\mathfrak{t}_{\tau(i)}, a_y\big] \bigr), \\
	&b_{\tau(i)}= \frac{\alpha}{j(j+1)} \bigl( -\sin\beta [\mathfrak{t}_{i}, a_s] + \cos\beta \bigl[\mathfrak{t}_{\tau(i)}, a_y\bigr] \bigr).
\end{align*}
Solving for $[\mathfrak{t}_i, a_s]$ and \smash{$\bigl[\mathfrak{t}_{\tau(i)}, a_y\bigr]$}, plugging it into \eqref{eq:background-Nahm-indicial-equations-1}~and~\eqref{eq:background-Nahm-indicial-equations-2}, and taking appropriate linear combinations yields
\begin{align*}
	&\biggl(\alpha - \frac{j(j+1)}{\alpha} \biggr) ( \sin\beta \vec{a} + \cos\beta \vec{b} ) + \mathfrak{J} ( \sin\beta \vec{a} + \cos\beta \vec{b} )= 0, \\
	&\biggl(\alpha - \frac{j(j+1)}{\alpha} \biggr) ( \cos\beta \vec{a}^{\,\tau} - \sin\beta \vec{b}^{\,\tau} ) - \mathfrak{J} ( \cos\beta \vec{a}^{\, \tau} - \sin\beta \vec{b}^{\, \tau} )= 0	.
\end{align*}

In either case, the indicial equations reduce to an eigenvalue problem for the spin-spin operator $\mathfrak{J}$.
Using the fact that $V_j^\eta$ are eigenspaces of $\mathfrak{J}$ with the previously discussed eigenvalues leads to the following table of indicial roots:
\begin{align*}
	&(\sin\beta \vec{a} + \cos\beta \vec{b}) \in V_j^\eta \colon\ \alpha = \begin{cases} j, & \eta = 1, \\ -(j+1), j, & \eta = 0, \\ -(j+1), & \eta = -1, \end{cases} \\
	&(\cos\beta \vec{a} - \sin\beta \vec{b})^\tau \in V_j^\eta \colon\ \alpha = \begin{cases} -j, & \eta = 1, \\ j+1, -j ,& \eta = 0, \\ j+1, & \eta = -1. \end{cases}
\tag*{\qed}
\end{align*}
\renewcommand{\qed}{}
\end{proof}
This concludes the evaluation of indicial roots at points in the depth 1 stratum $\del M \setminus \Sigma_K$ needed for the proof of Theorem~\ref{thm:background-elliptic-regularity}.
Crucially, the list of indicial roots coincides with the one for the $\pi/2$-version of the Kapustin--Witten equations determined in~\cite{Mazzeo2014}.
In particular, there are no indicial roots in the interval $(-1, \overline{\alpha}) = (-1, 1)$ and the indicial set $\Delta_0$ is bounded from the left by $1$.

Moving on to a short description of the depth 2 stratum, let $p \in \Sigma_K$.
In this case, the normal operator $N_p(L)$ acts on functions over \smash{$T_p \Sigma_K \times C(H^2) \simeq \R^2_{s,t} \times [0,\infty)_R \times H^2_{\psi,\vartheta}$}.
The indicial equations now arise from considering $\R^2_{s,t}$-invariant functions of the form $(R^\alpha a , R^\alpha b)$, where~$a$,~$b$ are independent of $R$.
Equivalently, according to Proposition~\ref{prop:background-dimensional-reduction-EBE}, these are determined by plugging in \smash{$A = A^{\theta, \lambda}+R^{\alpha} a$} and \smash{$B^{\theta,\lambda} + R^\alpha b$} into the $\theta$-TEBE and extracting the terms at leading order $R^{\alpha-1}$.

The evaluation of these equations is somewhat more involved than before.
The indicial roots of $\theta$-twisted knot singularities near $\del_K M$ were determined for the $\theta$-Kapustin--Witten equations by Dimakis.
\begin{Lemma}[{\cite[Lemma 3.5]{Dimakis2022}}]
If $G={\rm SU}(2)$ and $(A,\phi) \sim \bigl(A^{\lambda,\theta}, \phi^{\lambda,\theta}\bigr)$ as $R\to 0$, then the set of indicial roots at $\psi = \pi/2$ is $\{-1,2\}$ in accordance with the Nahm pole boundary condition, at~${\psi=0}$ is $\{-\lambda-1, 0, 0 , \lambda + 1\}$, and there are no indicial roots in the interval $(-2, 0)$ at $R=0$.
\end{Lemma}
Importantly, the indicial roots at $\psi = \pi/2$ are compatible with the indicial roots at the depth 1 stratum, such that the depth 2 normal operator is iteratively invertible (roughly: it is invertible on certain rescaled versions of the function spaces on which the depth 1 normal operator is invertible).
We conclude that the Haydys--Witten operator is an elliptic iie operator, since up to a reinterpretation of field components its normal operator at glancing angle $\theta$ coincides with the normal operator of the $\theta$-Kapustin--Witten operator.
In particular, the indicial set $\Delta_K$ is bounded from the left by $\overline{\alpha} = 0$ as claimed in Theorem~\ref{thm:background-elliptic-regularity}.

\section{Haydys--Witten homology}
\label{sec:background-HWF-theory}

We now have all ingredients at hand to qualitatively define Haydys--Witten Homology, which assigns a Floer-type instanton homology $HF\bigl(W^4\bigr)$ to any Riemannian four-manifold.
The construction is a standard application of the ideas of Floer theory and is summarized in Section~\ref{sec:background-HWF-theory-general-manifolds}.
If $W^4$ admits a non-vanishing unit vector field $w$, there is actually a one-parameter family of such homologies $HF_{\theta}\bigl(W^4\bigr)$, $\theta \in [0,\pi]$.
Moreover, the construction is functorial: Any cobordism $\bigl(M^5,v\bigr)$ of four-manifolds $W^4$ and \smash{$\tilde W^4$}, where $v$ is a non-vanishing vector field on $M^5$, provides a linear map between the Floer groups associated to its boundaries.
In particular, this implies the existence of natural linear maps between Floer groups for different values of $\theta$.

It is to a large extent unclear under what conditions Haydys--Witten homology has a fully rigorous meaning.
Currently, the most important missing parts are compactness and gluing results for the moduli space of Kapustin--Witten and Haydys--Witten solutions.
There have been some important advances in this direction, mostly due to Taubes~\cite{Taubes2013, Taubes2017, Taubes2019, Taubes2021, Taubes2018}, but also see~\cite{He2019d, Tanaka2019}.

We conclude the article with a short explanation of Witten's proposal regarding Khovanov homology from the perspective of Haydys--Witten Floer theory in Section~\ref{sec:background-HWF-theory-and-Khovanov-Homology}.
This has attracted a lot of attention and provides an important testing ground for Haydys--Witten Floer theory and hints at the information that is measured by the topological invariants.
Conversely, one may hope to ``read off'' properties of Khovanov homology for general three-manifolds from the general properties of Floer-like theories.

\subsection{An instanton Floer homology for the Haydys--Witten equations}
\label{sec:background-HWF-theory-general-manifolds}

The way things have been set up, it is convenient to explain the construction of Haydys--Witten homology from the perspective of the five-dimensional Haydys--Witten geometry.
Let $M^5$ be a non-compact Riemannian manifold with corners, $G$ a simply connected compact Lie group, and $E\to M^5$ a principal $G$-bundle.
Assume $M^5$ is equipped with a non-vanishing unit vector field $v$ that approaches ends at constant angles.
The standard example to keep in mind is \smash{$M^5 = \R_s \times X^3 \times \R^+_y$} with $v= \sin\theta \del_s + \cos\theta \del_y$.
Note that $M^5$ may have ``corners at infinity'', commonly called poly-cylindrical ends, that separate non-compact ends at which $v$ has different incidence angles.
If we wish to include a~'t~Hooft operator supported on some embedded surface~${\Sigma_K \subset \del M}$ in one of the boundary components, then we implicitly take $M^5$ to be the blowup along $\Sigma_K$ and label the newly introduced boundary component $\del_K M$ with a magnetic charge~${\lambda \in \Gamma_{\mathrm{char}}^\vee}$.

Denote by $\mathfrak{B}$ a complete collection of boundary conditions for Haydys--Witten fields $(A,B)$ on $M^5$.
We think of this as a set that contains for each end of $M^5$ the information of the type of boundary condition and any choices associated to it.
For example, this might be the choice of $\beta$-twisted Nahm pole boundary conditions, which involves boundary data ${\phi_\rho \in \Omega^2_{v,+}\bigl([0,\epsilon)_y \times W^4\bigr)}$.
Similarly, if the boundary arises from the blowup of a surface $\Sigma_K$ and is labeled with a non-zero charge \smash{$\lambda \in \Gamma_{\mathrm{char}}^\vee$}, then $\mathfrak{B}$ associates the description of a knot singularity within a Nahm pole boundary conditions.
At non-compact ends we generally demand that the fields approach $\R$-invariant solutions, so $\mathfrak{B}$ specifies a choice of $\theta$-Kapustin--Witten solution (or Vafa--Witten solution if $\theta = 0$).
We write $\mathcal{M}^{\mathrm{HW}}\bigl(M^5, v;\mathfrak{B}\bigr)$ for the space of Haydys--Witten solutions that satisfy the boundary conditions determined by $\mathfrak{B}$, modulo gauge transformations that act trivially at boundaries and non-compact ends.

Let us now consider five-manifolds of the form $M^5 = \R_s \times W^4$, where $W^4$ is a smooth Riemannian manifold with corners, not necessarily compact.
$M^5$ always admits the non-vanishing vector field $v = \del_s$, which approaches the ends at $s=\pm\infty$ with incidence angle $\theta = 0$.
Whenever $W^4$ admits\footnote{%
The existence of a non-vanishing vector for compact $W^4$ implies that its Euler characteristic vanishes.
If $W^4$ is open there is no such obstruction and there always exists a non-vanishing vector field $w$, since any zeroes can be ``combed to infinity''.
In neither case is it necessary for $W^4$ to be of product type $W^4 = X^3 \times I$.
} a non-vanishing unit vector field $w$, there is a natural one-parameter family of non-vanishing vector fields $v = \cos\theta \del_s + \sin\theta w$, with $\theta \in [0,\pi]$, that interpolates between~$\del_s$ and $w$.
There are, of course, many other possible choices of $v$ on $\R_s \times W^4$; in particular, $\theta$ could vary along $\R_s$, such that the angles at $s=\pm \infty$ need not coincide.
We will come back to this later, in the more general context of cobordisms \smash{$\bigl(M^5, v\bigr)$} between four-manifolds with associated incidence angles \smash{$\bigl(W^4, \theta\bigr)$} and \smash{$\bigl(\tilde{W}^4, \tilde{\theta}\bigr)$}.
For now consider the cylinder $M^5 = \R_s \times W^4$ and fix $v = \cos\theta \del_s + \sin\theta w$ for some constant $\theta$.

The boundary conditions at $s\to\pm \infty$ correspond to solutions of the $\theta$-Kapustin--Witten equations on $W^4$.
Denote the moduli space of Kapustin--Witten solutions modulo gauge transformations by \smash{$\mathcal{M}^{\mathrm{KW}}\bigl(W^4, \theta\bigr)$}.
Given \smash{$x,x^\prime \in \mathcal{M}^{\mathrm{KW}}\bigl(W^4, \theta\bigr)$}, a complete set of boundary conditions on $M^5 = \R_s \times W^4$ is given by additionally specifying boundary conditions $\mathfrak{b}$ for each of the remaining ends of $M^5$,
\begin{align*}
	\mathfrak{B} = \mathfrak{b} \sqcup \bigl\{\lim_{s\to-\infty} (A,B) = x \bigr\} \sqcup \bigl\{\lim_{s \to +\infty} (A,B) = x^\prime \bigr\}.
\end{align*}
The choices collected in $\mathfrak{b}$ have to be compatible with $x$ and $x^\prime$ at corners of $M^5$.
This means that~$\mathfrak{b}$ has to be chosen in such a way that it interpolates between the boundary conditions that~$x$ and~$x^\prime$ satisfy.
One can think of this as a collection of four-dimensional ``boundary instantons'', one for each end of $M^5$.

A simple, yet non-trivial example of such a boundary instanton arises for $M^5 = \R_s\times X^3 \times \R^+_y$ at $y\to\infty$.
First note that a natural boundary condition for a Kapustin--Witten solution ${x=(A,\phi)}$ on $X^3 \times \R^+_y$ is that it approaches a flat connection $(A^{\sigma} ,0 )$ as $y\to \infty$, specified by a~choice of a group homomorphism \smash{$\sigma\colon \pi_1\bigl(X^3\bigr) \to G$}.
Hence, assume that $x, x^\prime \in \mathcal{M}^{\mathrm{KW}}\bigl(X^3\times\R^+_y , \theta\bigr)$ approach flat connections associated to $\sigma$ and $\sigma^\prime$, respectively.
A consistent boundary condition at the non-compact end $y\to \infty$ of \smash{$ M^5 = \R_s\times X^3 \times \R^+_y$} must then be a solution of $\beta$-Kapustin--Witten equations on \smash{$\R_s \times X^3$} that interpolates between the two flat connections~$A^\sigma$ and~$A^{\sigma^\prime}$.
In the special case where $\beta=0$ and $\phi=0$, this is equivalent to a choice of self-dual connection, i.e., a Donaldson--Floer instanton, on the four-manifold $\R_s \times X^3$ that sits at ${y = \infty}$.\looseness=1

Let us now define the Floer chains, also known as the Morse--Smale--Witten complex, that underlies Haydys--Witten homology.
For simplicity assume that there is only a finite set\footnote{%
This is equivalent to the statement that the moduli space is a compact manifold of dimension zero, which is something one would ultimately like to prove.
}~of Kapustin--Witten solutions on $W^4$ and consider the free abelian group generated by these \mbox{solutions}:
\begin{align*}
	CF_{\theta} = \bigoplus_{ x \in \mathcal{M}^{\mathrm{KW}}(W^4,\theta)} \Z\, \left[ x \right].
\end{align*}
Note that, by definition, elements of $CF_{\theta}$ are stationary solutions of the Haydys--Witten equations -- or equivalently, critical points of an appropriate Kapustin--Witten functional.
This coincides with the usual construction of a Morse--Smale--Witten complex in Morse theory.

The Morse--Smale--Witten complex is equipped with a differential ${\rm d}_{v}$ that counts Haydys--Witten instantons that interpolate between $x$ and $y$.
To make this precise, consider the moduli space of Haydys--Witten solutions where the fields $(A,B)$ approach $x$ and $y$ as $s\to\pm\infty$, respectively, and moreover satisfy some fixed boundary conditions $\mathfrak{b}$ at the remaining boundaries and non-compact ends.
This moduli space always admits an $\R$-action by translation $s\mapsto s+c$ along the flow direction $\R_s$, which maps one solution to an equivalent one that differs only by the parametrization of $\R_s$.
To count instantons, we thus consider the quotient of the moduli space by this action.
Also, we need to take into account that several instantons at the boundary might provide a consistent choice of boundary conditions $\mathfrak{b}$.
As a result Haydys--Witten instantons that interpolate from $x$ to $y$ are classified by
\begin{align*}
\begin{split}
	\mathcal{M}(x,y):= \bigcup_{\mathfrak{b}}
	\bigl(
		\mathcal{M}^{\mathrm{HW}} \bigl(
			\R_s \times W^4,\, v, \,\mathfrak{B} = \mathfrak{b} \sqcup \!\bigl\{ \lim_{s\to -\infty}\! (A,B) = x
\bigr\} \sqcup \bigl\{ \lim_{s\to +\infty}\! (A,B) = y \bigr\}\bigr)\bigr)\big/\R.
\end{split}
\end{align*}
On grounds of general properties of elliptic differential operators, this is expected to be a smooth oriented manifold.

Note that the boundary conditions $\mathfrak{b}$ are sometimes classified by an analogous moduli space of instanton solutions in one dimension less.
The disjoint union of possible boundary conditions then is equivalent to a product of smooth manifolds.
For example, in the context of boundary instantons at $y=\infty$ on the five-manifold $\R_s \times X^3 \times \R^+_y$, the moduli space is of the form
\begin{align*}
	\mathcal{M}\bigl(x,x^\prime\bigr) =
	\mathcal{M}^{\mathrm{HW}} \Biggl( \lim_{s\to\pm \infty} (A,B) = \begin{cases} x, & \lim_{y \to \infty} x = \sigma, \\ x^\prime, & \lim_{y \to \infty} x^\prime = \sigma^\prime \end{cases}\Biggr) \times \mathcal{M}^{\mathrm{asd}}\bigl(\sigma,\sigma^\prime\bigr),
\end{align*}
where $\mathcal{M}^{\mathrm{asd}}(\sigma,\sigma^\prime)$ is the moduli space of anti-self-dual connections on $\R_s \times X^3$ that interpolate between $A^\sigma$ and $A^{\sigma^\prime}$.

In the definition of ${\rm d}_v$, we rely on the dimension of $\mathcal{M}(x,y)$.
In Morse theory, i.e., on finite-dimensional manifolds, the Morse--Smale--Witten complex carries a natural $\mathbb{Z}$-grading by the Morse index, defined by the number of negative eigenvalues of the Hessian at a given critical point.
Since this determines the number of unstable flow directions in the vicinity of a critical point, the difference between the index of distinct critical points determines the dimension of the moduli space of flows $\mathcal{M}(x,y)$.
In the infinite-dimensional setting the Morse index does not make sense; the linearization of the Kapustin--Witten operator typically has infinitely many negative eigenvalues.\footnote{We have encountered a similar situation in the path-integral description, where we mentioned a grading by the ``fermion number of the filled Dirac sea''.
This makes sense only after choosing a reference vacuum for which the fermion number is \emph{defined} to be zero.}
Observe, however, that the difference of Morse indices only depends on the relative change of negative eigenvalues along a flow line.
This is known as the spectral flow of an operator and has an analogue in Floer theory.
Thus, as is common in Floer theory, we define a relative index $\mu(x,y)$ for any pair of generators $x,y \in \mathcal{M}^{\mathrm{KW}}\bigl(W^4, \theta\bigr)$, by the spectral flow of the Kapustin--Witten differential operator along a Haydys--Witten instanton.
This, in turn, coincides with the index of the Haydys--Witten differential operator when it acts on fields that are subject to the complete set of boundary conditions~$\mathfrak{B}$,
$\mu(x,y) := \operatorname{ind} \HW[v]\restr_{\mathfrak{B}}$.
The moduli space of Haydys--Witten instantons $\mathcal{M}(x,y)$ is expected to have dimension ${\mu(x,y) - 1}$.
Notably, it is zero-dimensional whenever $\mu(x,y)=1$, in which case we denote by $\hash \mathcal{M}(x,y)$ the signed count of its (oriented) elements.

The Floer differential is the linear map $CF_{\theta} \to CF_{\theta}$ defined by
\begin{align*}
	{\rm d}_{v}[x] = \sum_{\mu(x,y) = 1} \hash \mathcal{M}(x,y) \cdot [ y ].
\end{align*}
One expects that ${\rm d}_{v}^2 = 0$, such that $(CF_{\theta}, {\rm d}_{v})$ is indeed a cochain complex.
The standard proof in Floer theory relies on compactness and gluing theorems for the flow equations, which are not yet available for the Haydys--Witten equations.
More precisely, consider the compactification of the moduli space of gradient flows with relative index 2, which is an oriented manifold of dimension 1.
If the Haydys--Witten equations with boundary conditions $\mathfrak{B}$ are well behaved, the compactification is fully determined by adding broken flow lines.
The latter are exactly what we need to count when calculating ${\rm d}_{v}^2$.
Oriented manifolds of dimension one are either circles, which do not have boundary components and cannot contribute to ${\rm d}_{v}^2$, or intervals with boundary components of opposite orientation.
It follows that contributions to ${\rm d}_{v}^2$ always arise in pairs of opposite orientation and consequently add up to $0$.

Assuming all foundational problems can be addressed, we define the Haydys--Witten Floer homology groups as follows.
\begin{Definition}[Haydys--Witten homology]\label{def:Haydys-Witten-Floer-homology}
\!The Haydys--Witten homology associated to a~four-manifold $W^4$ is the homology of the chain complex $(CF_{\theta}, {\rm d}_v)$,
\begin{align*}
	HF_{\theta} \bigl(W^4\bigr) := H( CF_{\theta}, {\rm d}_v ).
\end{align*}
\end{Definition}
\begin{Remark}
It might be helpful to emphasize that the one-parameter family of homology groups only exists if $W^4$ admits a non-vanishing vector field $w$.
Since any non-compact manifold automatically admits a non-vanishing vector field, this is only an obstruction for compact manifolds.
Thus, for compact manifolds generally only the $\theta=0$ version of the Floer homology exists and Proposition~\ref{prop:background-dimensional-reduction-KW} states that in that case the Morse--Smale--Witten complex is generated by Vafa--Witten solutions.
However, according to Theorem~\ref{thm:background-Kapustin--Witten-vanishing} finite energy solutions of the $\theta$-Kapustin--Witten equations on compact manifolds are trivial whenever $\theta\neq 0$ anyway, so we may simply adopt the convention that \smash{$HF_{\theta}\bigl(W^4\bigr)$} is the trivial group for all $\theta \neq 0$.
The only situation where \smash{$HF_{\theta}\bigl(W^4\bigr)$} with $\theta \neq 0$ remains ambiguous is then in the context of infinite energy solutions on compact manifolds, which, for example, appear in connection with Nahm pole boundary conditions.
\end{Remark}

{\bf Instanton grading.}
The Morse--Smale--Witten complex $CF_{\theta}$ is naturally $\mathbb{Z}$-graded by the instanton number of the principal bundle $E\to W^4$, which up to a constant is the integral of the first Pontryagin class \smash{$p_1(\ad E) = \frac{1}{8\pi^2} \Tr F_A \wedge F_A$}.
The complex decomposes into submodules~$CF_{\theta}^k$, spanned by Kapustin--Witten solutions with instanton number $k\in \mathbb{Z}$,
\begin{align*}
	CF_{\theta}^\bullet = \bigoplus_{k \in \mathbb{Z}} CF_{\theta}^k.
\end{align*}

To understand the interaction between the instanton grading and the differential ${\rm d}_v$, observe that $p_1(\ad E)$ is a conserved four-form current.
We can consider the associated charge at any time $s \in \R_s$,
\begin{align*}
	P(s) = \frac{1}{32 \pi^2} \int_{\{s\}\times W^4} \Tr F_A \wedge F_A.
\end{align*}
Although the integrand is conserved, the current density may disappear at boundaries and non-compact ends of $W^4$ as we follow the flow along $\R_s$.
The difference in instanton number between the start and end point $s\to \pm \infty$ of a Haydys--Witten instanton is given by Stokes' theorem,
\begin{align*}
	\Delta P := \lim_{s\to\infty} \bigl( P(s) - P(-s) \bigr) = \sum \frac{1}{32 \pi^2} \int_{\del_i M} \Tr F_A \wedge F_A.
\end{align*}
The right-hand side is a sum over all boundaries and non-compact ends of $M^5$, except the ones at $s=\pm \infty$ (which appear on the left-hand side).
Each end contributes with its own instanton number, or more precisely the instanton number associated to the pullback of $E$ to $\del_i M$.

In conclusion, the differential ${\rm d}_v$ generally doesn't preserve the grading on $CF_{\theta}^\bullet$ and con\-se\-quent\-ly there is no $\mathbb{Z}$-grading on $HF_{\theta}\bigl(W^4\bigr)$.
However, the topology of the pullback bundles at ends of $M^5$ may arguably be viewed as part of the boundary data $\mathfrak{b}$, so the change in $P$-grading is ultimately controlled by the interplay of all the boundary conditions that are imposed on the Haydys--Witten instantons.
For example, one could choose to only take into account Haydys--Witten instantons for which $\Delta P$ is fixed, such that ${\rm d}_v$ has a fixed degree $\Delta P$.

{\bf Cobordisms.}
Let us briefly comment on the functorial properties of Haydys--Witten Floer theory.
Assume \smash{$\bigl(M^5, v\bigr)$} is a cobordism that interpolates between four-manifolds \smash{$\bigl(W^4, \theta\bigr)$} and \smash{$\bigl(\tilde{W}^4, \tilde{\theta}\bigr)$}, where~$\theta$,~$\tilde{\theta}$ denote the incidence angles between $v$ and the boundaries.
We promote the boundaries to non-compact ends by gluing in cylindrical ends $(-\infty,0]_s \times W^4$ and $[0,\infty)_s \times \tilde{W}^4$, respectively.
Since we assume that $v$ is already constant in some tubular neighbourhood of the boundaries, the vector field $v$ extends to a unique constant vector field on the cylinders.

To each end associate the corresponding Morse--Smale--Witten complex of boundary conditions \smash{$CF_{\theta} \bigl(W^4\bigr)$} and \smash{$CF_{\tilde \theta}\bigl(\tilde{W}^4\bigr)$}.
We can proceed exactly as before to define a linear map
\begin{align*}
	\Phi_{(M^5,v)} \colon \ CF_{\theta}\bigl(W^4\bigr) \to CF_{\tilde{\theta}}\bigl(\tilde{W}^4\bigr), \qquad [x] \mapsto \sum_{\mu(x,y)=1} \hash \mathcal{M}(x,y) [y].
\end{align*}
The only difference is that we now count Haydys--Witten solutions on \smash{$\bigl(M^5,v\bigr)$} instead of the cylinder \smash{$\bigl(\R_s\times W^4, v= \del_y\bigr)$}.

Under appropriate compactness and gluing assumptions for the Haydys--Witten equations, the induced map \smash{$\Phi_{(M^5,v)}$} is a chain map
\begin{align*}
	\Phi_{(M^5,v)} \circ {\rm d}_v = {\rm d}_v \circ \Phi_{(M^5,v)}.
\end{align*}
One way to see this is to realize that the concatenation of \smash{$\Phi_{(M^5,v)}$} and ${\rm d}_v$ is determined by the number of broken flow lines of index 2 on $M^5$, since we glue the instantons described by~${\rm d}_v$ to either the initial or final cylindrical end of $M^5$.
As before, these broken flow lines are in correspondence with the boundary components of the moduli space of Haydys--Witten instantons of index 2.
Since the relevant moduli space is the same, regardless of the order of~\smash{$\Phi_{(M^5,v)}$} and~${\rm d}_v$, these counts coincide.

It follows that \smash{$\Phi_{(M^5,v)}$} induces a linear map on homology,
\begin{align*}
	\bigl( \Phi_{(M^5,v)} \bigr)_\ast \colon\ HF_{\theta} \bigl(W^4\bigr) \to HF_{\tilde \theta } \bigl(\tilde{W}^4\bigr).
\end{align*}
Hence, Haydys--Witten homology is a functor from the category of five-dimensional cobordisms, equipped with a non-vanishing vector field, to the category of groups.
It is, therefore, a topological quantum field theory (TQFT) in the sense of the Atiyah--Segal axioms.
From the point of view of physics, it is the TQFT that arises by a topological twist of $5d$ $\mathcal{N}=2$ super Yang--Mills theory on $\R_s\times W^4$ coupled to $4d$ $\mathcal{N}=4$ super Yang--Mills theory at $s=\pm\infty$.

\subsection{Relation to Khovanov homology}
\label{sec:background-HWF-theory-and-Khovanov-Homology}

Haydys--Witten Floer theory was introduced by Witten to describe Khovanov homology in terms of quantum field theory.
Witten showed that there is a relation between Haydys--Witten Floer homology and Chern--Simons theory on $X^3$ -- and thus knot invariants -- if one considers four-manifolds of the form $W^4 = X^3 \times \R^+_y$ with Nahm pole boundary conditions at $y= 0$~\cite{Witten2010, Witten2011}.
Under this correspondence, a knot carries over to a magnetically charged 't Hooft operator embedded in the boundary $\del W^4 = X^3$.
As explained in Section~\ref{sec:background-Nahm-pole-boundary-condition}, this setup is geometrized by considering the blowup \smash{$\bigl[W^4 ; K\bigr]$} and imposing Nahm pole boundary conditions with knot singularities at the boundaries.

Given this, we associate to a pair \smash{$\bigl(X^3, K\bigr)$} the Haydys--Witten homology of \smash{$\bigl[X^3\times \R_y^+ ; K\bigr]$}.
The vector field $w=\del_y$ provides a non-vanishing vector field on $W^4$, so there is a one-parameter family of Haydys--Witten homologies with respect to $v=\cos\theta \del_s + \sin\theta \del_y$, $\theta \in [0,\pi]$.
The associated Morse--Smale--Witten complex $CF_{\theta}$ is spanned by solutions of the $\theta$-Kapustin--Witten equations that satisfy suitable Nahm pole boundary conditions with knot singularities.

The differential ${\rm d}_v$ counts Haydys--Witten instantons on the cylinder \smash{$M^5 = \R_s \times \bigl[W^4; K\bigr]$}.
This manifold is equivalent to the one obtained by first lifting the knot to the $\R_s$-invariant surface $\Sigma_K = \R_s \times K \times \{0\}$ inside the boundary of $\R_s \times W^4$ and blowing up afterwards, i.e.,
\begin{align*}
	M^5 = \R_s \times \bigl[W^4; K\bigr] = \bigl[ \R_s \times W^4; \Sigma_K \bigr].
\end{align*}
As always, we leave the blowup mostly implicit and simply write $M^5 = \R_s \times X^3 \times \R^+_y$ with original boundary $\del_0 M$ at $y=0$ (with $\Sigma_K$ removed) and blown up boundary $\del_K M$.

To fully determine ${\rm d}_v$, it remains to specify which kind of boundary conditions $\mathfrak{b}$ the Haydys--Witten instantons on $\bigl(M^5,v\bigr)$ shall satisfy:
\begin{itemize}\itemsep=0pt
\item At $\del_0 M^5$, the fields satisfy the $\beta$-twisted regular Nahm pole boundary condition, where the incidence angle is given by $\beta = \pi/2-\theta$.
	The boundary data $\phi_\rho$ of the five-dimensional Nahm pole boundary condition is the unique $\R_s$-invariant continuation of some fixed four-dimensional Nahm pole boundary condition at $\del_0 W^4$.

\item At $\del_{K} M^5$, the fields exhibit a knot singularity and are otherwise consistent with the surrounding Nahm pole boundary conditions.
	Since the glancing angle between $v$ and $\Sigma_K$ is $\theta$, the knot singularity is modeled on solutions of the $\theta$-TEBE.

\item At $y \to \infty$, the fields approach an $\R_y$-invariant finite energy solution of the Haydys--Witten equations.
	This corresponds to a solution of the $\beta$-Kapustin--Witten equations, where $\beta = \pi/2 - \theta$.
	
\item At any non-compact end or boundary of $X^3$, the fields approach maximally symmetric, stationary solutions of the Haydys--Witten equations that are compatible with the boundary conditions at adjacent boundaries and independent of the flow direction $\R_s$.
What exactly this means is best described on a case-by-case basis.
	
\end{itemize}
The first two items just spell out the Nahm pole boundary conditions with knot singularity, as described more thoroughly in Section~\ref{sec:background-Nahm-pole-boundary-condition}.
For $y\to\infty$, there might be non-trivial boundary instantons, classified by solutions of $\beta$-Kapustin--Witten solutions on $\R_s \times X^3$.
Note that for~${X^3 = S^3}$ or $\R^3$ there are no non-trivial Kapustin--Witten solutions with finite energy, because of the vanishing result of Corollary~\ref{cor:background-Nagy-Oliveira-Conjecture}.
Since the rest of the boundary conditions are explicitly chosen to be $\R_s$-invariant, the differential ${\rm d}_v$ preserves the instanton grading and Haydys--Witten homology is $\mathbb{Z}$-graded.

By definition, Haydys--Witten homology \smash{$HF_{\theta}\bigl(\bigl[X^3\times \R_y^+ ; K\bigr]\bigr)$} is given by solutions of the $\theta$-Kapustin--Witten equations, subject to Nahm pole boundary conditions with knot singularities at $y=0$, modulo Haydys--Witten instantons.
Witten originally described the case where $v= \del_y$, in which case $CF_{\pi/2}$ is spanned by solutions of the $\theta=\pi/2$ version of the Kapustin--Witten equations, in which case boundary instantons at $y\to \infty$ are given by Vafa--Witten solutions on $\R_s \times X^3$.
As mentioned earlier, the deformation to $\theta\neq \pi/2$ was considered by Gaiotto and Witten soon afterwards.

With this, we finally arrive at Witten's gauge theoretic approach to Khovanov homology, which is succinctly summarized by the following statement.
\begin{conjecture*}[{\cite{Witten2011}}]
Haydys--Witten homology \smash{$HF_{\theta}\bigl(\bigl[X^3\times \R_y^+ ; K\bigr]\bigr)$} is a topological invariant of the pair \smash{$\bigl(X^3,K\bigr)$}.
In particular, if $X^3 =S^3$ or $\R^3$, this invariant is $\mathbb{Z}$-graded and for $\theta=\pi/2$ it coincides with the Khovanov homology of the knot,
\begin{align*}
	HF_{\pi/2}^\bullet\bigl(\bigl[S^3\times\R_y^+;K\bigr]\bigr) = \mathrm{Kh}^\bullet(K).
\end{align*}
Moreover, any knot cobordism $\Sigma$ induces a map on Haydys--Witten homology via the five-di\-men\-sional cobordism $M^5 = \bigl[\R_s \times X^3 \times \R_y^+; \Sigma\bigr]$ that coincides with the corresponding map on Khovanov homology.
\end{conjecture*}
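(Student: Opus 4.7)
The plan is to tackle the conjecture in two stages. The first stage establishes topological invariance of $HF_{\theta\/}([X^3\times \R_y^+; K])$; the second identifies the specific group $HF_{\pi/2}^\bullet([S^3\times \R_y^+; K])$ with Khovanov homology. For the invariance part, the standard Floer-theoretic machinery requires three analytic inputs for the Haydys-Witten equations on $\R_s \times [X^3 \times \R^+_y; K]$ with the boundary conditions specified in \autoref{sec:background-HWF-theory-and-Khovanov-Homology}: (i) Fredholm and elliptic regularity theory for the linearization, which by \autoref{thm:background-elliptic-regularity} is already available as an iie operator; (ii) a compactness theorem for the moduli spaces $\mathcal{M}(x,y)$, asserting that energy-bounded sequences of solutions converge, after gauge transformations and bubbling, to broken trajectories --- this is where Taubes-type work on the Kapustin-Witten equations must be extended to the Haydys-Witten setting; and (iii) a gluing theorem that inverts the bubbling/breaking procedure. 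Given (i)-(iii), the differential satisfies $d_v^2 = 0$ and the usual continuation argument --- building chain homotopies from Haydys-Witten solutions on $(\R_s \times W^4, v_\tau)$ where $v_\tau$ is a path of non-vanishing vector fields --- shows that $HF_{\theta\/}$ is independent of the auxiliary Riemannian and geometric data, hence a topological invariant of $(X^3, K)$.

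For the comparison with Khovanov homology when $X^3 = S^3$, I would proceed by a two-step reduction. First, using \autoref{prop:background-dimensional-reduction-EBE}, impose $\R_s$-invariance on $M^5 = \R_s \times S^3 \times \R^+_y$ to reduce the $\theta = \pi/2$ Kapustin-Witten generators to solutions of the extended Bogomolny equations on $S^3 \times \R^+_y$ (equivalently, after stereographic projection, on $\R^3 \times \R^+_y$) with Nahm pole boundary conditions and a knot singularity along $K$. Second, invoke the Kobayashi-Hitchin-type correspondence for the EBE due to Gaiotto-Witten \cite{Gaiotto2012a} and He \cite{He2019c, He2020b}, which identifies solutions with certain Hecke-modifications of opers --- objects that already feature in the algebraic categorifications of the Jones polynomial. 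The aim is to produce a bijection between the generators of $CF_{\pi/2}^\bullet$ and a basis of $Kh^\bullet(K)$ (for instance via the Seidel-Smith symplectic Khovanov homology or Cautis-Kamnitzer categorification of tangles on Grassmannians), at each fixed instanton grading, and then to match the Haydys-Witten differential $d_v$ with the algebraic differential.

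The functoriality statement for knot cobordisms $\Sigma$ would follow once the objects are matched on objects: the induced maps $\Phi_{(M^5,v)}$ are already chain maps by the general cobordism construction in \autoref{sec:background-HWF-theory-general-manifolds}, and it remains to verify that on generators they implement the same elementary tangle moves (births, deaths, saddles) that define the Khovanov cobordism maps. By the 2-functorial nature of both sides, it suffices to check the movie-move relations on a small generating set of elementary cobordisms, reducing a structural statement to a finite list of local verifications in $M^5 = [\R_s \times B^3 \times \R^+_y; \Sigma_{\text{elem}}]$.

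The main obstacle, and by a wide margin, is analytic rather than structural: establishing the compactness and gluing results for the Haydys-Witten equations with twisted Nahm pole boundary conditions and knot singularities. The Haydys-Witten equations are neither gradient flow in a conventional Morse-theoretic sense nor strictly elliptic without gauge fixing, the boundary conditions are singular with logarithmic terms encoded by the indicial sets $\Delta_0, \Delta_K$, and bubbling in five dimensions is poorly understood. Beyond this, the second-stage identification of generators with a combinatorially known basis of $Kh^\bullet(K)$ presumes a Hitchin-Kobayashi correspondence for the EBE on $S^3 \times \R^+_y$ with arbitrarily many knot singularities --- currently established only in restricted settings --- and a matching of a gauge-theoretic differential with an algebraic one, which is conceptually the deepest step and almost certainly requires new ideas rather than adaptation of existing techniques.
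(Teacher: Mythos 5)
This statement is a \emph{conjecture} in the paper, not a theorem, and the paper contains no proof of it; the entire article is an expository build-up to a precise restatement of Witten's open problem. The paper itself explicitly flags what is missing (``Currently the most important missing parts are compactness and gluing results for the moduli space of Kapustin-Witten and Haydys-Witten solutions''), gives a heuristic derivation from the physics of topologically twisted SYM (Sections~\ref{sec:background-topological-twist}--\ref{sec:background-BPS-Kh-Homology}), and offers a limited sanity check via framing shifts in the closing remark. There is therefore no ``paper's own proof'' for you to match; what can be assessed is whether your research program is consistent with the framework the paper lays out, and whether the individual steps are accurate.

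Your Stage~1 (topological invariance via elliptic regularity \ref{thm:background-elliptic-regularity}, compactness, and gluing, together with continuation maps) is a faithful summary of what the paper itself says is required, and you correctly identify the compactness/gluing obstruction as the bottleneck. Your Stage~2, however, contains a concrete error. You write that imposing $\R_s$-invariance on $M^5 = \R_s\times S^3\times \R^+_y$ and invoking Proposition~\ref{prop:background-dimensional-reduction-EBE} reduces the $\theta=\pi/2$ Kapustin-Witten generators to solutions of the extended Bogomolny equations on $S^3\times\R^+_y$. This cannot be right: $\R_s$-invariance produces \emph{four}-dimensional Kapustin-Witten solutions on $W^4 = S^3\times\R^+_y$ via Proposition~\ref{prop:background-dimensional-reduction-KW}, which is precisely what the generators of $CF_\theta$ already are. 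To land on the EBE in three dimensions via Proposition~\ref{prop:background-dimensional-reduction-EBE} one needs invariance along a second $\R$-factor, and $S^3$ supplies no such global direction. A generic Kapustin-Witten generator on $S^3\times\R^+_y$ is not an EBE solution. What Gaiotto--Witten and He actually do is stretch $X^3$ adiabatically to a long neck $\Sigma\times\R$; in that limit, and only there, the KW equations approximately decouple into EBE data on $\Sigma\times\R^+_y$ plus flow along the stretched direction. That argument is an asymptotic degeneration, not an application of a dimensional-reduction proposition, and it requires its own analysis.

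Beyond that local error, the final matching of generators and differentials with a combinatorially known model of $Kh^\bullet(K)$ via Seidel--Smith or Cautis--Kamnitzer is your own proposal and does not reflect the route the physics leads to; the paper's discussion (and Witten's original proposal) points rather to the Gaiotto--Witten parallel to Khovanov's own construction via the Hermitian Yang--Mills / Kobayashi--Hitchin structure of the EBE over $\Sigma\times\R^+_y$. Your overall assessment of where the difficulty lies is sound, but as a route to proof the second stage would need to be reworked around the neck-stretching degeneration rather than a literal $\R^2$-invariant dimensional reduction.
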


\begin{Remark}
Let us mention a non-trivial sanity check, provided by Witten, regarding the effect of a change of framing.
Consider a knot cobordism $\Sigma$ with $\del\Sigma = - K_0 \sqcup K_1$.
When $\Sigma$ is a~general knot cobordism, the Nahm pole boundary condition on \smash{$M^5 = \bigl[\R_s \times S^3 \times R_y^+; \Sigma\bigr]$} can no longer be an $\R_s$-invariant continuation.
Instead we use parallel transport to extend the initial boundary data along the flow direction to \smash{$\del_0 M = \bigl(\R_s \times S^3\bigr) \setminus \Sigma$}.
This may introduce non-trivial topology to the bundle $\ad E$ at the boundary and thus a possibly non-zero boundary instanton number.
Associated to this is a change in instanton number $\Delta P \neq 0$ between the initial and final Kapustin--Witten solution.
$\Delta P$ depends on the topology of the embedded surface $\Sigma$; more precisely, it is determined by its Euler characteristic and self-intersection number.
For example, consider $\Sigma = \R_s \times K$ with a trivialization of the normal bundle that corresponds to a single self-intersection (with respect to some normalization).
Then $\Sigma$ interpolates between different framings of $K$ and the induced map on homology encodes a shift in instanton degree for the two knots.
This matches an analogous shift in grading upon changing the framing of a knot in~Khovanov homology.
We refer to~\cite[Section~5.4]{Witten2011} for a vastly more detailed explanation.
\end{Remark}

\subsection*{Acknowledgements}
I would like to thank Johannes Walcher and Fabian Hahner for discussions and comments on an early draft of this article.
I am also grateful to anonymous referees for constructive suggestions that improved the exposition.
This work is funded by the Deutsche Forschungsgemeinschaft (DFG, German Research Foundation) under Germany's Excellence Strategy EXC 2181/1~- 390900948 (the Heidelberg STRUCTURES Excellence Cluster).

%\pdfbookmark[1]{References}{ref}
%\bibliographystyle{sigma}
%\bibliography{literature}

\pdfbookmark[1]{References}{ref}
\LastPageEnding

\end{document}